\documentclass[a4paper,UKenglish,cleveref, autoref, thm-restate]{lipics-v2021}
\hideLIPIcs
\usepackage{graphicx} 
\usepackage{amsmath,amssymb}
\usepackage{amsthm}
\usepackage{thmtools} 
\usepackage{thm-restate}

\usepackage{hyperref}  
\usepackage{cite}      

\usepackage{todonotes}
\usepackage{xcolor}

\usepackage[normalem]{ulem}
\usetikzlibrary{decorations.pathreplacing}
\usetikzlibrary{decorations.pathmorphing}
\usetikzlibrary{arrows.meta}

\newcommand{\RR}{\mathbb{R}}

\newcommand{\OO}{O}

\let\oldsim\sim 
\renewcommand{\sim}{{\oldsim}}

\DeclareMathOperator{\opt}{OPT}

\DeclareMathOperator {\poly}{poly}

\newcommand{\NP}{NP}

\newcommand{\tw}{\ensuremath{\textit{w}}}
\newcommand{\dm}{\ensuremath{\Delta}}

\newcommand{\tn}{\ensuremath{\sigma}}

\title{Parameterized Algorithms for the Drone Delivery Problem}

\author{Simon Bartlmae}{University of Copenhagen, Denmark}{simon.bartlmae7@gmail.com}{https://orcid.org/0009-0009-6953-8347}{}
\author{Andreas Hene}{Institute of Computer Science, University of Bonn, Germany}{hene@uni-bonn.de}{https://orcid.org/0009-0002-0336-374X}{}
\author{Joshua Könen}{Institute of Computer Science, University of Bonn, Germany}{koenen@cs.uni-bonn.de}{https://orcid.org/0000-0003-4245-4812}{}
\author{Heiko Röglin}{Institute of Computer Science, University of Bonn, Germany}{roeglin@cs.uni-bonn.de}{https://orcid.org/0009-0006-8438-3986}{}

\authorrunning{S. Bartlmae, A. Hene, J. Könen, H. Röglin}

\ccsdesc[500]{Theory of computation~Fixed parameter tractability}
\ccsdesc[300]{Theory of computation~Graph algorithms analysis}

\pdfoutput=1
\nolinenumbers
\EventLogo{}

\relatedversion{A version of this paper appeared in the proceedings of ISAAC 2025.}

\keywords{Complexity, Delivery, FPT algorithms, Graph Theory}

\begin{document}
\maketitle

\begin{abstract}
Timely delivery and optimal routing remain fundamental challenges in the modern logistics industry. Building on prior work that considers single-package delivery across networks using multiple types of collaborative agents with restricted movement areas (e.g., drones or trucks), we examine the complexity of the problem under structural and operational constraints. Our focus is on minimizing total delivery time by coordinating agents that differ in speed and movement range across a graph.
This problem formulation aligns with the recently proposed \textit{Drone Delivery Problem with respect to delivery time} (DDT), introduced by Erlebach et al.~[ISAAC 2022]. 

We first resolve an open question posed by Erlebach et al.~[ISAAC 2022] by showing that even when the delivery network is a path graph, DDT admits no polynomial-time approximation within any polynomially encodable factor $a(n)$, unless P=NP.
Additionally, we identify the \textit{intersection graph} of the agents, where nodes represent agents and edges indicate an overlap of the movement areas of two agents, as an important structural concept. 
For path graphs, we show that DDT becomes tractable when parameterized by the treewidth $w$ of the intersection graph, and we present an exact FPT algorithm with running time $f(w)\cdot\text{poly}(n,k)$, for some computable function $f$.
For general graphs, we give an FPT algorithm with running time $f(\Delta,w)\cdot\text{poly}(n,k)$, where $\Delta$ is the maximum degree of the intersection graph.
In the special case where the intersection graph is a tree, we provide a simple polynomial-time algorithm.
\end{abstract}
\setcounter{page}{0}

\newpage
\section{Introduction}

Drone-based package delivery has the potential to transform the last-mile segment. Recent studies foretell that utilizing unmanned aerial vehicles (i.e., drones) can have a great sustainability impact while meeting increasing demands of customers \cite{GOODCHILD201858,drones9060413}. Additionally, it is observed that in the U.S.\ approximately 10--15\% of package deliveries are delayed due to increasing loads \cite{BULDEORAI2021100753}. 
According to ElSayed et al.~\cite{ELSAYED202437}, recent surveys indicate that a vast majority of people prefer a same-day delivery, which proves challenging for the logistics industry. 
Consequently, large companies focus more and more on developing suitable logistics models incorporating drones. Apart from monetary benefits, the utilization of drones can be very helpful in supplying medicine or food in areas struck by war or natural disasters \cite{bamburry2015drones}.

In this work, we study a cooperative delivery setting recently introduced by Erlebach et al.\ \cite{erlebach22} in which drones can hand over packages and work together to achieve the best possible delivery schedule. Each agent operates within a restricted movement area on the graph and is assumed to have unlimited battery capacity. In general, this model is inspired by real-world constraints such as restricted air spaces or environmental circumstances. 
Restricted movement areas also capture settings with different kinds of agents, not necessarily only drones -- different types of agents might be eligible to traverse certain parts of a graph while others might not. Package handover between drones is already implemented in industry; companies such as IBM have developed and patented parcel transfer methods since 2017 \cite{dronetransfer}. 

Our focus is on \textit{minimizing total delivery time}, a problem referred to as the \textit{Drone Delivery Problem with respect to time} (DDT) \cite{erlebach22, bartlmae25}. 
We study general graphs as well as the special case of path graphs. For path graphs, we first establish a hardness result, followed by the first fixed-parameter tractable algorithms for the path as well as the general settings. 
For some parameter~$k$, an algorithm is fixed-parameter tractable (FPT) if it runs in time~$f(k) \cdot \text{poly}(n)$, where~$f$ is a computable function; such algorithms are efficient for small parameters and useful for fine-grained complexity analysis.
We analyze the complexity of DDT using the \emph{intersection graph} of the agents, where nodes represent agents and edges indicate overlapping movement areas. The \emph{treewidth} and maximum degree $\dm$ of this graph serve as natural parameters for characterizing instance complexity -- especially since DDT is already \NP{}-hard on path graphs. These parameters reflect the extent of agent overlap and potential interference, and are likely to be small in practical settings.

\subparagraph*{Related Work.} For the most part, research regarding collaborative drone delivery focuses on two main objectives: minimizing fuel consumption and minimizing delivery time. 
In this work, we focus exclusively on the latter. 
Research on minimizing total consumption differs fundamentally from our analysis, since in our setting there are no battery constraints; therefore, the results do not translate. For work on minimizing fuel consumption, we refer to \cite{blank25, erlebach22}. Regarding delivery time, Bärtschi et al.\ \cite{bartschi2018} and Carvalho et al.\ \cite{carvalho2021fast} show that the problem of delivering a single package can be solved in polynomial time if the agents are free to move without restrictions -- even with predetermined starting positions. 
However, Carvalho et al.~\cite{carvalho2021fast} also prove that the problem becomes \NP{}-hard if there are at least two packages involved. For the case of restricted movement areas, Erlebach et al.\ \cite{erlebach22} prove that it is \NP{}-hard to approximate DDT within a factor of $O(\min\{n^{1-\varepsilon},k^{1-\varepsilon}\})$, even if agents have equal speeds and fixed starting positions.
Here, $n$ denotes the number of vertices in the graph and $k$ the number of agents.
In the problem variant in which the starting positions are not given but can be chosen by the algorithm, they show that no polynomial-time approximation algorithm with any finite approximation ratio exists, unless P=NP. 
They also show that DDT with fixed starting positions remains \NP{}-hard on path graphs if agents can have arbitrary speeds.

Bartlmae et al.\ \cite{bartlmae25} recently refined these results, showing that the problem on a path is still \NP{}-hard if only two different speeds are allowed. 
This closes the gap regarding speeds as these instances can be solved trivially for only one speed. 
Additionally, they show that for the special case of unit grid graphs and only two different speeds, the problem remains NP-hard for fixed starting positions and is hard to approximate within a factor of $O(n^{1-\varepsilon})$ for selectable starting positions, as well as agents constrained to movement areas forming rectangles. They also provide a simple $n$-approximation algorithm for this setting. Complementing these hardness results, Luo et al.\ \cite{luo25} demonstrated that on a path with fixed starting positions and exactly two speeds the problem can be solved in polynomial time; the complexity for three or more speeds, however, remains open.
\subparagraph*{Our Results.}

In Section \ref{sec:line:hard}, we provide a hardness result for DDT on path graphs with selectable starting positions, closing open questions posed by Erlebach et al.~\cite{erlebach22}, Bartlmae et al.~\cite{bartlmae25} and Luo et al.~\cite{luo25}.

\begin{restatable}{theorem}{thmlinehard}
DDT on a path with selectable starting positions is $a(n)$-APX-hard for any function $a : \mathbb{N} \rightarrow \mathbb{R}_+$ that can be computed in polynomial time. \label{thmlinehard}
\end{restatable}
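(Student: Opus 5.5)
We prove the statement by a gap-producing reduction from an NP-complete problem; our choice is 3-SAT, though a partition-type problem would also do. Given a formula $\varphi$ and the polynomial-time computable function $a$, we build in polynomial time a DDT instance on a path in which agents choose their own starting positions. The instance will satisfy two conditions. If $\varphi$ is satisfiable, the optimal delivery time is at most some value $T$. If $\varphi$ is unsatisfiable, every schedule needs time greater than $a(n)\cdot T$; the strongest form we aim for is that no feasible delivery exists at all, or equivalently that the optimum exceeds any bound. An $a(n)$-approximation would then separate the two cases, which implies P$=$NP.

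The construction relies on the source of hardness that selectable starting positions introduce on a path. An agent with a small movement interval and a very small speed can still help a great deal if it already waits at the right spot, but it is useless if it has to travel first. We therefore model each variable $x_i$ by a "selector" agent. Its movement interval covers two distant handover points $p_i^{\mathrm{T}}$ and $p_i^{\mathrm{F}}$, and its speed is extremely slow, set to something like $1/(a(n)\cdot M)$ with $M$ much larger than the path length. In any schedule of time at most $a(n)T$, the selector can effectively serve only the point at which it starts, and this encodes the truth value.

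Clauses are modelled by short segments of the path that the package must cross. Each clause segment is covered by fast "carrier" agents, and these need a slow selector to be present at a specific point in order to bridge a gap in coverage. Concretely, the package can pass clause $C_j$ quickly only if at least one of its literals' selectors waits at the corresponding point. Otherwise the only way across is a very slow agent, which costs at least $a(n)T+1$, or no agent at all. To let one selector serve several clauses, we interleave the clause gadgets along the path so that every occurrence point of $x_i$ lies inside the selector's interval. For this we either replicate selectors with consistency gadgets or order the gadgets so that the package visits each selector's region once. The speeds are written as rationals with polynomially many bits, since $a(n)$ is computable in polynomial time and so its value has a polynomial-size encoding.

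The key steps, in order, are the following. First, fix the gadget layout and the speeds. Second, prove completeness: from a satisfying assignment, place each selector at its chosen point and route the package so that it arrives within time $T$. Third, prove soundness: show that any schedule with time at most $a(n)T$ forces each selector to stay essentially at one of its two points, and that the points used form a consistent assignment satisfying every clause. We expect soundness to be the main obstacle. The issue is that on a path the package moves monotonically from source to target, so the same variable cannot be queried at scattered locations. A selector might then cheat by moving between $p_i^{\mathrm{T}}$ and $p_i^{\mathrm{F}}$ during the delivery. We would first try to rule this out by making the selector so slow relative to the length of its interval that it cannot travel between the two points within time $a(n)T$. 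If consistency across several clause occurrences turns out to break monotonicity, we would instead reduce from a problem with a linear structure, such as Partition or 3-Partition, in which the slow agents must be packed into slots.
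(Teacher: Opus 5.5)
There is a genuine gap, and it is more basic than the soundness issue you flag. On a path the package crosses every point of $[s,t]$ exactly once, and (by the exchange argument of Erlebach et al.) each agent carries it over a single contiguous stretch. With your choice of speeds, a selector cannot travel between two distant handover points within time $a(n)T$, so it can be used at \emph{one} location only. By your own specification, crossing $C_j$ quickly requires some literal's selector to be used inside $C_j$. A single selector for $x_i$ can therefore satisfy at most one clause.

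So your completeness direction already fails. If $x_i$ is the only true literal in two clauses, no schedule of time $T$ exists in your layout. The cheating you try to rule out (a selector switching between $p_i^{\mathrm{T}}$ and $p_i^{\mathrm{F}}$) is not the real problem. What is missing is a way to make several per-occurrence copies of a variable act consistently. On a path such a gadget can only be built from counting constraints of the form ``these gaps must be bridged by $r$ distinct agents from this family'', and you do not supply one.

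Also, the unsatisfiable case can never be made infeasible. With selectable positions, feasibility depends only on whether the movement intervals cover $[s,t]$, which is independent of $\varphi$. So ``no agent at all'' is not available, and the gap must be a gap in delivery time.

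Your fallback to a packing problem is the right instinct, but it is the whole proof, and the paper carries it out as follows. It reduces from \textsc{PartitionInto}-$k$, which it shows is strongly NP-hard, even for distinct numbers, via 3-\textsc{Partition} by adding $P$ to every number.
\begin{itemize}
\item Each object $p_i$ gets $k$ element agents $e_{i,j}$ of speed $v_j$. The interval of $e_{i,j}$ runs from a critical position $C_{i,j}$ in partition gadget $P_j$ to an object gadget $O_i$.
\item $C_{i,j}$ has $p_i$ short gaps. They are bridged either by $e_{i,j}$ alone or by $p_i$ distinct slow helper agents. Since $P_j$ has only $P-\frac{P}{k}$ helpers, the element agents used in $P_j$ must have total weight at least $\frac{P}{k}$.
\item $O_i$ has $k-1$ unit gaps separated by long $v^*$-stretches. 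This forces $k-1$ element agents to be used there; after a ring-exchange normalization these are $k-1$ of $e_{i,1},\dots,e_{i,k}$. This ``all copies but one'' gadget is exactly the exclusivity mechanism your SAT approach lacks.
\item A speed hierarchy $v^*>v_k>v_k'>\dots>v_1>v_1'$, with consecutive ratios above $P^3a(n)$ and $d=P^3$, makes every unintended use of an agent exceed $a(n)d$.
\item Exchange arguments (nested intervals from $p_1>\dots>p_n$, and the ring exchange) fix where each element agent sits.
\end{itemize}
Strong NP-hardness is essential, because the number of helper and partition agents grows with $P$.
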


We then present the first FPT algorithm for this setting in Section \ref{sec:line:fpt}, parameterized by the treewidth of the agents' intersection graph.

\begin{restatable}{theorem}{thmlinefpt}
DDT on a path with selectable starting positions can be solved in time $\OO(2^{\tw{}}\cdot \tw{}^2\cdot k)$, where $\tw{}$ denotes the treewidth of the intersection graph.\label{thmlinefpt}
\end{restatable}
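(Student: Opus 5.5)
The plan is to reduce the problem to choosing a \emph{set} $S$ of agents that participate, and then to optimise over $S$ by dynamic programming on a nice tree decomposition of the intersection graph, keeping one bit per bag agent. One caveat first: I have not pinned down the exact feasibility constraints of the model beyond what is stated in the introduction. The structural lemma below is therefore an assumption that I would have to prove first, and the whole plan depends on it.

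\textbf{Step 1: a structural lemma.} With selectable starting positions, each agent may initially wait wherever it first takes over the package. I would prove that some optimal schedule moves the package monotonically from the source to the target along the path. Given monotonicity, I would then show that once the set $S$ of used agents is fixed, the optimal schedule and its delivery time are determined locally. On every edge $e$ of the path, the package is carried by an agent of $S$ chosen by a rule that depends only on the agents of $S$ whose movement areas contain $e$; the natural candidate is the fastest such agent, subject to it being able to be in place in time. Hence the total delivery time decomposes as
\[
  T(S) = \sum_{e} c_e\bigl(S \cap A_e\bigr),
\]
where $A_e$ is the set of agents whose areas contain $e$. The sets $A_e$ are exactly what makes the tree decomposition useful, since all agents covering a common edge pairwise intersect. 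So $A_e$ is a clique in the intersection graph, and by the clique containment property of tree decompositions it lies entirely in some bag. I would also have to check that the handover feasibility constraints (an agent reaching its next pickup in time) involve only pairs of agents that overlap. This is where the $\tw^2$ factor should come from: for a bag and a choice of subset, the local cost and feasibility are evaluated by looking at all pairs of bag agents.

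\textbf{Step 2: dynamic programming.} I would compute a nice tree decomposition of width $\tw$ with $O(k)$ nodes, using a $2^{O(\tw)}$-approximate decomposition algorithm or an optimal one, depending on what the paper permits. To each edge $e$ of the path I would assign the topmost bag containing $A_e$. For every node $x$ and every subset $U$ of its bag, the table entry $D[x,U]$ stores the minimum, over choices of used agents in the subtree consistent with $U$, of the sum of the costs $c_e$ of all edges assigned to the subtree. The node types are handled as follows.
\begin{itemize}
  \item Introduce nodes branch on whether the new agent is used.
  \item Forget nodes minimise over the status of the forgotten agent, after first charging the edges assigned to that node. Their cost is determined by the current $U$, because $A_e$ lies in the bag.
  \item Join nodes add the two children's values for the same $U$.
\end{itemize}
Charging an edge means evaluating the carrier rule and the pairwise feasibility checks inside the bag, at cost $O(\tw^2)$. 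Since there are $2^{\tw}$ subsets per node and $O(k)$ nodes, the total running time is $O(2^{\tw}\cdot \tw^2 \cdot k)$. This count assumes that edges are grouped into $O(k)$ segments, which are maximal stretches with the same covering set; those segments are what get charged, rather than individual edges.

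\textbf{Main obstacle.} I expect Step 1 to be the hard part, in particular ruling out that the optimum needs an agent to carry the package in several separated pieces, coordinated with agents it does not intersect. Theorem~\ref{thmlinehard} shows that even feasibility is hard, so the global interaction cannot be trivial. My first attempt would be an exchange argument: take any optimal schedule, and whenever an agent carries over two disjoint segments, show that the agents in between all overlap it, so that every constraint linking the two segments is a pairwise constraint between intersecting agents. If that fails, I would enrich the per-agent state from one bit to a constant amount of information (used before or after the current segment, or not used at all). This would keep the running time single-exponential in $\tw$, but with a larger base than the $2^{\tw}$ claimed in Theorem~\ref{thmlinefpt}.
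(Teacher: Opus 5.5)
\textbf{There is a genuine gap: your Step 1 lemma is false, and the plan depends on it.} For a fixed set $S$ of used agents, the delivery time is not a sum of terms $c_e(S\cap A_e)$, and no rule that picks the carrier of $e$ from $S\cap A_e$ can be correct.

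Here is a counterexample. Take unit edges on vertices $0,\dots,10$, agent $a$ with area $[0,10]$ and speed $1$, and agent $b$ with area $[4,6]$ and speed $10$. In a second instance, add $c$ with area $[6,10]$ and speed $1$. The second instance has optimum $4+0.2+4=8.2$. Edges left of $4$ can only be carried by $a$, and every edge right of $6$ costs $1$ whoever carries it. So for your DP to be correct there, its rule must let $b$ carry $\{4,5\}$ and $\{5,6\}$ when the used agents covering them are $\{a,b\}$. These two edges have exactly the same covering agents, with the same data, in the first instance. There the same rule yields $8.2$ for $S=\{a,b\}$, yet the true optimum is $10$: after $b$ drops the package at $6$ only $a$ can continue, and $a$, which must start at $0$, cannot be at $6$ before time $6$.

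Whether $b$ may take over in the middle is decided by $c\notin A_e$. That is exactly the non-locality hidden in your ``subject to being in place in time''. Pairwise checks between intersecting agents cannot capture it, because the pair $(a,b)$ looks identical in both instances. What matters is that an agent which has handed the package on is \emph{finished}. That is a property of the order of the schedule, invisible to a used/unused bit. (Also, Theorem~\ref{thmlinehard} concerns the objective; feasibility on a path is trivial.)

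The missing idea is to make the DP state sequential; once you do, timing drops out entirely. With selectable positions you may assume that every used agent picks up and drops off the package exactly once and moves only while carrying it (Observation~\ref{obs:nowaiting}). The cost is then just $\sum_a |I_a|/v_a$ over contiguous pieces $I_a$ covering $[s,t]$, with handovers only at some $s_{a'}$ or $f_a$.

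The paper sweeps the sorted endpoints $e_1,\dots,e_{2k}$. Their covering sets $Z_{e_i}$ have size at most $\sigma=w+1$ (interval graphs being chordal) and form a path decomposition. It stores $D[e_i,S,a]$, where $S\subseteq Z_{e_i}$ is the set of bag agents already used and $a\in S$ is the current carrier, so $S\setminus\{a\}$ are the finished ones. The transitions are:
\begin{itemize}
\item At the start point of $a'$, either keep the carrier or hand over to $a'$, minimising over the previous carrier.
\item At the end point of $a'$, if $a'$ is carrying it must hand over to an unused agent of the bag.
\item Between consecutive events, the carrier pays $d_a(e_{i-1},e_i)$.
\end{itemize}
This is your fallback ternary state (unused / carrying / finished). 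Because only one agent carries at a time, it costs a factor $\sigma$ rather than a larger base: $2^{\sigma}\sigma$ states per event, $O(\sigma)$ work each, and $2k$ events, giving $O(2^{w}w^2k)$.
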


In Section~\ref{sec:graph}, we consider the more complex case of general graphs. 
We design an FPT algorithm whose running time depends on the treewidth and maximum degree of the intersection graph.

\begin{restatable}{theorem}{thmgraphfpt}
DDT with selectable starting positions can be solved in time $f(\dm{},\tw{})\cdot \poly(n,k)$ on general graphs, where $\tw{}$ denotes the treewidth and $\dm{}$ denotes the maximum degree of the intersection graph and $f$ is a function in $\tw{}$ and $\dm{}$.\label{thmgraphfpt}
\end{restatable}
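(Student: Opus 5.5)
We prove Theorem~\ref{thmgraphfpt} by dynamic programming over a tree decomposition of the intersection graph $H$, where each agent $i$ has a connected movement area $G_i\subseteq G$ and speed $v_i$. The starting positions are selectable, so no agent has to travel before picking up the package. An optimal delivery is therefore a walk of the package from $s$ to $y$, split into consecutive segments. Each segment is carried by one agent inside its area, and handovers happen at vertices in $G_i\cap G_j$ for adjacent agents $i,j$ in $H$.

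\textbf{Normalisation.} We first argue that some optimal solution uses each agent at most once, in one contiguous segment, and carries the package along a shortest path within $G_i$ between its pickup and drop-off vertices. The exchange argument goes as follows. If agent $i$ carries the package twice, at times $t_1<t_2$, we replace everything between the first pickup and the last drop-off by agent $i$ moving directly inside its connected area $G_i$. That move is never slower, because the agent was idle in between and could have waited anyway. Hence a solution is a simple path $P=(a_1,\dots,a_m)$ in $H$ together with handover vertices $h_j\in G_{a_j}\cap G_{a_{j+1}}$. Its cost is $\sum_j d_{G_{a_j}}(h_{j-1},h_j)/v_{a_j}$, with $h_0=s$ and $h_m=y$.

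\textbf{Dynamic program.} We take a nice tree decomposition of $H$ of width $\tw$; computing one approximately takes $f(\tw)\cdot\poly(k)$ time. The DP follows the standard pattern for (weighted) path problems on bounded treewidth. For a bag $B_t$, a state records the following:
\begin{enumerate}
\item For each agent in $B_t$, its role: unused, an endpoint of a partial path segment in the subtree, or internal.
\item The pairing of endpoints into partial paths, which gives $\tw^{O(\tw)}$ options.
\item For each agent $a\in B_t$ whose neighbour on $P$ is not yet fixed, the handover vertex at which it passes to or receives from that neighbour.
\end{enumerate}
Item 3 is where $\Delta$ enters. The cost contribution of $a$ depends on both of its handover vertices. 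Each of them lies in $G_a\cap G_b$ for one of at most $\Delta$ neighbours $b$, and the vertex itself can be one of $n$ choices. We therefore store the pair of neighbour identities, which lie among the $\Delta$ neighbours of $a$, giving at most $\Delta^2$ options per agent. We also store the actual handover vertex per pending edge, which gives $n^{O(\tw)}$ states.

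To get FPT rather than XP, we instead avoid guessing vertices at all. For every agent $a$ and every ordered pair of its neighbours $(b,c)$, we precompute in polynomial time the optimal transfer cost through $a$ as a function of the two handover points. Each such cost is a shortest-path computation in $G_a$. The DP then only chains these costs; it never enumerates vertex choices.

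\textbf{Main obstacle.} The hard part is exactly this decoupling. The handover vertex $h_j$ between $a_j$ and $a_{j+1}$ affects the costs of both agents, so a separable table is not immediate. My plan is to add a graph-level Dijkstra layer. We build an auxiliary graph whose nodes are pairs $(a,x)$ with $x\in G_a$. It has carrying edges of weight $d_{G_a}(x,x')/v_a$ and zero-weight handover edges $(a,x)\to(b,x)$. Under the normalisation, a shortest path in this graph corresponds to a delivery, except that it might reuse an agent. The bounded-treewidth DP over $H$, which tracks agent usage through connectivity states on each bag, is used only to forbid such reuse. Each bag state then stores, for every agent in the bag that is currently "open", which of at most $\Delta$ incident intersections it uses. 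The actual costs are evaluated by shortest-path subroutines on the union of the areas involved, which takes polynomial time. The total state count is $f(\Delta,\tw)$, and every transition costs $\poly(n,k)$, which gives the claimed running time. The step I expect to need the most care is proving that normalisation makes reusing an agent never beneficial. If that holds, plain Dijkstra on the auxiliary graph would already suffice, and the treewidth DP is only needed if reuse can in fact help.
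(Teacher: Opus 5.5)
Your DP skeleton (roles unused/endpoint/internal, and the pairing of endpoints) is the same as the paper's. The step that makes it FPT is missing, though, and the repairs you sketch do not work.

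You correctly observe that the handover vertex on an edge $\{a_j,a_{j+1}\}$ of the path enters the cost of both agents. But a table for $a$ and a neighbour pair $(b,c)$ ``as a function of the two handover points'' is just $d_a(x,y)$ for $x\in G_a\cap G_b$, $y\in G_a\cap G_c$. You cannot minimise it locally, because $x$ also enters the cost of $b$. So the DP still has to remember the handover vertices of every open agent, which under your own count is the $n^{O(w)}$ state space.

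The auxiliary graph on pairs $(a,x)$ does not help either. A walk there never charges an agent for repositioning between two of its uses, so it is a strict relaxation. For example, if only $a$ covers $s$ and $t$ and a much faster agent covers a long middle stretch of $G_a$, the relaxation is cheap, while every schedule needs $a$ itself to traverse that stretch.

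Your normalisation is correct for feasible schedules: the agent physically travels from its first pickup to its last drop-off, so it can keep the package. It says nothing about walks in the relaxation, however. Your remark that plain Dijkstra would then suffice would give a polynomial exact algorithm, contradicting Theorem~\ref{thmlinehard} unless P=NP. ``Shortest-path subroutines on the union of the areas involved'' is not a well-defined transition and has the same defect.

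The missing idea is that $\Delta$ in the theorem is the maximum degree of the intersection multigraph, which has one parallel edge per shared vertex. Hence every agent has at most $\Delta$ candidate handover vertices, not $n$.

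The paper makes this explicit with Lemma~\ref{thm:unuqueinter}. Every agent gets a disjoint copy of its area. For every vertex $u$ covered by at least two agents, an auxiliary agent $h_u$ moves between the copies of $u$ along zero-length edges. This preserves schedules and costs, raises $w$ and $\Delta$ by at most one, and makes every two agents share at most one vertex $D(a,a')$.

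After this transformation, the cost of an agent on the path is $d_a(D(\mathrm{pred},a),D(a,\mathrm{succ}))$, a function of its two path neighbours alone. The DP stores, for each degree-one bag agent, its neighbour via an already introduced edge ($\hat Z$) and a guessed neighbour outside ($\hat A$). That is at most $\Delta^2$ choices per bag agent. It charges every agent exactly once, and at join nodes subtracts the doubly counted cost of shared degree-one agents.

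Your final sentence about storing ``which of at most $\Delta$ incident intersections'' an open agent uses is the right idea, but only under this multigraph reading. You must also store both incident handovers of each open agent, not just the pending one. Then all costs are exact table lookups and no Dijkstra layer is needed.
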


For the special case where the intersection graph is a tree, we show in Section~\ref{subsec:treegraph} that DDT can be solved in polynomial time. 
This result is not implied by the previous theorems; it allows for arbitrary maximum degree in the intersection graph. 

\begin{restatable}{theorem}{thmgraphtree}
DDT on a general graph with selectable starting positions can be solved in time $O(k^2\cdot n^2)$ if the underlying graph of the intersection graph is a tree.\label{thmgraphtree}
\end{restatable}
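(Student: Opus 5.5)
The plan is a dynamic program over the intersection tree, rooted at an agent whose movement area contains the package source $s$. The first step is a structural observation about optimal schedules when the intersection graph $T$ is a tree. Fix any delivery schedule and its ordered sequence of agents $a_1,\dots,a_m$ that carry the package, where $a_i$ hands it to $a_{i+1}$ at a vertex in the intersection of their areas. Consecutive carriers are adjacent in $T$, so the sequence is a walk in $T$. If an agent appears twice in the sequence, we shortcut: that agent carries the package directly from its first pickup to its last handover point, which is feasible inside its own area. Because starting positions are selectable, the agent can be pre-positioned at the pickup point and nothing waits. This gives a schedule at least as fast in which every agent carries at most once. The walk in $T$ then visits distinct nodes, hence is the unique tree path from an agent containing $s$ to an agent containing the target $t$.

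We therefore enumerate candidate paths $P=b_1,\dots,b_m$ in $T$: every start agent $b_1$ with $s$ in its area and every end agent $b_m$ with $t$ in its area, at most $k^2$ pairs. For a fixed path, selectable starting positions mean each agent $b_i$ can wait in advance at the vertex where it receives the package. The delivery time then depends only on the handover vertices $v_i\in A_{b_i}\cap A_{b_{i+1}}$:
\[
\sum_{i=1}^{m} \frac{d_{A_{b_i}}(v_{i-1},v_i)}{\mathrm{speed}(b_i)}, \qquad v_0=s,\ v_m=t,
\]
where $d_{A}$ denotes shortest-path distance inside the subgraph induced by area $A$. Minimizing this is a shortest path in a layered graph. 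Layer $i$ consists of the vertices of $A_{b_i}\cap A_{b_{i+1}}$, and an arc from $u$ in layer $i-1$ to $v$ in layer $i$ has weight $d_{A_{b_i}}(u,v)/\mathrm{speed}(b_i)$.

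For the claimed $O(k^2 n^2)$ bound we do not run a fresh layered computation per path. Instead we run a single Dijkstra on an auxiliary graph whose states are pairs (agent, vertex in its area), giving $O(kn)$ states. The transitions are:
\begin{itemize}
\item moving along an edge inside the area, with weight equal to edge length divided by the agent's speed;
\item a zero-cost handover from $(a,v)$ to $(b,v)$ whenever $v\in A_a\cap A_b$.
\end{itemize}
Any path in this graph from a state $(a,s)$ to a state $(b,t)$ corresponds to a feasible schedule, because each agent can be placed at its first used vertex. Conversely, an optimal schedule maps to such a path. The auxiliary graph has $O(kn)$ states and at most $O(k^2 n^2)$ arcs, so the running time is $O(k^2 n^2)$.

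The main obstacle is the correctness direction: showing that a state path which reuses an agent can still be realized in time without waiting conflicts. This is exactly where the tree structure is used. After the shortcutting argument above, each agent carries in one contiguous interval, so each agent's sole task is to be at its pickup vertex beforehand. Selectable starts make that free. I would prove the shortcutting lemma carefully, and check that the shortcut does not increase cost, using that $d_A$ is a metric within the agent's own area.
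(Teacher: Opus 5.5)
\textbf{There is a genuine gap in the step that gives the running time.} Your shortcutting lemma and your per-path layered shortest path are fine. The problem is the single Dijkstra over states $(a,v)$, which you rely on for the $O(k^2 n^2)$ bound. That graph is only a relaxation. A state path may leave an agent and later return to it, and it never charges that agent for moving from its first drop-off to its second pickup.

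Your claim that every state path is a feasible schedule ``because each agent can be placed at its first used vertex'' is therefore false for such paths. The shortcutting lemma cannot repair this, because it only applies to schedules that are already feasible. Applied to a state path, it would replace the intermediate agents' legs by the reused agent's own travel, which may be far slower.

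Such paths exist whenever two agents share at least two vertices. The theorem allows this, since only the underlying simple graph of the intersection graph is assumed to be a tree. Here is a concrete counterexample:
\begin{itemize}
\item Let $G$ be the path $s=u_0,u_1,u_2,u_3=t$ with unit edge lengths.
\item Let $a$ have speed $1$ on all of $G$, and let $b$ have speed $100$ on $\{u_1,u_2\}$.
\item The intersection graph is a double edge, which is a tree once parallel edges are removed.
\item The optimum is $3$: $a$ must carry the package across $\{u_0,u_1\}$ and later across $\{u_2,u_3\}$, so it must traverse $\{u_1,u_2\}$ in between.
\item Your auxiliary graph contains the simple path $(a,u_0)\to(a,u_1)\to(b,u_1)\to(b,u_2)\to(a,u_2)\to(a,u_3)$, of cost $1+0+\tfrac{1}{100}+0+1=2.01$.
\end{itemize}
Note also that the algorithm itself never looks at the tree hypothesis. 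If it were correct on all instances, it would solve DDT-SP, which is NP-hard even on paths, in polynomial time. With a \emph{simple} tree intersection graph it would in fact be correct. There, returning to $a$ from $b$'s side of the tree must happen at their unique common vertex, which repeats a state. The theorem does not assume this, however.

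\textbf{How to repair it.} Stay with your first route. It never reuses an agent and already handles several candidate handover vertices per layer. Add the observation the paper uses to shrink the enumeration: no vertex lies in three movement areas, since those three agents would form a triangle in the underlying graph of the intersection graph. Hence:
\begin{itemize}
\item $s$ and $t$ each lie in at most two areas, so there are at most four (start, end) pairs rather than $k^2$.
\item Each pair determines a unique tree path of at most $k$ agents.
\item The layered shortest path for one pair costs $O(k^2 n^2)$ with an array-based Dijkstra, or $O(k n^2)$ layer by layer, which gives the claimed bound overall.
\end{itemize}
With $k^2$ pairs, as in your enumeration, the layered approach would only give $O(k^3 n^2)$ or worse. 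That is presumably why you switched to the single Dijkstra.
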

\section{Preliminaries}\label{sectionprelim}
In this section, we formally define the Drone Delivery Problem with respect to time (DDT).
\subparagraph*{Setting.} We define an instance of DDT as a tuple $I=(G,(s,t),A)$, where $G = (V, E, \ell)$ with $n=\vert V\vert$, is an undirected graph with edge lengths $\ell: E \rightarrow \mathbb{R}_{\geq 0}$. 
The package starting position, as well as its destination, are given by $(s,t)$, where $s\in V$ and $t\in V$. The set $A$ represents all agents, with $k=\vert A \vert$. There are two settings regarding agents which are typically considered for DDT:
\begin{enumerate}
    \item DDT with \textbf{selectable positions} (\textit{DDT-SP}): Each agent $a\in A$ is defined as a tuple $a=(v_a,G_a)$, where $v_a$ is the agent's speed and $G_a=(V_a,E_a)$ is a connected subgraph of $G$ representing the agent's movement area. The initial starting position of $a$ can be chosen freely once before delivery is initiated.
    \item DDT with \textbf{fixed positions} (\textit{DDT-FP}): Each agent $a\in A$ is a tuple $a=(v_a,G_a,p_a)$, 
    with $v_a$ and $G_a$ defined as above, and the additional parameter $p_a\in V_a$ specifies the fixed initial starting position of the agent.
\end{enumerate}
An agent $a$ traverses an edge $\{u,v\}\in E_a$ in time $\frac{\ell(\{u,v\})}{v_a}$, regardless of whether it is carrying the package or not.
Whenever two agents meet at the same vertex, they can hand over the package instantaneously.
We define the travel time of agent $a$ between two vertices $u,v$ as the length of the shortest path in $G_a$ divided by $v_a$ and denote it by $d_a(u,v)$. If either $u$ or $v$ is not contained in $V_a$, we define $d_a(u,v)=\infty$. 
For any pair of vertices and any possible agent the values can be precomputed in polynomial time.

We specify a feasible solution to be a schedule $S$ of consecutive trips by agents delivering the package from $s$ to $t$. More precisely we want the schedule to consist of two components: A complete edge-by-edge trip of every agent and a complete edge-by-edge trip for the package specifying the current package carrier at any point in time.
For the first part we want for every agent $a$ a set of sorted triples $\mathcal{T}_a$. Every triple $(u,v,\tau)$ indicates that agent $a$ moves over edge $\{u,v\}$ starting at time $\tau$. It has to hold that $\{u,v\}\in E_a$ and that for two consecutive triples $(u,v,\tau)$ and $(u',v',\tau')$ that $u'=v$ and $\tau + \frac{\ell(\{u,v\})}{v_a}\leq \tau'$. Moreover, for DDT-SP we need to provide the starting position $p_a$ for every agent $a\in A$ and for DDT-FP we need to have for the first triple that $u=p_a$. 
The trip of the package is defined as a sorted set of tuples $\mathcal{T}$ of the form $(u,v,a,\tau)$, indicating that agent $a$ carries the package over edge $\{u,v\}$ starting at time $\tau$. A schedule is feasible only if, for every tuple $(u, v, a, \tau) \in \mathcal{T}$, it holds that $(u, v, \tau) \in \mathcal{T}_a$. Again for two consecutive tuples $(u,v,a,\tau)$ and $(u',v',a',\tau')$ we also require $u'=v$ and $\tau + \frac{\ell(\{u,v\})}{v_a}\leq \tau'$. Additionally, the first tuple must satisfy $u = s$, and the last tuple must satisfy $v = t$.
The total delivery time is defined as $c(S):=\tau+\frac{\ell(\{u,v\})}{v_a}$, assuming $(u,v,a,\tau)$ is the final tuple in the schedule.

An instance of DDT-SP is illustrated in Figure \ref{fig:example:g}. Throughout this paper we will focus exclusively on DDT-SP.

\subparagraph*{Useful Properties.}

There are several useful properties regarding DDT. Erlebach et al.~\cite{erlebach22} demonstrated that there exists an optimal schedule in which every involved agent picks up and drops off the package exactly once. The key insight is that instead of using an agent multiple times we let that agent carry the package until its final drop-off without increasing the delivery time.

Another important observation relates to trips made by agents:

\begin{observation}
\label{obs:nowaiting}
In DDT-SP there exists an optimal schedule such that all involved agents move if and only if they carry the package.
\end{observation}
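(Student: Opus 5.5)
Starting from an arbitrary optimal schedule $S$, I will transform it into one with the stated property without increasing the delivery time. By the property of Erlebach et al.~\cite{erlebach22} recalled above, we may assume that every involved agent $a_1,\dots,a_m$ (in order of use) picks up the package exactly once and drops it off exactly once. Agent $a_i$ therefore carries the package from a pickup vertex $x_i$ to a dropoff vertex $y_i$, where $x_1=s$, $y_m=t$ and $y_i=x_{i+1}$. Let $\tau_i$ be the time at which $a_i$ picks up the package and $\tau_i'$ the time at which it drops it off.

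The construction is as follows. Because starting positions are selectable, I will place each involved agent $a_i$ at $x_i$ initially, that is, set $p_{a_i}:=x_i$. Agents not involved stay at arbitrary vertices of their movement areas and never move. For agent $a_i$ I keep exactly the carrying portion of its original trip, the walk from $x_i$ to $y_i$ within $G_{a_i}$, and remove all other movement. This portion is a walk in $E_{a_i}$ starting at $x_i$, so it is a valid trip from the new start.

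Next I re-time the schedule greedily. Agent $a_1$ starts at time $0$ and travels along its carrying walk without waiting. Each $a_{i+1}$, already sitting at $x_{i+1}=y_i$, takes over the package on its arrival and immediately continues. I will then show by induction that the new pickup time $\sigma_i$ and dropoff time $\sigma_i'$ satisfy $\sigma_i\le\tau_i$ and $\sigma_i'\le\tau_i'$. The reason is that the new schedule removes waiting time, while each agent's traversal time along its carrying walk is the same in both schedules. In particular the final delivery time does not increase, so the new schedule is still optimal, and by construction agents move exactly while carrying.

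Two points need care. The package tuples must satisfy the consecutive-time condition, which holds by construction. Also, removing intra-carry waiting requires keeping the same edge sequence, which I do. The main (mild) obstacle is the first point: arguing that we may start with an optimal schedule in which each agent carries the package in a single contiguous interval. This is exactly the cited property of Erlebach et al.\ \cite{erlebach22}, so I will invoke it rather than reprove it. The only remaining point is that waiting at a vertex while holding the package can be dropped. This is valid because the timing constraints are only lower bounds of the form $\tau+\ell/v_a\le\tau'$.
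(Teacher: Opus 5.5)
Your proposal is correct and takes the same route as the paper. The paper combines the single pickup/single drop-off property of Erlebach et al.\ with placing each involved agent at the first vertex of its carrying trip; your explicit re-timing induction ($\sigma_i \le \tau_i$, $\sigma_i' \le \tau_i'$) simply spells out what the paper calls ``immediate''.
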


Placing agents initially at the first vertex of their respective trip immediately leads to this observation. Unless stated otherwise, this implicit initial positioning is assumed.

\noindent
\subparagraph*{Intersection Graph.}

\noindent
A useful structural concept is the \textit{intersection graph} of the agents. We define the intersection (multi)graph $G_I$ of a DDT instance $(G, (s, t), A)$ as follows:

\begin{itemize}
\item The vertex set $V(G_I)$ is the set of agents $A$.
\item For any two agents $a, a' \in A$ with corresponding vertex sets $V_a$ and $V_{a'}$ in $G$, we add $\lvert V_a \cap V_{a'} \rvert$ parallel edges between $a$ and $a'$ in $G_I$ -- one for each vertex they share in $G$.

Formally, the edge set $E(G_I)$ is defined as the multiset
\[
E(G_I) = \biguplus_{\substack{a,a' \in A\\ a \neq a'}} \big\{\, \underbrace{\{a, a'\}, \dots, \{a, a'\}}_{\lvert V_a \cap V_{a'} \rvert\text{ times}}\, \big\}.
\]
\end{itemize}
Throughout this paper, we treat $E$ as a multiset and omit explicit edge indices where they are not relevant. 
Intuitively, a feasible schedule corresponds to a path in the intersection graph that starts with an agent intersecting $s$ and ends with an agent intersecting $t$. 
If we assume each agent is used at most once, this path will be simple. Figure \ref{fig:example:intersect} illustrates the intersection graph for a given instance.
We denote by $\Delta(G_I)$ the maximum degree of the intersection graph $G_I$. When it is clear from the context, we will write $\Delta$ instead of $\Delta(G_I)$.

\begin{figure}[htb]
  \centering
  \begin{subfigure}[t]{0.5\textwidth}
    \centering
    \resizebox{0.8\textwidth}{!}{
    \begin{tikzpicture}[scale=1]
  \coordinate (A) at (0,0);
  \coordinate (B) at (1,0);
  \coordinate (C) at (2,0);
  \coordinate (D) at (2,0.8);
  \coordinate (E) at (1,0.8);
  \coordinate (F) at (0,0.8);

  \foreach \v in {A,B,C,D,E,F}
    \fill (\v) circle (1pt);

  \draw (A) -- (B) -- (C) -- (D) -- (E) -- (F) -- (A);
  \draw (B) -- (D);
  \draw (B) -- (E);

\filldraw[blue!80!white, line width=0.5pt, rounded corners=3pt, fill opacity=0.1](-0.1, -0.1) -- (1.1, -0.1) -- (1.1, 0.9) -- (-0.1, 0.9) -- cycle;
\filldraw[red!80!black, line width=0.5pt, rounded corners=2pt, fill opacity=0.1](-0.07, 0.73) -- (1.07, 0.73) -- (1.07, 0.87) -- (-0.07, 0.87) -- cycle;

\node[scale=0.4] at (0.5, 0.08) {1};

\node[scale=0.4] at (0.5, 0.65) {10};
\node[scale=0.4] at (1.5, 0.72) {1};
\node[scale=0.4] at (1.55, 0.12) {2};
\node[scale=0.4] at (0.099, 0.4) {10};
\node[scale=0.4] at (2.05, 0.4) {3};
\node[scale=0.4] at (1.44, 0.46) {2};
\node[scale=0.4] at (0.5, 0.08) {1};

\node[scale=0.39] at (0.93, 0.4) {$8$};

\node[scale=0.39] at (-0.12, -0.12) {$s$};
\node[scale=0.39] at (1, -0.2) {$u_1$};
\node[scale=0.39] at (2.06, -0.17) {$u_2$};
\node[scale=0.39] at (2.1, 0.97) {$u_3$};
\node[scale=0.39] at (1, 0.99) {$u_4$};
\node[scale=0.39] at (-0.12, 0.95) {$t$};


\filldraw[green!50!black, line width=0.5pt, rounded corners=3.5pt, fill opacity=0.1](0.87, -0.13) -- (1.13, -0.13) -- (2.13, 0.67) -- (2.13, .93) -- (0.87, .93) -- cycle;

\filldraw[violet, line width=0.5pt, rounded corners=2pt, fill opacity=0.1](0.92, -0.08) -- (2.12, -0.08) -- (2.12, 0.88) -- (1.92, .88) -- (1.9, 0.05) -- (0.9, 0.05) -- cycle;



\end{tikzpicture}}
    \caption{}
    \label{fig:example:g}
  \end{subfigure}\hfill
  \begin{subfigure}[t]{0.5\textwidth}
    \centering
    \resizebox{0.5\textwidth}{!}{\begin{tikzpicture}[scale=1]
  \coordinate (A) at (0.3,0.5);
  \coordinate (B) at (1,0);
  \coordinate (C) at (1,1);
  \coordinate (D) at (1.7,0.5);

\fill[red] (A) circle (1pt);
\fill[blue] (A) circle (1pt);
\fill[green] (A) circle (1pt);
\fill[violet] (A) circle (1pt);

  \draw (C) -- (A);
  \draw (B) -- (D);
  \draw[] (A) to[out=-35.26+10,in=180-35.26-10] (B);
  \draw[] (C) to[out=-35.26+10,in=180-35.26-10] (D);
  \draw[] (A) to[out=-35.26-10,in=180-35.26+10] (B);
  \draw[] (C) to[out=-35.26-10,in=180-35.26+10] (D);
  

    \draw[] (C) to[out=260,in=100] (B);
  \draw[] (C) to[out=280,in=80] (B);
  

\fill[red!80!black] (A) circle (2pt);
\fill[blue] (B) circle (2pt);
\fill[green!50!black] (C) circle (2pt);
\fill[violet] (D) circle (2pt);

\end{tikzpicture}}
    \caption{}
    \label{fig:example:intersect}
  \end{subfigure}
  \caption{Fig.~\ref{fig:example:g} shows a DDT instance with four agents, each restricted to a colored movement area (bounding boxes). The red agent has speed $10$, the blue and green agents have speed $1$, and the purple agent has speed $2$. In the optimal schedule, the blue agent starts at $s$ and delivers the package to $u_1$ in time $1$. The green agent then transports it via $u_3$ to $u_4$ in time $2 + 1 = 3$. Finally, the red agent delivers it to the target $t$ in time $1$, for a total delivery time of $5$.
Fig.~\ref{fig:example:intersect} shows the corresponding intersection graph.}
  \label{fig:example}
\end{figure}
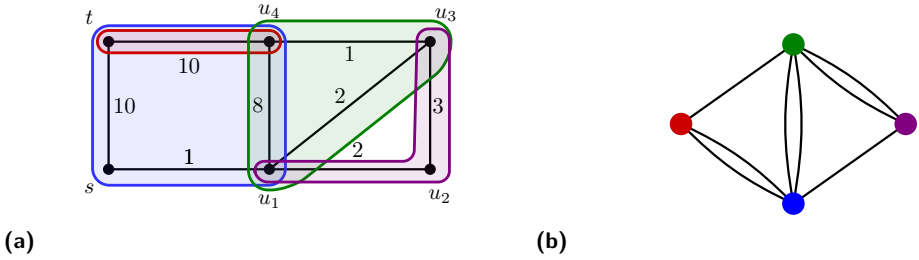

\noindent
For any vertex $u$, we define $B_u$ as the set of agents that cover $u$, i.e., $B_u := \{a \in A \mid u \in V_a\}$.
\subparagraph*{Tree Decomposition.}

For computing the optimal traversal of the agents, we will use the concept of a \textit{tree decomposition}.
A tree decomposition is used as a measure of how similar a given graph is to a tree and was first introduced by Robertson and Seymour~\cite{RS86}.
Computing the optimal treewidth is \NP{}-hard~\cite{ACP87}, but finding the optimal tree decomposition is fixed-parameter tractable if parameterized by treewidth~\cite{B93}.

\begin{definition}[Tree Decomposition]
A tree decomposition of a graph $G=(V,E)$ is a pair $\mathcal{T}=(\mathbb{T},\{X_t\}_{t\in V(\mathbb{T})})$ where $\mathbb{T}$ is a tree with root $r\in V(\mathbb{T})$ and every node $t\in V(\mathbb{T})$ is associated with a set of the vertices $X_t\subseteq V$, called bag.
Additionally, the following three conditions must be fulfilled:
\begin{itemize}
\item $\bigcup_{t\in V(\mathbb{T})}X_t = V(G)$, i.e., every vertex $v\in V(G)$ appears in at least one bag,
\item for every edge $\{u,v\}\in E(G)$ there must exist a bag $t\in V(\mathbb{T})$ such that the vertices $u$ and $v$ are contained in this bag,
\item the set of nodes $T_u = \{t\in V(\mathbb{T}) \mid u \in X_t\}$ forms a connected subtree of $\mathbb{T}$ for every choice $u\in V(G)$.
\end{itemize}
\end{definition}
The \emph{width} of a tree decomposition is the size of its largest bag minus one. The \emph{treewidth} of a graph $G$ is the minimum width over all possible tree decompositions of $G$, denoted by $\tw$. 
To simplify the algorithm design, we assume the decomposition has a specific structure known as a \emph{nice tree decomposition}: a binary tree with root $r$ and leaves $l$ such that $X_r = \emptyset$ and $X_l = \emptyset$ for all $l \in V(\mathbb{T})$, containing only three types of nodes:

\begin{itemize}
\item \textbf{Introduce node}: For two nodes $t_\text{parent},t_\text{child} \in V(\mathbb{T})$ where $t_\text{parent}$ has exactly one child $t_\text{child}$, $t_\text{parent}$ is an \textit{introduce node} iff $X_{t_\text{parent}} = X_{t_\text{child}}\cup \{v\}$ for some $v\in V(G)$.
\item \textbf{Forget node}: For two nodes $t_\text{parent},t_\text{child} \in V(\mathbb{T})$ where $t_\text{parent}$ has exactly one child $t_\text{child}$, $t_\text{parent}$ is a \textit{forget node} iff $X_{t_\text{parent}} \cup \{v\} = X_{t_\text{child}}$ for some $v\in V(G)$.
\item \textbf{Join node}: For three nodes $t_\text{parent},t_\text{child}^1, t_\text{child}^2 \in V(\mathbb{T})$ where $t_\text{parent}$ has exactly two children $t_\text{child}^1$ and $t_\text{child}^2$, $t_\text{parent}$ is a \textit{join node} iff $X_{t_\text{parent}} = X_{t_\text{child}^1} = X_{t_\text{child}^2}$.
\end{itemize}
As in~\cite{CF15} we additionally assume that the edges are also introduced like the vertices in their respective type of node:
\begin{itemize}
    \item \textbf{Introduce edge node}: For every edge $e=\{u,v\}\in E$ there is exactly one introduce edge node $t\in V(\mathbb{T})$ with $u,v\in X_t=X_{t'}$ for its only child bag $t'$ and $X_t = X_{t'}$.
\end{itemize}
We assume that for every edge $\{u,v\}$ the corresponding introduce edge node appears directly before either $u$ or $v$ is forgotten in an ancestor node.

Using all node types, we define $V_t$ and $E_t$ as the set of vertices and edges that are either contained or introduced in bag $t$ or were introduced in some child of $t$.
Additionally, we define $G_t=(V_t,E_t)$ as the subgraph revealed up to node $t$.
As a consequence for the construction of the introduce edge nodes, for every join node $t\in V(\mathbb{T})$ there are no edges with both endpoints in $X_t$, i.e. $E_t\cap \{\{u,v\}\mid u,v\in X_t\}=\emptyset$.

Based on a tree decomposition of width $\tw{}$ for some graph $G$, we can always compute a nice tree decomposition (with or without introduce edge nodes) of width \tw{} with $\OO(|V(G)|\cdot \tw{})$ nodes in polynomial time~\cite{CF15}.
\section{Drone Delivery on Path Graphs}\label{sec:line}

We consider DDT-SP instances where the underlying graph is a path: a sequence of vertices connected in a line, with degree two at each internal vertex and degree one at the two endpoints. For an agent~$a$, we denote its movement area by $[s_a, f_a]$, where $s_a$ and $f_a$ are the leftmost and rightmost vertices that $a$ can access. This setting can also be interpreted as agents moving along a one-dimensional line, with each movement area corresponding to an interval on that line, as illustrated in Figure~\ref{fig:path}.  
This allows us to interpret DDT-SP on a path as an instance of the \emph{subinterval covering} problem introduced by Luo et al.~\cite{luo25}, where the objective is to select, for each agent's interval, a subinterval such that the union covers $[s, t]$, while minimizing the total cost defined by the length of each selected subinterval divided by the agent’s speed.

\begin{figure}[h]
    \centering
    \begin{tikzpicture}[scale=0.6]
\definecolor{dgreen}{RGB}{13, 138, 8}
[0.0, 1.8, 3.6, 5.4, 7.2, 9.0, 10.8, 12.6, 14.4]
\foreach \x in {0.0, 3.6, 5.4, 7.2, 9.0, 10.8, 14.4, 16.2} {
    \draw[gray!50, dashed] (\x, -0.4) -- (\x, 2.1);
}


\draw[black, opacity=0.5, line width=2.7pt]   (0.06, 0.0) -- (5.340000000000001, 0.0);
\draw[black, opacity=0.5, line width=0.06*1cm/1pt]   (0.03, 0.15) -- (0.03, -0.15);
\draw[black, opacity=0.5, line width=0.06*1cm/1pt]   (5.37, 0.15) -- (5.37, -0.15);
\node[] at (-0.5, 0.0) {\color{black}$a_1$};
\draw[black, opacity=0.5, line width=2.7pt]   (0.06, 0.4) -- (16.14, 0.4);
\draw[black, opacity=0.5, line width=0.06*1cm/1pt]   (0.03, 0.55) -- (0.03, 0.25);
\draw[black, opacity=0.5, line width=0.06*1cm/1pt]   (16.169999999999998, 0.55) -- (16.169999999999998, 0.25);
\node[] at (-0.5, 0.4) {\color{black}$a_2$};
\draw[black, opacity=0.5, line width=2.7pt]   (3.66, 0.8) -- (8.94, 0.8);
\draw[black, opacity=0.5, line width=0.06*1cm/1pt]   (3.63, 0.9500000000000001) -- (3.63, 0.65);
\draw[black, opacity=0.5, line width=0.06*1cm/1pt]   (8.97, 0.9500000000000001) -- (8.97, 0.65);
\node[] at (3.1, 0.8) {\color{black}$a_3$};
\draw[black, opacity=0.5, line width=2.7pt]   (7.26, 1.2000000000000002) -- (16.14, 1.2000000000000002);
\draw[black, opacity=0.5, line width=0.06*1cm/1pt]   (7.23, 1.35) -- (7.23, 1.0500000000000003);
\draw[black, opacity=0.5, line width=0.06*1cm/1pt]   (16.169999999999998, 1.35) -- (16.169999999999998, 1.0500000000000003);
\node[] at (6.7, 1.2000000000000002) {\color{black}$a_4$};
\draw[black, opacity=0.5, line width=2.7pt]   (10.860000000000001, 1.6) -- (14.34, 1.6);
\draw[black, opacity=0.5, line width=0.06*1cm/1pt]   (10.83, 1.75) -- (10.83, 1.4500000000000002);
\draw[black, opacity=0.5, line width=0.06*1cm/1pt]   (14.370000000000001, 1.75) -- (14.370000000000001, 1.4500000000000002);
\node[] at (10.3, 1.6) {\color{black}$a_5$};
\draw (0,-1) -- (16.2,-1.3);
\node[circle, draw, fill=white, inner sep=2pt, font=\small] at (0.0, -1.3) {\shortstack{$a_1$ \\ $a_2$}};
\node[circle, draw, fill=white, inner sep=1pt, font=\small] at (3.6, -1.3) {\shortstack{$a_1$ \\ $a_2$ \\ $a_3$}};
\node[circle, draw, fill=white, inner sep=2pt, font=\small] at (5.4, -1.3) {\shortstack{$a_2$ \\ $a_3$}};
\node[circle, draw, fill=white, inner sep=1pt, font=\small] at (7.2, -1.3) {\shortstack{$a_2$ \\ $a_3$ \\ $a_4$}};
\node[circle, draw, fill=white, inner sep=2pt, font=\small] at (9.0, -1.3) {\shortstack{$a_2$ \\ $a_4$}};
\node[circle, draw, fill=white, inner sep=1pt, font=\small] at (10.8, -1.3) {\shortstack{$a_2$ \\ $a_4$ \\ $a_5$}};
\node[circle, draw, fill=white, inner sep=2pt, font=\small] at (14.4, -1.3) {\shortstack{$a_2$ \\ $a_4$}};
\node[circle, draw, fill=white, inner sep=2pt, font=\small] at (16.2, -1.3) {\shortstack{\textcolor{white}{$a$}}};
\draw  ;
\end{tikzpicture}
    \caption{A DDT-SP instance on a line and a corresponding tree decomposition of its intersection graph.
    At every start or end position we either add or remove the corresponding agent.}
    \label{fig:path}
\end{figure}
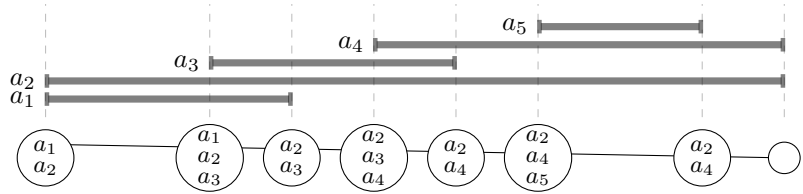

For the setting with fixed initial positions, NP-hardness has been proven~\cite{bartlmae25, erlebach22}, although strong NP-hardness remains unresolved. 
In contrast, for the setting with selectable positions -- which is the primary focus of this paper -- no hardness results were previously known. 
Before presenting parametrized algorithms for DDT on path graphs, we first establish hardness results for this setting, which motivates the relevance and necessity of parameterized approaches.

\subsection{Inapproximability}\label{sec:line:hard}
Instead of merely proving NP-hardness, we establish a much stronger result: no polynomial-time algorithm with any polynomially encodable approximation ratio exists, unless P=NP.
\thmlinehard*
In the following, we first provide a high-level overview before presenting the proof in full technical detail.
To prove the inapproximability, we reduce from the \textsc{PartitionInto}-$k$ problem.
\begin{definition}[\textnormal{\textsc{PartitionInto}-$k$}]
Given a set of natural numbers $p_1, \dots, p_n$ and an integer $k$, the task is to decide whether the index set $\{1, \dots, n\}$ can be partitioned into $k$ disjoint subsets $I_1 \dot\cup \dots \dot\cup I_k$ such that $\sum_{i \in I_1} p_i = \dots = \sum_{i \in I_k} p_i$.
\end{definition}
Note that in this subsection $k$ refers to the number of subsets (and not the agent set). The NP-hardness of \textsc{PartitionInto}-$k$ follows immediately from the hardness of \textsc{Partition} with $k=2$; however for the proof of Theorem~\ref{thmlinehard} we require the stronger result that \textsc{PartitionInto}-$k$ is also NP-hard in the strong sense, which we will prove first. It is particularly important that $k$ is part of the input; for a fixed $k$, the problem admits an FPTAS, as shown by Sahni \cite{sahni76}. 
\begin{restatable}{theorem}{propkpart}
    \textnormal{\textsc{PartitionInto}}-$k$ is NP-hard in the strong sense (even with distinct integers).
\end{restatable}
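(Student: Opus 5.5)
We reduce from \textsc{3-Partition}, which is strongly NP-hard and remains strongly NP-hard when all $3m$ input integers are pairwise distinct (Hulett, Will and Woeginger). An instance consists of integers $q_1,\dots,q_{3m}$ with $\sum_i q_i = mB$ and $B/4<q_i<B/2$. The question is whether the $q_i$ can be split into $m$ triples, each summing to $B$. The bounds $B/4<q_i<B/2$ force every part of sum exactly $B$ to contain exactly three elements.

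The reduction sets $k:=m$ and $p_i:=q_i$. If the \textsc{3-Partition} instance is a yes-instance, its $m$ triples are a partition into $k$ subsets of equal sum. Conversely, take a partition $I_1\,\dot\cup\dots\dot\cup\, I_m$ with equal sums. Each sum must equal $mB/m=B$, and by the size bounds each $I_j$ has exactly three elements, so the partition is a valid \textsc{3-Partition} solution. The reduction is the identity on the numbers, so it is computable in polynomial time. Every number of the new instance is bounded by the largest number of the old one, which keeps the problem within the strong sense: a pseudo-polynomial algorithm for \textsc{PartitionInto}-$k$ would yield one for \textsc{3-Partition}. Distinctness carries over directly.

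The only step needing care is the distinct-integers claim. If we do not want to cite the distinct-integer version of \textsc{3-Partition}, we can enforce distinctness ourselves. Replace $q_i$ by $q_i' := M q_i + i$ with $M$ polynomially large, say $M > 3m\cdot 3m$, and set the target to $MB+\text{(index sum)}$. But the index sums of the triples then differ, and handling this needs extra padding such as $q_i' = M(q_i) + x_i$ with carefully designed offsets, which is messy. So we will rely on the known strongly NP-hard distinct variant, which keeps the proof short. This citation is the main point where care is needed; the rest of the argument is routine.
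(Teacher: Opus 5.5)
Your proof is correct. It follows the paper's outline: reduce from the distinct-integer variant of $3$-\textsc{Partition} (Hulett et al.) with $k:=m$. The difference is in how you force every equal-sum part to be a triple.

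\emph{Your route.} You use the size bounds $B/4<q_i<B/2$ of the standard Garey--Johnson formulation. With them, any part of sum $B$ has exactly three elements, so the reduction is literally the identity map.

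\emph{The paper's route.} The paper states $3$-\textsc{Partition} without these bounds, so it shifts every number by $P=\sum_i p_i$. Each part must then sum to $3P+P/m$. A part with $j\ge 1$ elements has sum in $(jP,(j+1)P]$, and since $0<P/m<P$ this forces $j=3$. The shift preserves distinctness and keeps all numbers polynomially bounded.

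\emph{What each buys.} Your version is simpler, but it is only as strong as the cited result. You need the distinct-integer hardness to be stated for the \emph{bounded} formulation. That is the usual formulation, so this is a matter of citing the right statement rather than a gap, but you should confirm it rather than assume it. The paper's shift makes the cardinality argument self-contained. It works unchanged even if the cited variant carries no size bounds, which makes it the more robust choice.

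Your last paragraph on enforcing distinctness by hand can be dropped. Both proofs take distinctness directly from the citation.
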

\begin{proof}
     For the reduction, we use the problem $3$-\textsc{Partition}, whose NP-hardness was first shown by Garey and Johnson~\cite{garey75}. Hullet et al.~\cite{hullet08} also showed that $3$-\textsc{Partition} remains strongly NP-hard even when all numbers are distinct.

In the $3$-\textsc{Partition} problem, the task is to decide, given a set of integers $p_1, \dots, p_{3m}$, whether it can be partitioned into $m$ triples, each summing to the same value.

Let $I = \{p_1, \dots, p_{3m}\}$ be an instance of $3$-\textsc{Partition}, and let $P = \sum_{i=1}^{3m} p_i$. We define $p_i' := p_i + P$ for all $i \in \{1, \dots, 3m\}$. Let $I'$ be the corresponding \textsc{PartitionInto}-$k$ instance consisting of the elements $p_1', \dots, p_{3m}'$ and set $k := m$.

We now show that $I$ is a “yes”-instance of $3$-\textsc{Partition} if and only if $I'$ is a “yes”-instance of \textsc{PartitionInto}-$k$.

For the forward direction, consider a valid partition $I_1, \dots, I_m$ of the $3$-\textsc{Partition} instance $I$ such that $\sum_{i \in I_1} p_i = \dots = \sum_{i \in I_m} p_i$. Since each set $I_j$ contains exactly 3 elements, we have $\sum_{i \in I_j} p_i' = \sum_{i \in I_j} p_i + 3P$. Hence, $\sum_{i \in I_1} p_i' = \dots = \sum_{i \in I_m} p_i'$, and thus $I_1, \dots, I_m$ is a valid partition for $I'$.
    
For the reverse direction, consider a valid partition $I_1, \dots, I_m$ of the \textsc{PartitionInto}-$k$ instance $I'$, such that $\sum_{i \in I_1} p_i' = \dots = \sum_{i \in I_m} p_i' = 3P + \frac{P}{m}$. Since $0 < \frac{P}{m} < P$, each set $I_j$ must contain exactly 3 elements. Using the identity $\sum_{i \in I_j} p_i' = \sum_{i \in I_j} p_i + 3P$, it follows that $\sum_{i \in I_1} p_i = \dots = \sum_{i \in I_m} p_i$, and thus $I_1, \dots, I_m$ is a valid partition for $I$.
\end{proof}

To show that no approximation algorithm $\mathcal{A}$ with a polynomially encodable approximation ratio $a(n)$ can exist (unless P=NP), we demonstrate that such an algorithm could be used to solve \textsc{PartitionInto}-$k$ in polynomial time. We do so by constructing a DDT instance $I'$ from a given \textsc{PartitionInto}-$k$ instance $I$ such that:

$I$ is a “yes”-instance of \textsc{PartitionInto}-$k$ if and only if $c_{\mathcal{A}}(I') \leq d \cdot a(n)$, where $c_{\mathcal{A}}(I')$ is the delivery time of the schedule produced by $\mathcal{A}$ for the instance $I'$ and the value $d$ is determined by the construction.

Let $I = (p_1, \dots, p_n)$ be an arbitrary \textsc{PartitionInto}-$k$ instance with distinct integers. We assume that the numbers are sorted in decreasing order, i.e., $p_1 > \dots > p_n$. We denote the total sum of all elements by $P$.

For each object $p_i$ we create $k$ drones – one for each partition set. 
We refer to these drones as \emph{element agents} and denote them by $e_{i,j}$, where $e_{i,j}$ is associated with object $p_i$ and partition set $j \in \{1, \dots, k\}$. The speed of a drone $e_{i,j}$ is denoted by $v_j$, meaning that all drones associated with the same partition set (i.e., with the same index $j$) share the same speed. 
We place several gadgets along the line and aim to ensure that each element agent can only be placed and transport the package at specific designated positions and not at arbitrary locations along the path. 
To achieve this, we employ two key mechanisms: first, we restrict the movement areas of the agents such that they are clearly unable to assist outside their interval. Second, we choose the ratios between the speeds $v_1, \dots, v_k$ to be sufficiently large, allowing us to define long intervals that certain agents cannot traverse in their entirety without exceeding the time bound $d \cdot a(n)$.
 
For each of the $k$ partition sets, we create a \emph{partition gadget}, that enforces that the sum of the weights $p_i$ associated with the element agents $e_{i,j}$ placed at that gadget is at least $\frac{P}{k}$. Placing an element agent $e_{i,j}$ at a partition gadget can thus be interpreted as assigning $p_i$ to the corresponding partition set.

Additionally, for each object $p_i$ we construct a gadget $O_i$ – called an \emph{object gadget} – that ensures each object is assigned to at most one partition set. This is done by requiring that exactly $k-1$ of the agents $e_{i,1}, \dots, e_{i,k}$ must be present at the object gadget, resulting in at most one of them being placed at a partition gadget. As a result, we can ensure that each partition gadget ends up with a total weight of exactly $\frac{P}{k}$, if such a partition exists. If not, the construction forces a very large delivery time, causing the bound $d \cdot a(n)$ to be exceeded.

An example layout of the gadgets, ordered as $P_1, \dots, P_k, O_n, \dots, O_1$, is shown in Figure~\ref{fig:line:sp:gen}. Consecutive gadgets are placed sufficiently far apart so that only specially designated fast transition agents can traverse the gaps, ensuring that each element agent can contribute to at most one gadget.  
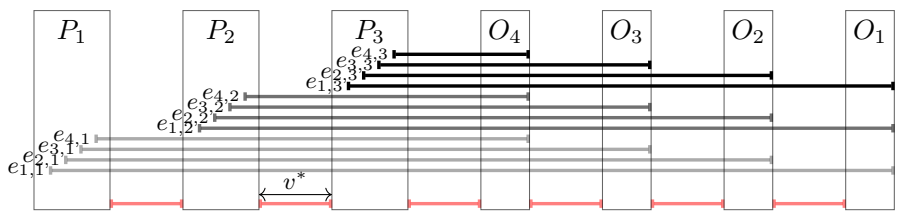
\begin{figure}[h]
        \centering
        \begin{tikzpicture}[scale=0.8]
\definecolor{dgreen}{RGB}{13, 138, 8}
0.25 14.15 0.55 black
\draw[black, opacity=0.33, line width=0.045*1cm/1pt]   (0.295, 0.55) -- (14.105, 0.55);
\draw[black, opacity=0.33, line width=0.045*1cm/1pt]   (0.2725, 0.6200000000000001) -- (0.2725, 0.48000000000000004);
\draw[black, opacity=0.33, line width=0.045*1cm/1pt]   (14.1275, 0.6200000000000001) -- (14.1275, 0.48000000000000004);
\node at (-0.12, 0.55)  {\small $e_{1, 1}$};
0.5 12.15 0.7250000000000001 black
\draw[black, opacity=0.33, line width=0.045*1cm/1pt]   (0.545, 0.7250000000000001) -- (12.105, 0.7250000000000001);
\draw[black, opacity=0.33, line width=0.045*1cm/1pt]   (0.5225, 0.7950000000000002) -- (0.5225, 0.655);
\draw[black, opacity=0.33, line width=0.045*1cm/1pt]   (12.1275, 0.7950000000000002) -- (12.1275, 0.655);
\node at (0.13, 0.7250000000000001)  {\small $e_{2, 1}$};
0.75 10.15 0.9000000000000001 black
\draw[black, opacity=0.33, line width=0.045*1cm/1pt]   (0.795, 0.9000000000000001) -- (10.105, 0.9000000000000001);
\draw[black, opacity=0.33, line width=0.045*1cm/1pt]   (0.7725, 0.9700000000000002) -- (0.7725, 0.8300000000000001);
\draw[black, opacity=0.33, line width=0.045*1cm/1pt]   (10.1275, 0.9700000000000002) -- (10.1275, 0.8300000000000001);
\node at (0.38, 0.9000000000000001)  {\small $e_{3, 1}$};
1.0 8.15 1.0750000000000002 black
\draw[black, opacity=0.33, line width=0.045*1cm/1pt]   (1.045, 1.0750000000000002) -- (8.105, 1.0750000000000002);
\draw[black, opacity=0.33, line width=0.045*1cm/1pt]   (1.0225, 1.1450000000000002) -- (1.0225, 1.0050000000000001);
\draw[black, opacity=0.33, line width=0.045*1cm/1pt]   (8.1275, 1.1450000000000002) -- (8.1275, 1.0050000000000001);
\node at (0.63, 1.0750000000000002)  {\small $e_{4, 1}$};
2.7 14.15 1.2500000000000002 black
\draw[black, opacity=0.56, line width=0.045*1cm/1pt]   (2.745, 1.2500000000000002) -- (14.105, 1.2500000000000002);
\draw[black, opacity=0.56, line width=0.045*1cm/1pt]   (2.7225, 1.3200000000000003) -- (2.7225, 1.1800000000000002);
\draw[black, opacity=0.56, line width=0.045*1cm/1pt]   (14.1275, 1.3200000000000003) -- (14.1275, 1.1800000000000002);
\node at (2.33, 1.2500000000000002)  {\small $e_{1, 2}$};
2.95 12.15 1.4250000000000003 black
\draw[black, opacity=0.56, line width=0.045*1cm/1pt]   (2.995, 1.4250000000000003) -- (12.105, 1.4250000000000003);
\draw[black, opacity=0.56, line width=0.045*1cm/1pt]   (2.9725, 1.4950000000000003) -- (2.9725, 1.3550000000000002);
\draw[black, opacity=0.56, line width=0.045*1cm/1pt]   (12.1275, 1.4950000000000003) -- (12.1275, 1.3550000000000002);
\node at (2.58, 1.4250000000000003)  {\small $e_{2, 2}$};
3.2 10.15 1.6000000000000003 black
\draw[black, opacity=0.56, line width=0.045*1cm/1pt]   (3.245, 1.6000000000000003) -- (10.105, 1.6000000000000003);
\draw[black, opacity=0.56, line width=0.045*1cm/1pt]   (3.2225, 1.6700000000000004) -- (3.2225, 1.5300000000000002);
\draw[black, opacity=0.56, line width=0.045*1cm/1pt]   (10.1275, 1.6700000000000004) -- (10.1275, 1.5300000000000002);
\node at (2.83, 1.6000000000000003)  {\small $e_{3, 2}$};
3.45 8.15 1.7750000000000004 black
\draw[black, opacity=0.56, line width=0.045*1cm/1pt]   (3.495, 1.7750000000000004) -- (8.105, 1.7750000000000004);
\draw[black, opacity=0.56, line width=0.045*1cm/1pt]   (3.4725, 1.8450000000000004) -- (3.4725, 1.7050000000000003);
\draw[black, opacity=0.56, line width=0.045*1cm/1pt]   (8.1275, 1.8450000000000004) -- (8.1275, 1.7050000000000003);
\node at (3.08, 1.7750000000000004)  {\small $e_{4, 2}$};
5.15 14.15 1.9500000000000004 black
\draw[black, opacity=1, line width=0.045*1cm/1pt]   (5.195, 1.9500000000000004) -- (14.105, 1.9500000000000004);
\draw[black, opacity=1, line width=0.045*1cm/1pt]   (5.1725, 2.0200000000000005) -- (5.1725, 1.8800000000000003);
\draw[black, opacity=1, line width=0.045*1cm/1pt]   (14.1275, 2.0200000000000005) -- (14.1275, 1.8800000000000003);
\node at (4.78, 1.9500000000000004)  {\small $e_{1, 3}$};
5.4 12.15 2.1250000000000004 black
\draw[black, opacity=1, line width=0.045*1cm/1pt]   (5.445, 2.1250000000000004) -- (12.105, 2.1250000000000004);
\draw[black, opacity=1, line width=0.045*1cm/1pt]   (5.4225, 2.1950000000000003) -- (5.4225, 2.0550000000000006);
\draw[black, opacity=1, line width=0.045*1cm/1pt]   (12.1275, 2.1950000000000003) -- (12.1275, 2.0550000000000006);
\node at (5.03, 2.1250000000000004)  {\small $e_{2, 3}$};
5.65 10.15 2.3000000000000003 black
\draw[black, opacity=1, line width=0.045*1cm/1pt]   (5.695, 2.3000000000000003) -- (10.105, 2.3000000000000003);
\draw[black, opacity=1, line width=0.045*1cm/1pt]   (5.6725, 2.37) -- (5.6725, 2.2300000000000004);
\draw[black, opacity=1, line width=0.045*1cm/1pt]   (10.1275, 2.37) -- (10.1275, 2.2300000000000004);
\node at (5.28, 2.3000000000000003)  {\small $e_{3, 3}$};
5.9 8.15 2.475 black
\draw[black, opacity=1, line width=0.045*1cm/1pt]   (5.945, 2.475) -- (8.105, 2.475);
\draw[black, opacity=1, line width=0.045*1cm/1pt]   (5.9225, 2.545) -- (5.9225, 2.4050000000000002);
\draw[black, opacity=1, line width=0.045*1cm/1pt]   (8.1275, 2.545) -- (8.1275, 2.4050000000000002);
\node at (5.53, 2.475)  {\small $e_{4, 3}$};
1.25 2.45 0 red
\draw[red, opacity=0.5, line width=0.045*1cm/1pt]   (1.295, 0) -- (2.4050000000000002, 0);
\draw[red, opacity=0.5, line width=0.045*1cm/1pt]   (1.2725, 0.07) -- (1.2725, -0.07);
\draw[red, opacity=0.5, line width=0.045*1cm/1pt]   (2.4275, 0.07) -- (2.4275, -0.07);
\node at (0.88, 0)  {\small $ $};
3.7 4.9 0 red
\draw[red, opacity=0.5, line width=0.045*1cm/1pt]   (3.745, 0) -- (4.855, 0);
\draw[red, opacity=0.5, line width=0.045*1cm/1pt]   (3.7225, 0.07) -- (3.7225, -0.07);
\draw[red, opacity=0.5, line width=0.045*1cm/1pt]   (4.8775, 0.07) -- (4.8775, -0.07);
\node at (3.33, 0)  {\small $ $};
6.15 7.3500000000000005 0 red
\draw[red, opacity=0.5, line width=0.045*1cm/1pt]   (6.195, 0) -- (7.305000000000001, 0);
\draw[red, opacity=0.5, line width=0.045*1cm/1pt]   (6.1725, 0.07) -- (6.1725, -0.07);
\draw[red, opacity=0.5, line width=0.045*1cm/1pt]   (7.327500000000001, 0.07) -- (7.327500000000001, -0.07);
\node at (5.78, 0)  {\small $ $};
8.15 9.35 0 red
\draw[red, opacity=0.5, line width=0.045*1cm/1pt]   (8.195, 0) -- (9.305, 0);
\draw[red, opacity=0.5, line width=0.045*1cm/1pt]   (8.172500000000001, 0.07) -- (8.172500000000001, -0.07);
\draw[red, opacity=0.5, line width=0.045*1cm/1pt]   (9.327499999999999, 0.07) -- (9.327499999999999, -0.07);
\node at (7.78, 0)  {\small $ $};
10.15 11.35 0 red
\draw[red, opacity=0.5, line width=0.045*1cm/1pt]   (10.195, 0) -- (11.305, 0);
\draw[red, opacity=0.5, line width=0.045*1cm/1pt]   (10.172500000000001, 0.07) -- (10.172500000000001, -0.07);
\draw[red, opacity=0.5, line width=0.045*1cm/1pt]   (11.327499999999999, 0.07) -- (11.327499999999999, -0.07);
\node at (9.780000000000001, 0)  {\small $ $};
12.15 13.35 0 red
\draw[red, opacity=0.5, line width=0.045*1cm/1pt]   (12.195, 0) -- (13.305, 0);
\draw[red, opacity=0.5, line width=0.045*1cm/1pt]   (12.172500000000001, 0.07) -- (12.172500000000001, -0.07);
\draw[red, opacity=0.5, line width=0.045*1cm/1pt]   (13.327499999999999, 0.07) -- (13.327499999999999, -0.07);
\node at (11.780000000000001, 0)  {\small $ $};
\draw[<->] (3.7, 0.15) -- (4.9, 0.15);
\node[] at (4.300000000000001, 0.38) {\small $v^*$};
\draw[opacity=0.6] (0, -0.1) rectangle (1.25, 3.2);
\node at (0.625, 2.85)  {\large $P_1$};
\draw[opacity=0.6] (2.45, -0.1) rectangle (3.7, 3.2);
\node at (3.075, 2.85)  {\large $P_2$};
\draw[opacity=0.6] (4.9, -0.1) rectangle (6.15, 3.2);
\node at (5.525, 2.85)  {\large $P_3$};
\draw[opacity=0.6] (7.3500000000000005, -0.1) rectangle (8.15, 3.2);
\node at (7.75, 2.85)  {\large $O_4$};
\draw[opacity=0.6] (9.35, -0.1) rectangle (10.15, 3.2);
\node at (9.75, 2.85)  {\large $O_3$};
\draw[opacity=0.6] (11.35, -0.1) rectangle (12.15, 3.2);
\node at (11.75, 2.85)  {\large $O_2$};
\draw[opacity=0.6] (13.35, -0.1) rectangle (14.15, 3.2);
\node at (13.75, 2.85)  {\large $O_1$};
\end{tikzpicture}
        \caption{Overview of all gadgets in an example with four objects and three partition sets. The element agents are shown in gray, with darker shades indicating higher speeds. Transition agents (marked in red) with speed $v^* \gg v_1, \dots, v_k$ are placed between gadgets; if these are bypassed and an element agent is used at multiple gadgets, the time bound $d \cdot a(n)$ is exceeded. }
        \label{fig:line:sp:gen}
    \end{figure}

We now define the partition gadgets and object gadgets, whose arrangement -- in the order $P_1, \dots, P_k, O_n, \dots, O_1$ -- is illustrated in \cref{fig:line:sp:gen}. Since each element agent is intended to contribute to exactly one gadget, a consecutive gadgets are placed at very large distance of $v^*$ (where $v^* > v_k \cdot d\cdot a(n) $). For each such interval, we place a dedicated agent with speed $v^*$, referred to as a \emph{G-transition agent}. This setup prevents element agents from contributing to multiple gadgets, as doing so would require them to traverse a distance of $v^*$, which would take at least $\frac{v^*}{v_k} > d \cdot a(n)$ time. We refer to such a violation as a \emph{G-error}.

\subparagraph*{Partition gadgets} We now describe a partition gadget $P_j$ for $j \in \{1, \dots, k\}$:
The core idea behind a partition gadget is as follows: for each drone $e_{i,j}$, there is a special construction within the gadget $P_j$ called the \emph{critical position}, denoted $C_{i,j}$. If a drone $e_{i,j}$ is not present at its corresponding critical position, it must instead be covered by other drones, referred to as \emph{helper agents}. These helper agents have a strictly lower speed $v_j' < v_j$, which allows us to design the construction such that if $e_{i,j}$ is not present at its critical position, exactly $p_i$ helper agents are required to cover $C_{i, j}$.

Each partition gadget $P_j$ contains exactly $P - \frac{P}{k}$ helper agents (all with speed $v_j' < v_j$), who can move freely within the boundaries of $P_j$. As a result, the total sum of the $p_i$ values associated with the element agents ($e_{i,j}$) located at their critical positions ($C_{i,j}$) must be at least $\frac{P}{k}$; otherwise, there are not enough helper agents available to cover the remaining critical positions.

In other words, covering the critical position $C_{i,j}$ with the element agent $e_{i,j}$ can be seen as saving $p_i$ helper agents. Since each partition gadget provides only $P - \frac{P}{k}$ helper agents, at least $\frac{P}{k}$ many must be saved by having element agents present at their respective critical positions. This implies that the sum of $p_i$-values of the element agents $e_{i, j}$ located at $P_j$ must be at least $\frac{P}{k}$. We define the speed hierarchy as $v^* > v_k > v_k' > \dots > v_1 > v_1'$. 

\subparagraph*{Critical positions}
A critical position $C_{i,j}$ consists of $p_i - 1$ \emph{partition agents} with speed $v^* > v_j$, placed with gaps between them that must be bridged. There are two options: either $e_{i,j}$ bridges all gaps alone, covering the entire critical position, or $p_i$ helper agents do so in combination with the partition agents, with each helper agent covering exactly one gap. These gaps are spaced at distance $v_j$, so that a helper agent with speed $v_j'$ would require time $\frac{v_j}{v_j'} > a(n) \cdot d$ to move between gaps, meaning it can only cover one. A helper agent traversing more than one gap would result in a \emph{C-error}. A depiction of $C_{i,j}$ is provided in \cref{fig:line:sp:crit}.

\begin{figure}[h]
    \centering
    \begin{tikzpicture}[scale=1.0]
\definecolor{dgreen}{RGB}{13, 138, 8}
\fill[blue, opacity=0.5] (6.0, 1.65) circle (1.5pt);
\fill[blue, opacity=0.5] (6.0, 1.8499999999999999) circle (1.5pt);
\fill[blue, opacity=0.5] (6.0, 1.45) circle (1.5pt);
\draw[black, line width=0.02*1cm/1pt]   (0.22, -0.3) -- (0.22, 0);
\node[] at (0.4, -0.5) {\color{black}$c_{i, j, 1}$};
\draw[black, line width=0.02*1cm/1pt]   (1.02, -0.3) -- (1.02, 0);
\node[] at (1.3, -0.5) {\color{black}$c_{i, j, 1}'$};
\draw[black, line width=0.02*1cm/1pt]   (2.88, -0.3) -- (2.88, 0);
\node[] at (2.8, -0.5) {\color{black}$c_{i, j, 2}$};
\draw[black, line width=0.02*1cm/1pt]   (3.72, -0.3) -- (3.72, 0);
\node[] at (3.8000000000000003, -0.5) {\color{black}$c_{i, j, 2}'$};
\draw[black, line width=0.02*1cm/1pt]   (5.58, -0.3) -- (5.58, 0);
\node[] at (5.5, -0.5) {\color{black}$c_{i, j, 3}$};
\draw[black, line width=0.02*1cm/1pt]   (6.419999999999999, -0.3) -- (6.419999999999999, 0);
\node[] at (6.499999999999999, -0.5) {\color{black}$c_{i, j, 3}'$};
\draw[black, line width=0.02*1cm/1pt]   (8.28, -0.3) -- (8.28, 0);
\node[] at (8.2, -0.5) {\color{black}$c_{i, j, 4}$};
\draw[black, line width=0.02*1cm/1pt]   (9.12, -0.3) -- (9.12, 0);
\node[] at (9.2, -0.5) {\color{black}$c_{i, j, 4}'$};
\draw[black, line width=0.02*1cm/1pt]   (10.98, -0.3) -- (10.98, 0);
\node[] at (10.7, -0.5) {\color{black}$c_{i, j, 5}$};
\draw[black, line width=0.02*1cm/1pt]   (11.780000000000001, -0.3) -- (11.780000000000001, 0);
\node[] at (11.600000000000001, -0.5) {\color{black}$c_{i, j, 5}'$};
\draw[black, opacity=0.5, line width=2.7pt]   (0.25, 3.0) -- (12.0, 3.0);
\draw[black, opacity=0.5, line width=0.05*1cm/1pt]   (0.225, 3.1) -- (0.225, 2.9);
\fill[black, opacity=0.5] (12.15, 3.0) circle (1.5pt);
\fill[black, opacity=0.5] (12.3, 3.0) circle (1.5pt);
\fill[black, opacity=0.5] (12.45, 3.0) circle (1.5pt);
\draw[blue, opacity=0.5, line width=2.7pt]   (0.0, 2.5) -- (12.0, 2.5);
\fill[blue, opacity=0.5] (-0.15, 2.5) circle (1.5pt);
\fill[blue, opacity=0.5] (-0.3, 2.5) circle (1.5pt);
\fill[blue, opacity=0.5] (-0.44999999999999996, 2.5) circle (1.5pt);
\fill[blue, opacity=0.5] (12.15, 2.5) circle (1.5pt);
\fill[blue, opacity=0.5] (12.3, 2.5) circle (1.5pt);
\fill[blue, opacity=0.5] (12.45, 2.5) circle (1.5pt);
\draw[blue, opacity=0.5, line width=2.7pt]   (0.0, 2.2) -- (12.0, 2.2);
\fill[blue, opacity=0.5] (-0.15, 2.2) circle (1.5pt);
\fill[blue, opacity=0.5] (-0.3, 2.2) circle (1.5pt);
\fill[blue, opacity=0.5] (-0.44999999999999996, 2.2) circle (1.5pt);
\fill[blue, opacity=0.5] (12.15, 2.2) circle (1.5pt);
\fill[blue, opacity=0.5] (12.3, 2.2) circle (1.5pt);
\fill[blue, opacity=0.5] (12.45, 2.2) circle (1.5pt);
\draw[blue, opacity=0.5, line width=2.7pt]   (0.0, 1.1) -- (12.0, 1.1);
\fill[blue, opacity=0.5] (-0.15, 1.1) circle (1.5pt);
\fill[blue, opacity=0.5] (-0.3, 1.1) circle (1.5pt);
\fill[blue, opacity=0.5] (-0.44999999999999996, 1.1) circle (1.5pt);
\fill[blue, opacity=0.5] (12.15, 1.1) circle (1.5pt);
\fill[blue, opacity=0.5] (12.3, 1.1) circle (1.5pt);
\fill[blue, opacity=0.5] (12.45, 1.1) circle (1.5pt);
\draw[blue, opacity=0.5, line width=2.7pt]   (0.0, 0.8) -- (12.0, 0.8);
\fill[blue, opacity=0.5] (-0.15, 0.8) circle (1.5pt);
\fill[blue, opacity=0.5] (-0.3, 0.8) circle (1.5pt);
\fill[blue, opacity=0.5] (-0.44999999999999996, 0.8) circle (1.5pt);
\fill[blue, opacity=0.5] (12.15, 0.8) circle (1.5pt);
\fill[blue, opacity=0.5] (12.3, 0.8) circle (1.5pt);
\fill[blue, opacity=0.5] (12.45, 0.8) circle (1.5pt);
\draw[red, opacity=0.5, line width=2.7pt]   (1.05, 0) -- (2.85, 0);
\draw[red, opacity=0.5, line width=0.05*1cm/1pt]   (1.025, 0.1) -- (1.025, -0.1);
\draw[red, opacity=0.5, line width=0.05*1cm/1pt]   (2.875, 0.1) -- (2.875, -0.1);
\draw[red, opacity=0.5, line width=2.7pt]   (3.75, 0) -- (5.55, 0);
\draw[red, opacity=0.5, line width=0.05*1cm/1pt]   (3.725, 0.1) -- (3.725, -0.1);
\draw[red, opacity=0.5, line width=0.05*1cm/1pt]   (5.574999999999999, 0.1) -- (5.574999999999999, -0.1);
\draw[red, opacity=0.5, line width=2.7pt]   (6.449999999999999, 0) -- (8.249999999999998, 0);
\draw[red, opacity=0.5, line width=0.05*1cm/1pt]   (6.425, 0.1) -- (6.425, -0.1);
\draw[red, opacity=0.5, line width=0.05*1cm/1pt]   (8.274999999999999, 0.1) -- (8.274999999999999, -0.1);
\draw[red, opacity=0.5, line width=2.7pt]   (9.15, 0) -- (10.95, 0);
\draw[red, opacity=0.5, line width=0.05*1cm/1pt]   (9.125, 0.1) -- (9.125, -0.1);
\draw[red, opacity=0.5, line width=0.05*1cm/1pt]   (10.975, 0.1) -- (10.975, -0.1);
\draw[black, opacity=0.6, line width=0.03*1cm/1pt]   (0.2, -1) -- (0.2, 3.9);
\draw[black, opacity=0.6, line width=0.03*1cm/1pt]   (0.2, -1) -- (11.8, -1);
\draw[black, opacity=0.6, line width=0.03*1cm/1pt]   (0.2, 3.9) -- (11.8, 3.9);
\draw[black, opacity=0.6, line width=0.03*1cm/1pt]   (11.8, -1) -- (11.8, 3.9);
\node[] at (-0.39999999999999997, 3) {$e_{i, j}$};
\node[] at (6.0, 3.4) {\Large $C_{i, j}$};
\draw[<->] (3.7, 0.15) -- (5.6, 0.15) node[midway, above] {$v_j$};
\draw[<->] (5.6, 0.15) -- (6.3999999999999995, 0.15) node[midway, above] {$v_j'$};
\end{tikzpicture}
    \caption{Illustration of the critical position for an element agent $e_{i,j}$, shown here for $p_i = 5$. Helper agents are marked in blue, partition agents in red, and $e_{i,j}$ in black. In the construction, there are $p_i = 5$ gaps between the partition agents. Either $e_{i,j}$ alone transports the package completely from $c_{i,j,1}$ to $c_{i,j,5}'$, or five helper agents are present -- each positioned at a gap -- who assist the partition agents with transport. If neither $e_{i,j}$ nor at least five helper agents are present, a helper agent must move between gaps, which we refer to as a C-error.}
    \label{fig:line:sp:crit}
\end{figure}
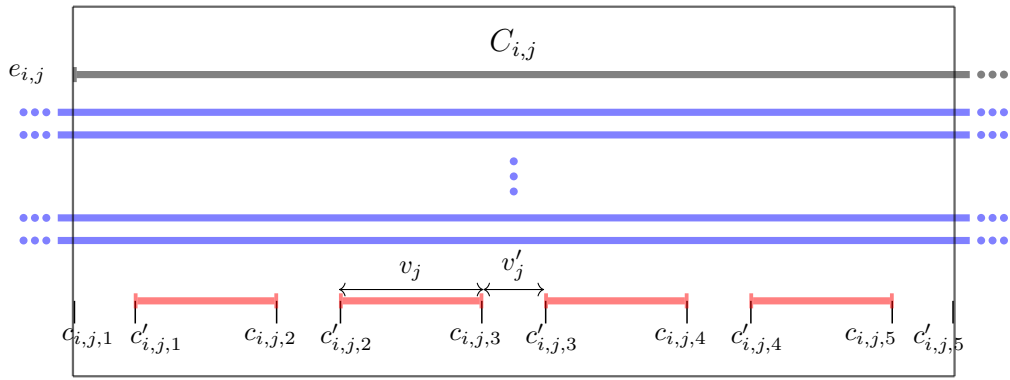
A partition gadget $P_j$ consists solely of the critical positions $C_{1,j}, \dots, C_{n,j}$, which are spaced apart by a large distance of $v^*$. To bridge these gaps, we create one agent with speed $v^*$ between each pair -- referred to as \emph{P-transition agents}. This ensures that element agents (with speed $v_j$) cannot be used at multiple critical positions, since doing so would require a delivery time of $\frac{v^*}{v_j} > d \cdot a(n)$, which we refer to as a \emph{P-error}. We now make use of the pyramidal structure: since the objects satisfy $p_1 > p_2 > \dots > p_n$, we may assume that a drone $e_{i,j}$ can only be located at its own critical position.

The critical position of a drone $e_{i',j}$ with $i' < i$ lies outside the movement interval of $e_{i,j}$. If $e_{i,j}$ is located at the critical position of some $e_{i'',j}$ with $i'' > i$, we can apply an exchange argument: assume that $e_{i,j}$ is the element agent with the largest index $i$ that is placed at a critical position $C_{i'',j}$ for some $i'' > i$. Since $C_{i,j}$ must then be covered by helper agents, we can instead place $e_{i,j}$ at $C_{i,j}$ and use the helper agents to cover $C_{i'',j}$.
As a result, both critical positions are still covered, and additional helper agents are freed, since $p_i > p_{i''}$. Moreover, the critical positions $C_{i', j'}$ with $j' < j$ lie outside the movement area of $e_{i,j}$, so $e_{i,j}$ cannot be present there either. Additionally, we choose the speeds such that, due to $v_k > v_k' > \dots > v_1 > v_1'$, an element agent $e_{i,j}$ cannot assist at a critical position $C_{i', j''}$ with $j'' > j$, since it would be too slow: $e_{i,j}$ has speed $v_j$, and to bridge a gap in $C_{i', j''}$ of length $v_{j''}'$, it would require time $\frac{v_{j''}'}{v_j} > d \cdot a(n)$. We refer to this as an \emph{E-error}.
For example, in \cref{fig:line:sp:gen}, an E-error would occur if the drone $e_{2,2}$ were used at gadget $P_3$, since $e_{2,2}$ is too slow. Moreover, it cannot reach $P_1$, so it can only be used at $P_2$. The structure of the partition gadget is illustrated in \cref{fig:line:sp:part}.

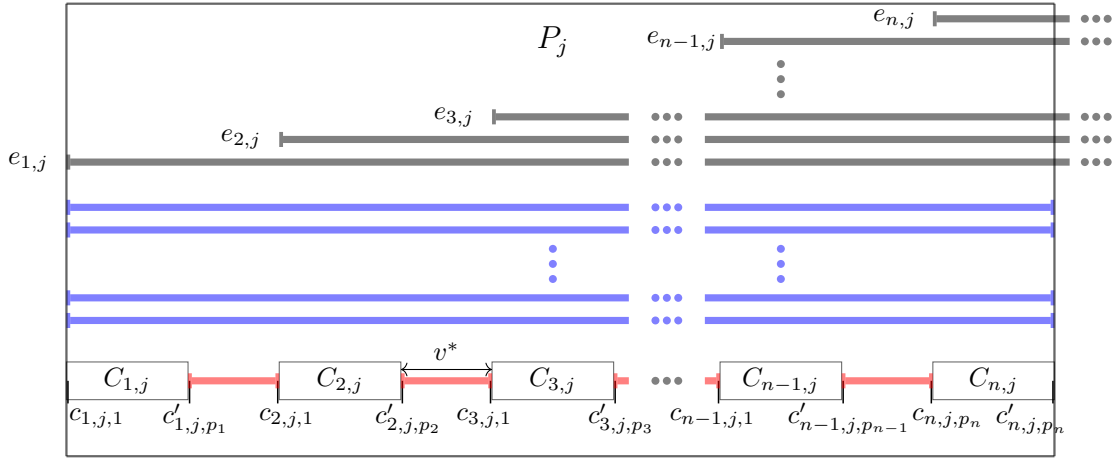
\begin{figure}[h]
    \centering
    \begin{tikzpicture}[scale=1.0]
\definecolor{dgreen}{RGB}{13, 138, 8}
\draw[opacity=0.6] (0.2, -0.25) rectangle (1.8, 0.25);
\node[] at (1.0, 0.0) {\color{black}$C_{1, j}$};
\draw[opacity=0.6] (3.0, -0.25) rectangle (4.6, 0.25);
\node[] at (3.8, 0.0) {\color{black}$C_{2, j}$};
\draw[opacity=0.6] (5.8, -0.25) rectangle (7.4, 0.25);
\node[] at (6.6, 0.0) {\color{black}$C_{3, j}$};
\draw[opacity=0.6] (8.8, -0.25) rectangle (10.4, 0.25);
\node[] at (9.600000000000001, 0.0) {\color{black}$C_{n-1, j}$};
\draw[opacity=0.6] (11.6, -0.25) rectangle (13.2, 0.25);
\node[] at (12.399999999999999, 0.0) {\color{black}$C_{n, j}$};
\fill[blue, opacity=0.5] (6.6, 1.55) circle (1.5pt);
\fill[blue, opacity=0.5] (6.6, 1.75) circle (1.5pt);
\fill[blue, opacity=0.5] (6.6, 1.35) circle (1.5pt);
\fill[blue, opacity=0.5] (9.600000000000001, 1.55) circle (1.5pt);
\fill[blue, opacity=0.5] (9.600000000000001, 1.75) circle (1.5pt);
\fill[blue, opacity=0.5] (9.600000000000001, 1.35) circle (1.5pt);
\fill[black, opacity=0.5] (9.600000000000001, 4.0) circle (1.5pt);
\fill[black, opacity=0.5] (9.600000000000001, 4.2) circle (1.5pt);
\fill[black, opacity=0.5] (9.600000000000001, 3.8) circle (1.5pt);
\draw[black, line width=0.02*1cm/1pt]   (0.22, -0.3) -- (0.22, 0);
\node[] at (0.6000000000000001, -0.5) {\color{black}$c_{1, j, 1}$};
\draw[black, line width=0.02*1cm/1pt]   (1.82, -0.3) -- (1.82, 0);
\node[] at (1.9000000000000001, -0.5) {\color{black}$c_{1, j, p_1}'$};
\draw[black, line width=0.02*1cm/1pt]   (2.98, -0.3) -- (2.98, 0);
\node[] at (3.1, -0.5) {\color{black}$c_{2, j, 1}$};
\draw[black, line width=0.02*1cm/1pt]   (4.619999999999999, -0.3) -- (4.619999999999999, 0);
\node[] at (4.699999999999999, -0.5) {\color{black}$c_{2, j, p_2}'$};
\draw[black, line width=0.02*1cm/1pt]   (5.78, -0.3) -- (5.78, 0);
\node[] at (5.7, -0.5) {\color{black}$c_{3, j, 1}$};
\draw[black, line width=0.02*1cm/1pt]   (7.42, -0.3) -- (7.42, 0);
\node[] at (7.5, -0.5) {\color{black}$c_{3, j, p_3}'$};
\draw[black, line width=0.02*1cm/1pt]   (8.780000000000001, -0.3) -- (8.780000000000001, 0);
\node[] at (8.700000000000001, -0.5) {\color{black}$c_{n-1, j, 1}$};
\draw[black, line width=0.02*1cm/1pt]   (10.42, -0.3) -- (10.42, 0);
\node[] at (10.5, -0.5) {\color{black}$c_{n-1, j, p_{n-1}}'$};
\draw[black, line width=0.02*1cm/1pt]   (11.58, -0.3) -- (11.58, 0);
\node[] at (11.799999999999999, -0.5) {\color{black}$c_{n, j, p_n}$};
\draw[black, line width=0.02*1cm/1pt]   (13.18, -0.3) -- (13.18, 0);
\node[] at (12.899999999999999, -0.5) {\color{black}$c_{n, j, p_n}'$};
\draw[black, opacity=0.5, line width=2.7pt]   (8.850000000000001, 4.5) -- (13.399999999999999, 4.5);
\draw[black, opacity=0.5, line width=0.05*1cm/1pt]   (8.825000000000001, 4.6) -- (8.825000000000001, 4.4);
\draw[black, opacity=0.5, line width=2.7pt]   (8.600000000000001, 3.5) -- (13.399999999999999, 3.5);
\draw[black, opacity=0.5, line width=2.7pt]   (8.600000000000001, 3.2) -- (13.399999999999999, 3.2);
\draw[black, opacity=0.5, line width=2.7pt]   (8.600000000000001, 2.9) -- (13.399999999999999, 2.9);
\draw[black, opacity=0.5, line width=2.7pt]   (5.85, 3.5) -- (7.6000000000000005, 3.5);
\draw[black, opacity=0.5, line width=0.05*1cm/1pt]   (5.825, 3.6) -- (5.825, 3.4);
\draw[black, opacity=0.5, line width=2.7pt]   (3.05, 3.2) -- (7.6000000000000005, 3.2);
\draw[black, opacity=0.5, line width=0.05*1cm/1pt]   (3.025, 3.3000000000000003) -- (3.025, 3.1);
\draw[black, opacity=0.5, line width=2.7pt]   (0.25, 2.9) -- (7.6000000000000005, 2.9);
\draw[black, opacity=0.5, line width=0.05*1cm/1pt]   (0.225, 3.0) -- (0.225, 2.8);
\draw[black, opacity=0.5, line width=2.7pt]   (11.65, 4.8) -- (13.399999999999999, 4.8);
\draw[black, opacity=0.5, line width=0.05*1cm/1pt]   (11.625, 4.8999999999999995) -- (11.625, 4.7);
\draw[blue, opacity=0.5, line width=2.7pt]   (0.25, 2.3) -- (7.6000000000000005, 2.3);
\draw[blue, opacity=0.5, line width=0.05*1cm/1pt]   (0.225, 2.4) -- (0.225, 2.1999999999999997);
\draw[blue, opacity=0.5, line width=2.7pt]   (0.25, 2.0) -- (7.6000000000000005, 2.0);
\draw[blue, opacity=0.5, line width=0.05*1cm/1pt]   (0.225, 2.1) -- (0.225, 1.9);
\draw[blue, opacity=0.5, line width=2.7pt]   (0.25, 1.1) -- (7.6000000000000005, 1.1);
\draw[blue, opacity=0.5, line width=0.05*1cm/1pt]   (0.225, 1.2000000000000002) -- (0.225, 1.0);
\draw[blue, opacity=0.5, line width=2.7pt]   (0.25, 0.8) -- (7.6000000000000005, 0.8);
\draw[blue, opacity=0.5, line width=0.05*1cm/1pt]   (0.225, 0.9) -- (0.225, 0.7000000000000001);
\draw[blue, opacity=0.5, line width=2.7pt]   (8.600000000000001, 2.3) -- (13.149999999999999, 2.3);
\draw[blue, opacity=0.5, line width=0.05*1cm/1pt]   (13.174999999999999, 2.4) -- (13.174999999999999, 2.1999999999999997);
\draw[blue, opacity=0.5, line width=2.7pt]   (8.600000000000001, 2.0) -- (13.149999999999999, 2.0);
\draw[blue, opacity=0.5, line width=0.05*1cm/1pt]   (13.174999999999999, 2.1) -- (13.174999999999999, 1.9);
\draw[blue, opacity=0.5, line width=2.7pt]   (8.600000000000001, 1.1) -- (13.149999999999999, 1.1);
\draw[blue, opacity=0.5, line width=0.05*1cm/1pt]   (13.174999999999999, 1.2000000000000002) -- (13.174999999999999, 1.0);
\draw[blue, opacity=0.5, line width=2.7pt]   (8.600000000000001, 0.8) -- (13.149999999999999, 0.8);
\draw[blue, opacity=0.5, line width=0.05*1cm/1pt]   (13.174999999999999, 0.9) -- (13.174999999999999, 0.7000000000000001);
\draw[red, opacity=0.5, line width=2.7pt]   (1.85, 0) -- (2.95, 0);
\draw[red, opacity=0.5, line width=0.05*1cm/1pt]   (1.825, 0.1) -- (1.825, -0.1);
\draw[red, opacity=0.5, line width=0.05*1cm/1pt]   (2.975, 0.1) -- (2.975, -0.1);
\draw[red, opacity=0.5, line width=2.7pt]   (4.6499999999999995, 0) -- (5.75, 0);
\draw[red, opacity=0.5, line width=0.05*1cm/1pt]   (4.625, 0.1) -- (4.625, -0.1);
\draw[red, opacity=0.5, line width=0.05*1cm/1pt]   (5.7749999999999995, 0.1) -- (5.7749999999999995, -0.1);
\draw[red, opacity=0.5, line width=2.7pt]   (7.45, 0) -- (7.6000000000000005, 0);
\draw[red, opacity=0.5, line width=0.05*1cm/1pt]   (7.425000000000001, 0.1) -- (7.425000000000001, -0.1);
\draw[red, opacity=0.5, line width=2.7pt]   (8.600000000000001, 0) -- (8.75, 0);
\draw[red, opacity=0.5, line width=0.05*1cm/1pt]   (8.775, 0.1) -- (8.775, -0.1);
\draw[red, opacity=0.5, line width=2.7pt]   (10.450000000000001, 0) -- (11.549999999999999, 0);
\draw[red, opacity=0.5, line width=0.05*1cm/1pt]   (10.425, 0.1) -- (10.425, -0.1);
\draw[red, opacity=0.5, line width=0.05*1cm/1pt]   (11.575, 0.1) -- (11.575, -0.1);
\node[] at (-0.3, 2.9) {$e_{1, j}$};
\node[] at (2.5, 3.2) {$e_{2, j}$};
\node[] at (5.3, 3.5) {$e_{3, j}$};
\node[] at (8.3, 4.5) {$e_{n-1, j}$};
\node[] at (11.1, 4.8) {$e_{n, j}$};
\fill[black, opacity=0.5] (7.950000000000001, 0) circle (1.5pt);
\fill[black, opacity=0.5] (8.100000000000001, 0) circle (1.5pt);
\fill[black, opacity=0.5] (8.250000000000002, 0) circle (1.5pt);
\fill[blue, opacity=0.5] (7.950000000000001, 0.8) circle (1.5pt);
\fill[blue, opacity=0.5] (8.100000000000001, 0.8) circle (1.5pt);
\fill[blue, opacity=0.5] (8.250000000000002, 0.8) circle (1.5pt);
\fill[blue, opacity=0.5] (7.950000000000001, 1.1) circle (1.5pt);
\fill[blue, opacity=0.5] (8.100000000000001, 1.1) circle (1.5pt);
\fill[blue, opacity=0.5] (8.250000000000002, 1.1) circle (1.5pt);
\fill[blue, opacity=0.5] (7.950000000000001, 2.0) circle (1.5pt);
\fill[blue, opacity=0.5] (8.100000000000001, 2.0) circle (1.5pt);
\fill[blue, opacity=0.5] (8.250000000000002, 2.0) circle (1.5pt);
\fill[blue, opacity=0.5] (7.950000000000001, 2.3) circle (1.5pt);
\fill[blue, opacity=0.5] (8.100000000000001, 2.3) circle (1.5pt);
\fill[blue, opacity=0.5] (8.250000000000002, 2.3) circle (1.5pt);
\fill[black, opacity=0.5] (7.950000000000001, 2.9) circle (1.5pt);
\fill[black, opacity=0.5] (8.100000000000001, 2.9) circle (1.5pt);
\fill[black, opacity=0.5] (8.250000000000002, 2.9) circle (1.5pt);
\fill[black, opacity=0.5] (7.950000000000001, 3.2) circle (1.5pt);
\fill[black, opacity=0.5] (8.100000000000001, 3.2) circle (1.5pt);
\fill[black, opacity=0.5] (8.250000000000002, 3.2) circle (1.5pt);
\fill[black, opacity=0.5] (7.950000000000001, 3.5) circle (1.5pt);
\fill[black, opacity=0.5] (8.100000000000001, 3.5) circle (1.5pt);
\fill[black, opacity=0.5] (8.250000000000002, 3.5) circle (1.5pt);
\fill[black, opacity=0.5] (13.599999999999998, 2.9) circle (1.5pt);
\fill[black, opacity=0.5] (13.749999999999998, 2.9) circle (1.5pt);
\fill[black, opacity=0.5] (13.899999999999999, 2.9) circle (1.5pt);
\fill[black, opacity=0.5] (13.599999999999998, 3.2) circle (1.5pt);
\fill[black, opacity=0.5] (13.749999999999998, 3.2) circle (1.5pt);
\fill[black, opacity=0.5] (13.899999999999999, 3.2) circle (1.5pt);
\fill[black, opacity=0.5] (13.599999999999998, 3.5) circle (1.5pt);
\fill[black, opacity=0.5] (13.749999999999998, 3.5) circle (1.5pt);
\fill[black, opacity=0.5] (13.899999999999999, 3.5) circle (1.5pt);
\fill[black, opacity=0.5] (13.599999999999998, 4.5) circle (1.5pt);
\fill[black, opacity=0.5] (13.749999999999998, 4.5) circle (1.5pt);
\fill[black, opacity=0.5] (13.899999999999999, 4.5) circle (1.5pt);
\fill[black, opacity=0.5] (13.599999999999998, 4.8) circle (1.5pt);
\fill[black, opacity=0.5] (13.749999999999998, 4.8) circle (1.5pt);
\fill[black, opacity=0.5] (13.899999999999999, 4.8) circle (1.5pt);
\draw[black, opacity=0.6, line width=0.03*1cm/1pt]   (0.2, -1) -- (0.2, 5.0);
\draw[black, opacity=0.6, line width=0.03*1cm/1pt]   (0.2, -1) -- (13.2, -1);
\draw[black, opacity=0.6, line width=0.03*1cm/1pt]   (0.2, 5.0) -- (13.2, 5.0);
\draw[black, opacity=0.6, line width=0.03*1cm/1pt]   (13.2, -1) -- (13.2, 5.0);
\node[] at (6.6, 4.5) {\Large $P_{j}$};
\draw[<->] (4.6, 0.15) -- (5.8, 0.15) node[midway, above] {$v^*$};
\end{tikzpicture}
    \caption{Illustration of a general partition gadget $P_j$, consisting of the critical positions $C_{1,j}, \dots, C_{n,j}$, with P-transition agents (shown in red) placed between them, each having speed $v^*$. The element agents, shown in black, have speed $v_j$, and for each agent $e_{i,j}$, the left endpoint of its movement interval is located at its critical position $C_{i,j}$. The $P - \frac{P}{k}$ helper agents are shown in blue and have speed $v_j'$. If an element agent moves from one critical position to another, this is referred to as a \emph{P-error}. We assume that each element agent $e_{i,j}$ can only be located at its own critical position $C_{i,j}$.}
    \label{fig:line:sp:part}
\end{figure}
Since each object $p_i$ may be assigned to at most one partition set, at most one of the drones $e_{i,1}, \dots, e_{i,k}$ may be located at its corresponding partition gadget, i.e., at its critical position. To enforce this, we construct an object gadget $O_i$ for each object $p_i$. While the left endpoints of the intervals for $e_{i,1}, e_{i,2}, \dots, e_{i,k}$ lie at their respective partition gadgets $P_1, P_2, \dots, P_k$, their right endpoints are all located at the object gadget $O_i$.

The gadget $O_i$ consists of $k-2$ agents with speed $v^*$, referred to as \emph{O-transition agents}, which, however, are not placed directly adjacent to one another. As a result, $O_i$ contains $k-1$ gaps of length 1, such that one element agent -- $k-1$ in total -- must be present at each gap. The gaps are separated by distance $v^*$, so no element agent can move between them; doing so would require at least $\frac{v^*}{v_j}$ time and thus exceed the time bound $a(n) \cdot d$. We refer to this as an \emph{O-error}.

\begin{figure}[ht]
    \centering
    \begin{tikzpicture}[scale=1.0]
\definecolor{dgreen}{RGB}{13, 138, 8}
\fill[black, opacity=0.5] (3.0000000000000004, 1.55) circle (1.5pt);
\fill[black, opacity=0.5] (3.0000000000000004, 1.75) circle (1.5pt);
\fill[black, opacity=0.5] (3.0000000000000004, 1.35) circle (1.5pt);
\fill[black, opacity=0.5] (8.600000000000001, 1.55) circle (1.5pt);
\fill[black, opacity=0.5] (8.600000000000001, 1.75) circle (1.5pt);
\fill[black, opacity=0.5] (8.600000000000001, 1.35) circle (1.5pt);
\draw[black, line width=0.02*1cm/1pt]   (0.22, -0.3) -- (0.22, 0);
\node[] at (0.30000000000000004, -0.5) {\color{black}$o_{i, 1}$};
\draw[black, line width=0.02*1cm/1pt]   (0.6200000000000001, -0.3) -- (0.6200000000000001, 0);
\node[] at (0.9000000000000001, -0.5) {\color{black}$o_{i, 1}'$};
\draw[black, line width=0.02*1cm/1pt]   (2.7800000000000002, -0.3) -- (2.7800000000000002, 0);
\node[] at (2.7, -0.5) {\color{black}$o_{i, 2}$};
\draw[black, line width=0.02*1cm/1pt]   (3.22, -0.3) -- (3.22, 0);
\node[] at (3.3000000000000003, -0.5) {\color{black}$o_{i, 2}'$};
\draw[black, line width=0.02*1cm/1pt]   (5.380000000000001, -0.3) -- (5.380000000000001, 0);
\node[] at (5.300000000000001, -0.5) {\color{black}$o_{i, 3}$};
\draw[black, line width=0.02*1cm/1pt]   (5.82, -0.3) -- (5.82, 0);
\node[] at (5.9, -0.5) {\color{black}$o_{i, 3}'$};
\draw[black, line width=0.02*1cm/1pt]   (7.1800000000000015, -0.3) -- (7.1800000000000015, 0);
\node[] at (7.000000000000001, -0.5) {\color{black}$o_{i, k-2}$};
\draw[black, line width=0.02*1cm/1pt]   (7.620000000000001, -0.3) -- (7.620000000000001, 0);
\node[] at (8.05, -0.5) {\color{black}$o_{i, k-2}'$};
\draw[black, line width=0.02*1cm/1pt]   (9.780000000000001, -0.3) -- (9.780000000000001, 0);
\node[] at (9.4, -0.5) {\color{black}$o_{i, k-1}$};
\draw[black, line width=0.02*1cm/1pt]   (10.180000000000001, -0.3) -- (10.180000000000001, 0);
\node[] at (10.4, -0.5) {\color{black}$o_{i, k-1}'$};
\draw[black, opacity=0.9, line width=2.7pt]   (0.0, 2.6) -- (6.000000000000001, 2.6);
\fill[black, opacity=0.9] (-0.15, 2.6) circle (1.5pt);
\fill[black, opacity=0.9] (-0.3, 2.6) circle (1.5pt);
\fill[black, opacity=0.9] (-0.44999999999999996, 2.6) circle (1.5pt);
\draw[black, opacity=0.65, line width=2.7pt]   (0.0, 2.3) -- (6.000000000000001, 2.3);
\fill[black, opacity=0.65] (-0.15, 2.3) circle (1.5pt);
\fill[black, opacity=0.65] (-0.3, 2.3) circle (1.5pt);
\fill[black, opacity=0.65] (-0.44999999999999996, 2.3) circle (1.5pt);
\draw[black, opacity=0.5, line width=2.7pt]   (0.0, 2.0) -- (6.000000000000001, 2.0);
\fill[black, opacity=0.5] (-0.15, 2.0) circle (1.5pt);
\fill[black, opacity=0.5] (-0.3, 2.0) circle (1.5pt);
\fill[black, opacity=0.5] (-0.44999999999999996, 2.0) circle (1.5pt);
\draw[black, opacity=0.25, line width=2.7pt]   (0.0, 1.1) -- (6.000000000000001, 1.1);
\fill[black, opacity=0.25] (-0.15, 1.1) circle (1.5pt);
\fill[black, opacity=0.25] (-0.3, 1.1) circle (1.5pt);
\fill[black, opacity=0.25] (-0.44999999999999996, 1.1) circle (1.5pt);
\draw[black, opacity=0.15, line width=2.7pt]   (0.0, 0.8) -- (6.000000000000001, 0.8);
\fill[black, opacity=0.15] (-0.15, 0.8) circle (1.5pt);
\fill[black, opacity=0.15] (-0.3, 0.8) circle (1.5pt);
\fill[black, opacity=0.15] (-0.44999999999999996, 0.8) circle (1.5pt);
\draw[black, opacity=0.9, line width=2.7pt]   (7.000000000000001, 2.6) -- (10.15, 2.6);
\draw[black, opacity=0.9, line width=0.05*1cm/1pt]   (10.175, 2.7) -- (10.175, 2.5);
\draw[black, opacity=0.65, line width=2.7pt]   (7.000000000000001, 2.3) -- (10.15, 2.3);
\draw[black, opacity=0.65, line width=0.05*1cm/1pt]   (10.175, 2.4) -- (10.175, 2.1999999999999997);
\draw[black, opacity=0.5, line width=2.7pt]   (7.000000000000001, 2.0) -- (10.15, 2.0);
\draw[black, opacity=0.5, line width=0.05*1cm/1pt]   (10.175, 2.1) -- (10.175, 1.9);
\draw[black, opacity=0.25, line width=2.7pt]   (7.000000000000001, 1.1) -- (10.15, 1.1);
\draw[black, opacity=0.25, line width=0.05*1cm/1pt]   (10.175, 1.2000000000000002) -- (10.175, 1.0);
\draw[black, opacity=0.15, line width=2.7pt]   (7.000000000000001, 0.8) -- (10.15, 0.8);
\draw[black, opacity=0.15, line width=0.05*1cm/1pt]   (10.175, 0.9) -- (10.175, 0.7000000000000001);
\draw[red, opacity=0.5, line width=2.7pt]   (0.6500000000000001, 0) -- (2.7500000000000004, 0);
\draw[red, opacity=0.5, line width=0.05*1cm/1pt]   (0.6250000000000001, 0.1) -- (0.6250000000000001, -0.1);
\draw[red, opacity=0.5, line width=0.05*1cm/1pt]   (2.7750000000000004, 0.1) -- (2.7750000000000004, -0.1);
\draw[red, opacity=0.5, line width=2.7pt]   (3.25, 0) -- (5.3500000000000005, 0);
\draw[red, opacity=0.5, line width=0.05*1cm/1pt]   (3.225, 0.1) -- (3.225, -0.1);
\draw[red, opacity=0.5, line width=0.05*1cm/1pt]   (5.375, 0.1) -- (5.375, -0.1);
\draw[red, opacity=0.5, line width=2.7pt]   (5.8500000000000005, 0) -- (6.000000000000001, 0);
\draw[red, opacity=0.5, line width=0.05*1cm/1pt]   (5.825000000000001, 0.1) -- (5.825000000000001, -0.1);
\draw[red, opacity=0.5, line width=2.7pt]   (7.000000000000001, 0) -- (7.150000000000001, 0);
\draw[red, opacity=0.5, line width=0.05*1cm/1pt]   (7.175000000000001, 0.1) -- (7.175000000000001, -0.1);
\draw[red, opacity=0.5, line width=2.7pt]   (7.650000000000001, 0) -- (9.75, 0);
\draw[red, opacity=0.5, line width=0.05*1cm/1pt]   (7.625000000000002, 0.1) -- (7.625000000000002, -0.1);
\draw[red, opacity=0.5, line width=0.05*1cm/1pt]   (9.775, 0.1) -- (9.775, -0.1);
\fill[red, opacity=0.5] (6.500000000000001, 0) circle (1.5pt);
\fill[red, opacity=0.5] (6.3500000000000005, 0) circle (1.5pt);
\fill[red, opacity=0.5] (6.650000000000001, 0) circle (1.5pt);
\fill[black, opacity=0.15] (6.500000000000001, 0.8) circle (1.5pt);
\fill[black, opacity=0.15] (6.3500000000000005, 0.8) circle (1.5pt);
\fill[black, opacity=0.15] (6.650000000000001, 0.8) circle (1.5pt);
\fill[black, opacity=0.5] (6.500000000000001, 1.1) circle (1.5pt);
\fill[black, opacity=0.5] (6.3500000000000005, 1.1) circle (1.5pt);
\fill[black, opacity=0.5] (6.650000000000001, 1.1) circle (1.5pt);
\fill[black, opacity=0.5] (6.500000000000001, 2.0) circle (1.5pt);
\fill[black, opacity=0.5] (6.3500000000000005, 2.0) circle (1.5pt);
\fill[black, opacity=0.5] (6.650000000000001, 2.0) circle (1.5pt);
\fill[black, opacity=0.65] (6.500000000000001, 2.3) circle (1.5pt);
\fill[black, opacity=0.65] (6.3500000000000005, 2.3) circle (1.5pt);
\fill[black, opacity=0.65] (6.650000000000001, 2.3) circle (1.5pt);
\fill[black, opacity=0.85] (6.500000000000001, 2.6) circle (1.5pt);
\fill[black, opacity=0.85] (6.3500000000000005, 2.6) circle (1.5pt);
\fill[black, opacity=0.85] (6.650000000000001, 2.6) circle (1.5pt);
\node[] at (10.8, 0.8) {$e_{i, 1}$};
\node[] at (10.8, 1.1) {$e_{i, 2}$};
\node[] at (10.8, 2.0) {$e_{i, k-2}$};
\node[] at (10.8, 2.3) {$e_{i, k-1}$};
\node[] at (10.8, 2.6) {$e_{i, k}$};
\draw[black, opacity=0.6, line width=0.03*1cm/1pt]   (0.2, -1) -- (0.2, 4.0);
\draw[black, opacity=0.6, line width=0.03*1cm/1pt]   (0.2, -1) -- (10.200000000000001, -1);
\draw[black, opacity=0.6, line width=0.03*1cm/1pt]   (0.2, 4.0) -- (10.200000000000001, 4.0);
\draw[black, opacity=0.6, line width=0.03*1cm/1pt]   (10.200000000000001, -1) -- (10.200000000000001, 4.0);
\node[] at (5.6000000000000005, 3.6) {\Large $O_{i}$};
\draw[<->] (3.2, 0.15) -- (5.4, 0.15) node[midway, above] {$v^*$};
\draw[<->] (5.4, 0.15) -- (5.800000000000001, 0.15) node[midway, above] {$1$};
\end{tikzpicture}
    \caption{Illustration of an object gadget $O_i$, consisting of $k-2$ O-transition agents shown in red, each with speed $v^*$. There are small gaps between them that must be covered by element agents, shown in black, where darker agents indicate higher speed. In total, there are $k-1$ gaps; if fewer than $k-1$ element agents are present, one must move from one gap to another -- this is referred to as an O-error.}
    \label{fig:line:sp:obj}
\end{figure}
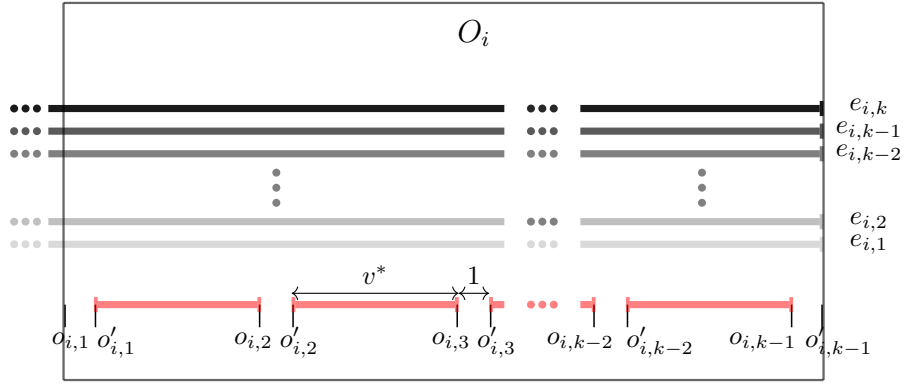

\begin{figure}[ht]
    \centering
    \begin{tikzpicture}[scale=1.0]
\definecolor{dgreen}{RGB}{13, 138, 8}
\draw[red, opacity=0.4, line width=2.7pt]   (0.0, 0) -- (0.15000000000000002, 0);
\draw[red, opacity=0.4, line width=0.05*1cm/1pt]   (0.17500000000000002, 0.1) -- (0.17500000000000002, -0.1);
\draw[red, opacity=0.4, line width=2.7pt]   (1.75, 0) -- (1.9, 0);
\draw[red, opacity=0.4, line width=0.05*1cm/1pt]   (1.7249999999999999, 0.1) -- (1.7249999999999999, -0.1);
\draw[red, opacity=0.4, line width=2.7pt]   (2.9, 0) -- (3.0500000000000003, 0);
\draw[red, opacity=0.4, line width=0.05*1cm/1pt]   (3.075, 0.1) -- (3.075, -0.1);
\draw[red, opacity=0.4, line width=2.7pt]   (4.6499999999999995, 0) -- (4.8, 0);
\draw[red, opacity=0.4, line width=0.05*1cm/1pt]   (4.625, 0.1) -- (4.625, -0.1);
\draw[red, opacity=0.4, line width=2.7pt]   (5.8, 0) -- (5.95, 0);
\draw[red, opacity=0.4, line width=0.05*1cm/1pt]   (5.975, 0.1) -- (5.975, -0.1);
\draw[red, opacity=0.4, line width=2.7pt]   (7.55, 0) -- (7.7, 0);
\draw[red, opacity=0.4, line width=0.05*1cm/1pt]   (7.525, 0.1) -- (7.525, -0.1);
\draw[red, opacity=0.4, line width=2.7pt]   (8.7, 0) -- (8.899999999999999, 0);
\draw[red, opacity=0.4, line width=2.7pt]   (10.45, 0) -- (10.599999999999998, 0);
\draw[red, opacity=0.4, line width=0.05*1cm/1pt]   (10.424999999999999, 0.1) -- (10.424999999999999, -0.1);
\draw[black, opacity=0.6, line width=2.7pt]   (3.35, 1.5) -- (10.349999999999998, 1.5);
\draw[black, opacity=0.6, line width=0.05*1cm/1pt]   (3.325, 1.6) -- (3.325, 1.4);
\draw[black, opacity=0.6, line width=0.05*1cm/1pt]   (10.374999999999998, 1.6) -- (10.374999999999998, 1.4);
\draw[black, opacity=0.4, line width=2.7pt]   (0.35000000000000003, 1.0) -- (10.349999999999998, 1.0);
\draw[black, opacity=0.4, line width=0.05*1cm/1pt]   (0.32500000000000007, 1.1) -- (0.32500000000000007, 0.9);
\draw[black, opacity=0.4, line width=0.05*1cm/1pt]   (10.374999999999998, 1.1) -- (10.374999999999998, 0.9);
\draw[black, opacity=0.4, line width=2.7pt]   (0.55, 0.5) -- (7.45, 0.5);
\draw[black, opacity=0.4, line width=0.05*1cm/1pt]   (0.525, 0.6) -- (0.525, 0.4);
\draw[black, opacity=0.4, line width=0.05*1cm/1pt]   (7.475, 0.6) -- (7.475, 0.4);
\draw[opacity=0.6] (0.2, -0.2) rectangle (1.7, 3.1);
\node at (0.95, 2.5)  {\Large $P_{j'}$};
\draw[opacity=0.6] (3.1, -0.2) rectangle (4.6, 3.1);
\node at (3.8499999999999996, 2.5)  {\Large $P_j$};
\draw[opacity=0.6] (6.0, -0.2) rectangle (7.5, 3.1);
\node at (6.75, 2.5)  {\Large $O_{i}$};
\draw[opacity=0.6] (8.899999999999999, -0.2) rectangle (10.399999999999999, 3.1);
\node at (9.649999999999999, 2.5)  {\Large $O_{i'}$};
\fill[black, opacity=0.5] (2.4, 0) circle (1.5pt);
\fill[black, opacity=0.5] (2.1999999999999997, 0) circle (1.5pt);
\fill[black, opacity=0.5] (2.6, 0) circle (1.5pt);
\fill[black, opacity=0.5] (5.3, 0) circle (1.5pt);
\fill[black, opacity=0.5] (5.1, 0) circle (1.5pt);
\fill[black, opacity=0.5] (5.5, 0) circle (1.5pt);
\fill[black, opacity=0.5] (8.2, 0) circle (1.5pt);
\fill[black, opacity=0.5] (7.999999999999999, 0) circle (1.5pt);
\fill[black, opacity=0.5] (8.399999999999999, 0) circle (1.5pt);
\fill[red, opacity=0.5] (-0.3, 0) circle (1.5pt);
\fill[red, opacity=0.5] (-0.44999999999999996, 0) circle (1.5pt);
\fill[red, opacity=0.5] (-0.15, 0) circle (1.5pt);
\fill[red, opacity=0.5] (10.899999999999999, 0) circle (1.5pt);
\fill[red, opacity=0.5] (10.749999999999998, 0) circle (1.5pt);
\fill[red, opacity=0.5] (11.049999999999999, 0) circle (1.5pt);
\node at (-0.09999999999999998, 0.5)  {$e_{i, j'}$};
\node at (-0.09999999999999998, 1.0)  {$e_{i', j'}$};
\node at (2.6, 1.5)  {$e_{i', j}$};
\draw[->, thick, bend left=0, >=Triangle, line width=0.8pt, shorten >=0.4mm, shorten <=0.7mm] (0.95, 0.5) -- (0.95, 0.25) --(6.75, 0.25) -- (6.75, 0.5);
\draw[line width=1pt] (0.85, 0.4) -- (1.05, 0.6);
\draw[line width=1pt] (0.85, 0.6) -- (1.05, 0.4);
\draw[->, thick, bend left=0, >=Triangle, line width=0.8pt, shorten >=0.4mm, shorten <=0.7mm] (9.649999999999999, 1) -- (9.649999999999999, 0.75) --(0.95, 0.75) -- (0.95, 1);
\draw[line width=1pt] (9.549999999999999, 0.9) -- (9.749999999999998, 1.1);
\draw[line width=1pt] (9.549999999999999, 1.1) -- (9.749999999999998, 0.9);
\draw[->, thick, bend left=0, >=Triangle, line width=0.8pt, shorten >=0.4mm, shorten <=0.7mm] (6.75, 1.5) -- (6.75, 1.25) --(9.649999999999999, 1.25) -- (9.649999999999999, 1.5);
\draw[line width=1pt] (6.65, 1.4) -- (6.85, 1.6);
\draw[line width=1pt] (6.65, 1.6) -- (6.85, 1.4);
\end{tikzpicture}
    \caption{Visualization of the ring exchange argument. Crosses indicate the original positions of the agents, and arrows show their new positions as part of the cyclic exchange. Since $e_{i', j'}$ and $e_{i, j'}$ have the same speed and $i' < i$, the movement interval of $e_{i, j'}$ is entirely contained within that of $e_{i', j'}$, allowing $e_{i', j'}$ to take the place of $e_{i, j'}$. In turn, $e_{i', j}$ can move to the position of $e_{i', j'}$, and $e_{i, j'}$ to the original position of $e_{i', j}$, i.e., to $O_i$.
}
    \label{fig:line:sp:ring}
\end{figure}
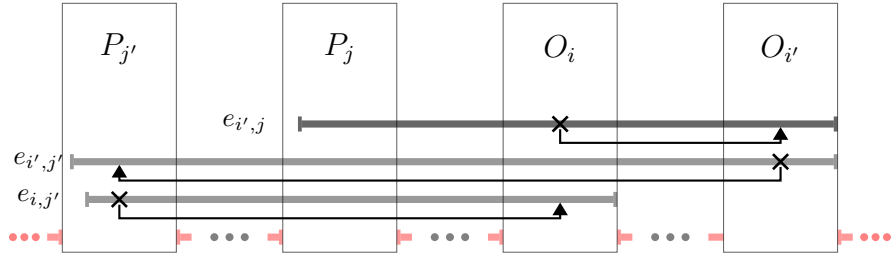
The object gadgets are arranged in mirrored order to the critical positions within a partition gadget, i.e., from left to right we place $O_n$, then $O_{n-1}$, down to $O_1$. We then use an exchange argument to show that, without loss of generality, no element agent can be assigned to the wrong object gadget. Suppose at $O_i$ there is an agent $e_{i',j}$ with $i \neq i'$. Since for $i < i'$ the gadget $O_i$ lies outside the movement range of $e_{i',j}$, it must be that $i > i'$. If this occurs for multiple gadgets, consider only the one with the smallest index (the rightmost). There must be at least two of the agents $e_{i,1},\dots,e_{i,k}$ not at $O_i$, while exactly $k-1$ of $e_{i',1},\dots,e_{i',k}$ are at $O_{i'}$. Hence there is some $j'$ such that $e_{i,j'}$ is not at $O_i$ and $e_{i',j'}$ is at $O_{i'}$. Since $e_{i,j'}$ and $e_{i',j'}$ share the same speed and the movement range of $e_{i,j'}$ is contained in that of $e_{i',j'}$, we perform the cyclic exchange: place $e_{i',j}$ at the original position of $e_{i',j'}$, $e_{i',j'}$ at that of $e_{i,j'}$, and $e_{i,j'}$ at that of $e_{i,j}$. Iterating this swap removes any element agent $e_{i',j}$ with $i \neq i'$ from all object gadgets.

Since this exchange is somewhat intricate, it is illustrated in \cref{fig:line:sp:ring}. With this all in hand we can now formulate the resulting DDT-SP instance and proof correctness. As it is rather technical, we defer it to \cref{arxivappendixA}.

\subsection{A Parameterized Algorithm}\label{sec:line:fpt}

We now design an FPT algorithm for DDT-SP on path graphs. As a parameter for our approach, we use the maximum overlap -- the \emph{thickness} $\tn := \max_{u \in V(G)} |B_u|$, i.e., the largest number of agents covering any single vertex.

Since each agent corresponds to an interval on the line, the intersection graph of agents is an interval graph. In such graphs, agents covering a common point form a clique, and each maximal clique corresponds to some point on the line \cite{lekkeikerker62}. As interval graphs are chordal, their treewidth equals the size of the largest clique minus one \cite{rose76}, which in our case implies $\tw{}=\tn - 1$. Thus, the treewidth of the intersection graph and the thickness parameter are equivalent.
This structural property allows for a dynamic programming algorithm over a tree decomposition of the intersection graph. We define the \emph{event points} $e_1 < \dots < e_{2k}$ as the start and end positions of all agent intervals, i.e., the multiset $\{s_a, f_a \mid a \in A\}$. We sort these events from left to right along the path ($e_1 < \dots < e_{2k}$), resolving ties by placing start points before end points; the order among start points or among end points is chosen arbitrarily. In an optimal schedule, handovers occur only at these event points -- specifically at $s_{a'}$ or $f_a$ for a handover between $a$ and $a'$ -- since otherwise extending the faster agent's segment would yield a strictly better or equally good schedule.

We interpret each event $e_i$ as a node in a tree decomposition that forms a path, see \cref{fig:path}. We can consider the set of agents covering $e_i$, $Z_{e_i}$ as the bag at position $i$. If $e_i$ is the start vertex of some agent $a$, we say that $a$ is \emph{introduced} to the bag, and if $e_i$ is the end point of some agent $a$, it is \emph{forgotten}.
We define a dynamic program over the path decomposition induced by this event sequence. At each $e_i$, we maintain:
\begin{itemize}
\item  the \emph{bag} $Z_{e_i}$ of agents covering $e_i$,
\item   a subset $S \subseteq Z_{e_i}$ of agents already used, and
\item   a current carrier $a \in Z_{e_i}$ responsible for transporting the package from $e_i$ to $e_{i+1}$.
\end{itemize}
Since at most $\tn$ agents cover any point, we have $|Z_{e_i}| \leq \tn$, and the number of subsets $S \subseteq Z_{e_i}$ (representing already-used agents) is bounded by $2^\tn$.
The dynamic programming table stores the minimum delivery time to reach $e_i$ under the assumption that agent $a$ will deliver it to $e_{i+1}$ and the current set of already used agents is $S$, denoted by $D[e_i,S,a]$.
We assume no agent starts at $t$ or ends at $s$, as those movement intervals can be cut off without affecting the solution. Furthermore, at most one handover occurs per event point, and it happens only if $e_i = s_{a'}$ or $e_i = f_a$, as argued before.
While we only store delivery times in the DP table, the full delivery sequence can be reconstructed via backtracking.

In the following we always assume for some DP entry $D[e_i, S, a]$ that $a\in S$, otherwise the entry is defined as $\infty$.
Assume we have already computed all possible entries $D[e_{i-1},S\subseteq Z_{e_{i-1}}, a\in Z_{e_{i-1}}]$ correctly, and now wish to compute $D[e_i,S\subseteq Z_{e_i}, a\in Z_{e_i}]$. 

\subparagraph*{Introduce Event.}For an introduce event $e_i$, it is the start point of an agent $a'$, so $a'$ is now newly contained in $Z_{e_i}$. 
We differentiate between $a'$ being included into $S$, which corresponds to the package being handed over to $a'$ and the case where $a'$ is omitted (for now). 
Note that when $a'$ enters $S$ it will also be the dedicated package carrier.

\begin{restatable}{lemma}{lemmaIntroLine}
    Let $e_i$ be an introduce event that introduces some agent $a'$.
    Assume that for all valid combinations the solutions $D[e_{i-1},S'\subseteq Z_{e_{i-1}}, a\in S']$ have been computed correctly, then we correctly compute $D[e_i,S\subseteq Z_{e_i}, a\in S]$ in time $\OO(\tn)$.
\end{restatable}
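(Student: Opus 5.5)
We first write down the recurrence and then argue that it is correct. At the introduce event $e_i = s_{a'}$ we have $Z_{e_i} = Z_{e_{i-1}} \cup \{a'\}$. The package moves from $e_{i-1}$ to $e_i$, and during this stretch it is carried by the carrier that was fixed at $e_{i-1}$. Write $\delta_b := d_b(e_{i-1},e_i)$ for the travel time of agent $b$ over this segment, which is finite for every $b\in Z_{e_{i-1}}$ because $b$ covers both endpoints (start points are sorted before end points, so nothing ends strictly between them). We distinguish two cases according to whether $a'\in S$.

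\emph{Case $a'\notin S$.} The new agent is not used (yet), so no handover happens at $e_i$. Hence the carrier $a$ is unchanged, and we set $D[e_i,S,a] = D[e_{i-1},S,a] + \delta_a$. Here $S\subseteq Z_{e_{i-1}}$ holds automatically.

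\emph{Case $a'\in S$.} As remarked before the lemma, an agent entering $S$ at its introduce event immediately becomes the carrier, so $D[e_i,S,a]=\infty$ whenever $a\neq a'$. For $a=a'$ the package arrives at $e_i$ carried by some previous carrier $b\in S\setminus\{a'\}$, which hands it over to $a'$ at $s_{a'}$. We therefore set
\[
D[e_i,S,a'] = \min_{b\in S\setminus\{a'\}} \bigl( D[e_{i-1},S\setminus\{a'\},b] + \delta_b \bigr).
\]
The special case where $s$ itself lies at this event and no previous carrier exists is handled by the base initialization.

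Correctness is shown by two inequalities. For $\le$, any value produced by the recurrence extends a schedule that was counted at $e_{i-1}$, by letting the carrier continue to $e_i$ and possibly hand over to $a'$. This is a feasible partial schedule in which each agent is used once, and by Observation~\ref{obs:nowaiting} its agents start at the pickup positions. For $\ge$, take an optimal partial schedule that is consistent with $(S,a)$ at $e_i$ and restrict it to $e_{i-1}$. The agent carrying the package over $[e_{i-1},e_i]$ is a unique agent $b$, since handovers occur only at event points and at most one per event point. If $a'$ has not been used, then $b=a$ and the used set at $e_{i-1}$ is $S$. Otherwise $a'$ received the package exactly at $e_i$: it cannot have received it earlier, because $a'$ does not cover points before $s_{a'}$, and it is the current carrier because each agent picks up only once (Erlebach et al.). In that case the used set at $e_{i-1}$ is $S\setminus\{a'\}$. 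The main subtlety is exactly this argument, that $a'\in S$ forces $a=a'$ and a handover exactly at $e_i$, together with the fact that the used agents which are already forgotten do not matter, since they can never reappear.

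For the running time, the first case takes $O(1)$ per entry. The second case takes a minimum over at most $\tn$ candidates $b$, and it is needed only for the single carrier $a'$ for each $S$. Each entry $D[e_i,S,a]$ is therefore computed in $O(\tn)$ time.
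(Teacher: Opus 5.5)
Your proposal is correct and follows essentially the same route as the paper. It uses the same case split on whether $a'\in S$ and the same recurrence: for $a'\notin S$ the carrier is unchanged with cost $D[e_{i-1},S,a]+d_a(e_{i-1},e_i)$; for $a'\in S$ the carrier is forced to be $a=a'$ and you minimize over the previous carrier in $S\setminus\{a'\}$; all other entries are $\infty$. The $O(\sigma)$ bound comes from that minimization, exactly as in the paper, and you are only somewhat more explicit about the feasibility direction and why $d_b(e_{i-1},e_i)$ is finite.
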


\begin{proof}    
    First, assume that the newly introduced agent $a'$ is not contained in the set of already used agents $S$, i.e., $a'\notin S$.
    Since $Z_{e_{i}}=Z_{e_{i-1}}\cup \{a'\}$ and $a'$ is assumed to not deliver the package to $e_{i+1}$, $a$ must correspond to an agent from $Z_{e_{i-1}}$.    
    Since we assume that a handover only happens from some agent $a_1$ to $a_2$, if either we reached the end point of $a_1$ or the start point of $a_2$ and neither is true in this case, the package carrier $a$, that delivers the package from $e_i$ to $e_{i+1}$ is the same as the package carrier that delivered it from $e_{i-1}$ to $e_i$, so we must have $a_{i-1}^{\opt} = a$.     
    Therefore, agent $a$ incurred time $d_a(e_{i-1},e_i)$ from event point $e_{i-1}$ to $e_i$.
    At last, since $a$ was delivering the package from $e_{i-1}$ to $e_i$, the solution $\opt_{-i}=((e_1,a_1^{\opt}),\dots,(e_{i-1},a_{i-1}^{\opt}=a))$ with $Z_{e_{i-1}}\cap \opt_{-i}=S$ must be optimal for the instance $(e_{i-1},S,a)$, which, by our previous assumption, is correctly computed for $D[e_{i-1},S,a]$.    
    Therefore, the delivery time at $e_i$ in this case is $D[e_{i-1},S,a]+ d_a(e_{i-1},e_i)$.

    Now assume that the newly introduced agent is the next package carrier, i.e., $a_i^{\opt}=a'$, which implies $a'\in S$ and $a=a'$.
    Let $a''=a_{i-1}^{\opt}$ be the agent which passed the package over to $a'$ at event point $e_i$.
    Since, by the same reason as before, $a''$ could not have handed over the package between $e_{i-1}$ and $e_i$, it was the designated package delivery agent at $e_{i-1}$.
    Therefore, it incurred cost $d_{a''}(e_{i-1},e_i)$ to deliver the package from $e_{i-1}$ to $e_{i}$.
    Additionally, we have $Z_{e_{i-1}}\cap S_{-i}^{\opt} = S\setminus\{a'\}$, as no additional transfer to other agents happened in between $e_{i-1}$ and $e_i$.
    This means that $\opt_{-i}=((e_1,a_1^{\opt}),\dots,(e_{i-1},a_{i-1}^{\opt}=a''))$
    is also an optimal solution for the instance $(e_{i-1},S\setminus \{a'\}, a'')$, which is already correctly computed as $D[e_{i-1},S\setminus \{a'\}, a'']$.
    Therefore, the delivery time in the optimal solution must correspond in this case to $D[e_{i-1},S\setminus \{a'\},a]+ d_a(e_{i-1},e_i)$.

    Since we do not know the previous package carrier $a''$, we try out all possible options of previous package carriers that are contained in $S\setminus \{a'\}$.
    Each such solution corresponds to a valid delivery tour that transports the package from $s$ to $e_i$ such that the set $S$ of agents was already used and $a$ is still the currently assumed package carrier.
    This give us the following final case distinction for setting the values for $D[t,S,a]$:

    \begin{equation*}
    D[e_i,S\subseteq Z_{e_i},a]=\begin{cases}
                D[e_{i-1},S,a] + d_a(e_{i-1},e_i) & a'\notin S \\
                \min\limits_{{a''\in S\setminus\{a'\}}}\big\lbrace D[e_{i-1},S\setminus\{a'\},a''] + d_{a''}(e_{i-1},e_i)\big\rbrace & a' \in S~ (\text{and }a=a')\\
                \infty & \text{otherwise}
                \end{cases}
\end{equation*}    
The runtime is upper bounded by the second case where we iterate over all possible previous agent carriers $a''$ from $S\setminus \{a'\}$. Since $\vert S \vert \leq \tn$, the runtime follows.
\end{proof}

\subparagraph*{Forget Event.}For a forget event $e_i$, an end point for some agent $a'$ was reached, thus $a'$ is removed from $Z_{e_i}$.
We consider all possibilities for how $a'$ could have been involved in an optimal delivery: either $a'$ delivered the package to $e_i$ and now hands it over to a different agent $a$, or $a'$ was involved in the optimal delivery but did not carry the package to $e_i$, or $a'$ never held the package at any point.
Out of all of these options we again take the solution variant with the smallest cost.

\begin{restatable}{lemma}{lemmaForgetLine}
    Let $e_i$ be a forget event that forgets some agent $a'$.
    Assume that for all valid combinations the solutions $D[e_{i-1},S'\subseteq Z_{e_{i-1}}, a\in S']$ have been computed correctly, then we correctly compute $D[e_i,S\subseteq Z_{e_i}, a\in S]$ in time $\OO(1)$.
\end{restatable}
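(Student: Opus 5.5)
The proof will mirror the introduce case (Lemma for introduce events): fix an optimal partial schedule consistent with the entry $(e_i,S,a)$ and determine which entry at $e_{i-1}$ it restricts to. Here $Z_{e_i}=Z_{e_{i-1}}\setminus\{a'\}$, so $S\subseteq Z_{e_i}$ and $a\in S$, $a\neq a'$. Since $e_i=f_{a'}$, the only handover that can happen at $e_i$ is one from $a'$; no agent starts here, because ties put start points before end points. We then sort the optimal restriction $\opt_{-i}$ by the role of $a'$, giving exactly the three options listed before the lemma.

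\textbf{Case 1: $a'$ carried the package from $e_{i-1}$ to $e_i$ and hands it to $a$.} Then $a'\in S_{-i}^{\opt}$. The agent $a\in Z_{e_{i-1}}$ must already have been marked used, since it receives the package now and its start point lies at or before $e_{i-1}$. Because of the handover restriction, $a$ cannot have been a carrier earlier; this is consistent with the convention that being in $S$ means ``used or designated''. The cost is $D[e_{i-1},S\cup\{a'\},a']+d_{a'}(e_{i-1},e_i)$.

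\textbf{Case 2: $a'$ was used earlier but is not the carrier at $e_{i-1}$.} Then no handover occurs at $e_i$, the carrier $a$ is unchanged, and the cost is $D[e_{i-1},S\cup\{a'\},a]+d_a(e_{i-1},e_i)$.

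\textbf{Case 3: $a'$ never carried the package.} The cost is $D[e_{i-1},S,a]+d_a(e_{i-1},e_i)$.

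Each term is realized by a valid schedule, since the previous entries are assumed correct. Conversely, the optimal restriction falls into one of the three cases, so taking the minimum of the three expressions gives $D[e_i,S,a]$.

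The running time is $\OO(1)$: there are three lookups and one stored value $d$ each, with no iteration over $S$, because the previous carrier is forced to be $a'$ (Case 1) or $a$ (Cases 2 and 3).

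The main obstacle is the bookkeeping in Case 1, namely the semantics of ``$a\in S$''. In the introduce lemma an agent enters $S$ exactly when it becomes the carrier. I therefore have to make sure that in Case 1 the corresponding predecessor entry uses $S\setminus\{a\}\cup\{a'\}$ rather than $S\cup\{a'\}$. In other words, $a$ is added at this forget event as the new carrier, symmetric to the introduce case, so the first term should really read $D[e_{i-1},(S\setminus\{a\})\cup\{a'\},a']+d_{a'}(e_{i-1},e_i)$.

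I would fix this convention explicitly and verify that an agent never re-enters after leaving the carrier role, which is justified by the single-pickup property of Erlebach et al. That property also guarantees that an agent in $S$ that is not the current carrier will never be chosen again, so the recurrence never produces a schedule that reuses an agent.
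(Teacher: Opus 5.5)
Your proposal is correct and is essentially the paper's proof. It uses the same three-way split on whether the forgotten agent $a'$ was never used, was used earlier, or is the current carrier handing over to $a$, with the same three recurrence terms and the same $\OO(1)$ bound; your corrected predecessor set $(S\setminus\{a\})\cup\{a'\}$ in the handover case is exactly the paper's $(S\cup\{a'\})\setminus\{a\}$, so just delete the first, contradictory version of that case, which claims $a$ is already marked used.
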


\begin{proof}
    Let $a''$ be the package carrier that held the package between event point $e_{i-1}$ and $e_i$.
    First assume that $a'$ was never the carrier of the package in the optimal solution, i.e., $a'\notin S^{\opt}$.
    By our observation about package exchanges this means that no handover at event point $e_i$ takes place, so we have $a''=a$ and the currently assumed package carrier $a$ also held the package at event point $e_{i-1}$, i.e., $a_{i-1}^{\opt}=a_i^{\opt}=a$.
    Between $e_{i-1}$ and $e_i$, this agent $a$ incurred cost $d_a(e_{i-1},e_i)$. 
    Since we have $\opt_{-i}=((e_1,a_1^{\opt}),\dots,(e_{i-1},a_{i-1}^{\opt})=a)$ and $Z_{e_{i-1}}\cap \opt_{-i}=S$, the solution $\opt_{-i}$ must also be optimal for the instance $(e_{i-1},S,a)$ with delivery time $D[e_{i-1},S,a]$.
    Therefore, the final delivery time for the schedule $\opt$ must correspond to $D[e_{i-1},S,a] + d_a(e_{i-1},e_i)$.    

    Now assume that $a'$ was used at some point as a package carrier, but was not the recent package carrier $a''$.
    By the same reason as in the case before we have $a''=a$.
    Again, we incur cost $d_a(e_{i-1},e_i)$ between event points $e_{i-1}$ and $e_i$.
    Also, since $a'$ was used at some point, we have $Z_{e_{i-1}}\cap \opt_{-i}=S\cup \{a'\}$, and since $a''$ was the recent carrier, the solution $\opt_{-i}$ must be optimal for the instance $(e_{i-1}, S\cup \{a'\}, a)$ with delivery time $D[e_{i-1}, S\cup \{a'\}, a] + d_a(e_{i-1},e_i)$
 
    At last, assume that $a'$ was the very recent package carrier, i.e., $a''=a'=a_{i-1}^{\opt}$.
    In this case the optimal solution incurred travel time $d_{a'}(e_{i-1},e_i)$ between event points $e_{i-1}$ and $e_i$.
    Additionally, agent $a$ could not have been used before, i.e., $a\notin Z_{e_{i-1}}\cap S_{-i}^{\opt}$.
    This means that the solution $\opt_{-i}=((e_1,a_1^{\opt}),\dots,(e_{i-1},a_{i-1}^{\opt})=a')$
    with $Z_{e_{i-1}}\cap S^{\opt}=(S\cup \{a'\})\setminus \{a\}$ must be optimal for the instance $(e_{i-1}, S\cup \{a'\}, a')$.
    Therefore, the optimal solution incurred travel time $D[e_{i-1}, S\cup \{a'\}, a'] + d_{a'}(e_{i-1},e_i)$.

    Since only one of these three cases can happen in the optimal solution, we simply choose the variant with the smallest incurred cost, i.e., we set    
    \begin{align*}   
    D[e_i,S,a]=& \min \{ \min \{D[e_{i-1}, S, a], D[e_{i-1}, S\cup\{a'\}, a]\} + d_a(e_{i-1},e_i), \\
    &D[e_{i-1}, (S \cup \{a'\})\setminus \{a\}, a'] + d_{a'}(e_{i-1},e_i) \}.
    \end{align*}   
    We can see that each of these variants corresponds to a feasible drone schedule such that we arrive at event point $e_i$ and continue with agent $a$.
    Since each of the variants can be computed in $\OO(1)$, the runtime follows.
\end{proof}

\thmlinefpt*

\begin{proof}
    The correctness follows from the correctness of the different types of events and that the optimal solution must be stored in some entry $D[e_j,S\subseteq Z_{e_j}, a]$ for some final end point event $e_j=e_{2k}$.    
    Since for every type of event we have runtime at most $\OO(\tn)$ and the set of instances at an event point $e_i$ is upper bounded by $2^{\tn}\cdot \tn$ and $2k$ event points exist, we get a final runtime of $\OO(2^{\tn}\cdot \tn^2\cdot k) =\OO(2^{\tw{}}\cdot \tw{}^2 \cdot k)$. We note that technically the entire path needs to be read at least once which yields an additional summand of $n$. However, we omit it as it is unrealistic that it is the dominating term.
\end{proof}
\section{Fixed-Parameter Tractability on General Graphs}\label{sec:graph}
Before designing an FPT algorithm on general graphs, we argue that, w.l.o.g., the intersection graph can be assumed to be simple -- i.e., agents intersect at most once. Any instance can be transformed accordingly, increasing degree and treewidth by only a constant.

\subsection{Unique Intersections between Agents}

Given any DDT instance $(G, (s, t), A)$, we now create an instance $(G', (s', t'), A')$ that has equivalent schedules, but its intersection graph $G_I'$ is simple and therefore any pair of agents overlap in at most one vertex.
Each agent $a \in A$ originally moves on a (possibly overlapping) subgraph $G_a$ of $G$.
To guarantee that no pair of agents share multiple vertices, we create a \emph{disjoint copy} of every $G_a$ and relabel each vertex $u$ as $(u,a)$.
The $k$ original agents are then confined to their respective copies $G_a'$.  
Whenever two subgraphs $G_a$ and $G_b$ share a vertex $u$ in $G$, the transformed graph $G'$ now contains two distinct copies, i.e., $(u,a)$ and $(u,b)$. 
We connect these copies of the same vertex by edges (of length 0) and introduce a dedicated \emph{auxiliary agent} $h_{u}$ with speed $1$ and movement area restricted to all those copies.
Thus, the package can be transferred instantly between any two copies of the same original vertex, while original agents share no common vertex.
The resulting instance preserves all feasible schedules with the same delivery time as the original one but it also satisfies the new constraint that any pair of agents shares at most one vertex, as any pair of original agents $a', b'$ does not share a common vertex in $G'$, neither does any pair of helping agents $h_{u}, h_{u'}$, and agents $a'$ and $h_{u}$ may only intersect in $(u, a)$.
\begin{restatable}{lemma}{lemmaTransfo}\label{thm:unuqueinter}
    For any DDT instance $(G, (s, t), A)$, there exists an equivalent instance $(G', (s', t'), A')$, where $\tw{}(G_I') \leq \tw{}(G_I)+1$, $\Delta(G_I') \leq \Delta(G_I)+1$ and $G_I'$ is simple.
\end{restatable}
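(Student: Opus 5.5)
We prove the three properties of the construction just described separately: equivalence of schedules, simplicity of $G_I'$, and the two parameter bounds. We create an auxiliary agent $h_u$ only for vertices $u$ with $|B_u|\ge 2$, and additionally for $s$ and $t$. We set $s'=(s,a)$ and $t'=(t,b)$ for arbitrary $a\in B_s$, $b\in B_t$. Since $h_s$ and $h_t$ can move between all copies of $s$ (resp.\ $t$) at zero cost, this choice does not matter. Zero edge lengths are allowed because $\ell$ maps to $\mathbb{R}_{\ge 0}$.

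\textbf{Equivalence.} Given a schedule for $(G,(s,t),A)$, we let every agent $a$ perform its trip on the copy $G_a'$, so $(u,v)$ is replaced by $((u,a),(v,a))$ with unchanged travel time. A handover from $a$ to $b$ at $u$ becomes three steps: $a$ hands the package to $h_u$ at $(u,a)$, $h_u$ crosses the length-$0$ edge to $(u,b)$, and $h_u$ hands it to $b$. All three happen instantaneously, so the delivery time is preserved. The start and end of the schedule are patched in the same way using $h_s$ and $h_t$. Conversely, take a schedule for $(G',(s',t'),A')$. Auxiliary agents only traverse length-$0$ edges between copies of the same original vertex. Projecting every vertex $(u,a)$ to $u$ and deleting all auxiliary moves therefore yields a valid schedule in $G$: a carrier change $a\to h_u\to\dots\to b$ becomes a direct handover from $a$ to $b$ at $u$, and the delivery time is unchanged. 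Hence optimal values coincide.

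\textbf{Simplicity and degree.} Two original agents have disjoint copies, so they share no vertex. The areas of $h_u$ and $h_{u'}$ are disjoint for $u\neq u'$. Finally, $a'$ and $h_u$ share exactly $(u,a)$ when $a\in B_u$. Thus $G_I'$ is simple.

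For the degree bound, consider first an original agent $a$. Its neighbours in $G_I'$ are the $h_u$ with $u\in V_a$ shared with some other agent (plus possibly $h_s,h_t$). Each such $u$ contributes at least one parallel edge to $\deg_{G_I}(a)$, so $\deg_{G_I'}(a)\le \deg_{G_I}(a)$; the possible extra neighbours $h_s,h_t$ are absorbed by noting that $s$ and $t$ also lie in some agent's area, and at worst this costs $+1$. Next consider an auxiliary agent $h_u$. It has degree $|B_u|$. The agents in $B_u$ pairwise share $u$, so $B_u$ is a clique in $G_I$, and hence $|B_u|-1\le\Delta(G_I)$. This gives $\Delta(G_I')\le\Delta(G_I)+1$. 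I would double-check the $s,t$ corner case here, possibly by creating $h_s,h_t$ only when they are needed.

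\textbf{Treewidth (main step).} The treewidth of a multigraph equals that of its underlying simple graph. Let $\mathcal{T}$ be an optimal tree decomposition of $G_I$. Since $B_u$ is a clique, the standard clique-containment property of tree decompositions gives a bag $X_t\supseteq B_u$. For each $h_u$ we attach a new leaf bag $B_u\cup\{h_u\}$ to $t$. This keeps all three decomposition conditions: $h_u$ appears in a single bag, and each edge $\{a,h_u\}$ is covered by the new bag. Each new bag has size at most $\tw{}+2$. The original agents remain in their original bags, and $G_I'$ has no edges between original agents, so the remaining conditions hold trivially. Therefore $\tw{}(G_I')\le\tw{}(G_I)+1$. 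I expect the only subtle point of the proof to be invoking the clique-containment lemma correctly, together with the bookkeeping for $s$ and $t$ above.
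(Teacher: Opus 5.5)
Your proof follows the paper's argument essentially step for step. It uses the same disjoint-copy construction with zero-length edges and helpers $h_u$, and the same projection argument for equivalence. The degree counts are the same: $\deg(h_u)=|B_u|\le\Delta(G_I)+1$ via the clique $B_u$, and $\deg(a')\le\deg_{G_I}(a)$ using multigraph degrees. The treewidth bound is also the same, attaching a leaf bag $B_u\cup\{h_u\}$ to a bag containing $B_u$.

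Your one deviation, creating $h_s,h_t$ unconditionally, is unnecessary and can break the degree bound. If $B_s=B_t=\{a\}$ with $s\neq t$, then $a'$ gains two neighbours not accounted for in $G_I$, so "at worst $+1$" is false. The fix you suggest yourself is exactly the paper's choice: create $h_u$ only when $|B_u|\ge 2$. When $|B_s|=1$, the copy $s'$ of $s$ is unique anyway.
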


\begin{proof}
    Formally, we define
\[V(G') = \{(u, a) \mid u \in V(G_a), a \in A \}\] and \[E(G') = \{\{(u, a), (w,a)\} \mid \{u, w\} \in E(G_a), a \in A \} \cup \{\{(u, a), (u, b)\} \mid u \in V(G_a) \cap V(G_b) \}. \]
For edges of the form $\{(u, a), (w, a)\}$ we define the length as $\ell(u, w)$, corresponding to the original edges in $G_a$. For edges of the form $\{(u, a), (u, b)\}$ -- representing transfers between agents at a shared vertex -- we define the length as $0$ to make instantaneous transfers possible.
For each agent $a \in A$ we define an agent $a'$ with $v_{a'} = v_{a}$ and 
\[G'_{a'} = (\{(u, a) \mid u \in V(G_a)\}, \{\{(u, a), (w, a)\} \mid \{u, w\} \in E(G_a) \}).\]
For each $u \in V(G)$ with at least two agents in its subgraph (i.e. $|B_u|\geq 2$), we introduce an helping agents $h_{u}$. We define the speed $v_{h_u} = 1$, the vertex set $V(G'_{h_{u}}) = \{(u, a)\mid u \in V(G_a), a \in A \}$ and the edge set $E(G'_{h_{u}}) = \{\{u, w\} \mid u, w \in V(G'_{h_{u}}) \}$, therefore we have a complete subgraph. Thus, $A'$ consists of all transformed agents $a'$ corresponding to agents $a \in A$, together with all auxiliary agents. For $s'$, we select an arbitrary node $(u, a) \in V(G')$ such that $u = s$; analogously for $t'$.

We now argue that $tw(G') \leq tw(G)+1$. Given a tree decomposition of $G$, we construct a tree decomposition of $G'$ by incorporating the new helping agents. Each helping agent $h_u$ is connected to every agent $a'$ such that $u \in V(G_a)$. Since all such agents share a node in $G$, they must form a clique in $G_I$. Consequently, there exists a bag in the decomposition of $G$ that contains all of them. We can attach a new bag to the decomposition that includes this clique along with the new agent $h_u$. This ensures that both endpoints of all edges incident to $h_u$ are contained within the bag, and the remaining conditions of a tree decomposition are trivially satisfied. The size of this new bag is at most one greater than that of the original bag containing the clique, and hence $tw(G') \leq tw(G) + 1$ follows.

Furthermore, we show that $\Delta(G_I') \leq \Delta(G_I)+1$. We first establish that $|\delta_{G_I'}(h_u)| \leq \Delta(G_I)+1$ for any auxiliary agent $h_u$. By construction, we have
$
|\delta_{G_I'}(h_u)| = |\{a \in A \mid u \in V(G_a)\}| = |B_u|,
$
and we observe that the subgraph of $G_I$ induced by $B_u$ forms a clique (of size $|B_u|$), as all these agents share the vertex $u$. We chose some agent $b \in B_u$ and can observe that
$
|\delta_{G_I}(b)| \geq |B_u|-1.
$
Therefore, $|\delta_{G_I'}(h_u)| \leq |B_u| \leq |\delta_{G_I}(b)|+1 \leq \Delta(G_I)+1$.

Next, we prove that $|\delta_{G_I'}(a')| \leq \Delta(G_I)+1$ for each agent $a'$. We have
\begin{align*}
|\delta_{G_I'}(a')| =& |\{h_u \mid u \in V(G_a) \cap V(G_b), b \in A,\ a \neq b\}| \\=&|\{u \in V(G_a) \cap V(G_b) \mid b \in A,\ a \neq b\}|.
\end{align*}
Each such $u$ appears in some $G_b$ for $b \neq a$, so the number of such nodes $u$ is bounded by the total number of edges in $G_I$ as there may be intersections with several other agents at the same vertex, so
\begin{align*}    
&|\{u \in V(G_a) \cap V(G_b) \mid b \in A,\ b \neq a\}| \\
\leq &|\{(u, a) \mid b \neq a,\ u \in V(G_a) \cap V(G_b), b \in A\}| \\
=& |\delta_{G_I}(a)|.
\end{align*}
Thus,
$
|\delta_{G_I'}(a')| \leq |\delta_{G_I}(a)|
\leq \Delta(G_I)$
and in total the claim $\Delta(G_I') \leq \Delta(G_I)+1$ follows.

As argued the intersections are unique, so the intersection graph is simple. We now argue that the transformed instance on $G'$ is equivalent to the original instance on $G$ with respect to schedule feasibility and duration.

Given a schedule in $G$, we construct a schedule in $G'$ of the same duration. Each agent $a \in A$ is replaced by $a' \in A'$, moving in the isomorphic subgraph $G'_a$, so the movement can be reused directly.
Whenever a package is transferred between agents $a$ and $b$ at vertex $u \in V(G)$, we can use the helping agent $h_u$ in $G'$. Since $G'_{h_u}$ is a clique over all $(u, a)$, $h_u$ can transfer the package from $(u, a)$ to $(u, b)$ in 0 time, so the schedule remains valid and of the same duration.

Given a feasible schedule in $G'$, we map each agent $a'$ back to its original $a$ in $G$ using the isomorphism between $G'_a$ and $G_a$.

Each transfer via $h_u$ -- from $(u, a)$ to $h_u$ to $(u, b)$ -- corresponds to a direct transfer between $a$ and $b$ at $u$ in $G$. Since the agents must both be present at $u$ in $G'$ for this to occur, the direct transfer in $G$ is valid.

Therefore, every schedule in $G$ maps to one in $G'$ of the same duration, and vice versa. Hence, the two instances are equivalent in feasibility and duration.
\end{proof}

\subsection{Tree Decomposition Algorithm for Unique Intersections}

Let $(G, (s,t), A)$ be a drone delivery instance in which each pair of agents $a,a'\in A$ intersects in at most one vertex.
For an agent $u\in A$, we define $N_u:=\{v\in V(G_I) \mid \{u,v\}\in E(G_I)\}$.
Let $D:A\times A\rightarrow V(G) \cup \{-\}$ be a function that returns for two agents $a_i,a_j$ their unique intersection point $V_i\cap V_j$ if it exists and $-$ otherwise.
Let $G_I$ be the intersection graph, and assume we have a nice tree decomposition $\mathcal{T}=(\mathbb{T},\{X_t\}_{t\in V(\mathbb{T})})$ of $G_I$ of width \tw{} and maximum degree $\dm=\max_{u\in V(G_I)}\vert N_u \vert$.  

We now design a dynamic program over $\mathcal{T}$ that is parameterized by $\tw{}$ and $\dm{}$ and computes the optimal agent tour. 
The algorithm is inspired by standard treewidth-based DP techniques for TSP\footnote{See e.g. \url{http://www.cs.bme.hu/~dmarx/papers/marx-warsaw-fpt2}, page 17.}.
For each node $t\in V(\mathbb{T})$, we process the subgraph $G_I[t]=(V_t,E_t)$.

A delivery tour can be modeled as an ordered sequence $ (a_1,a_2,\dots, a_b)$ of agents.
Each contiguous subsequence $A_{i,j}=(a_i,\dots,a_j)$ induces a valid subpath in $G_t$,
if the corresponding edges $\{a_l,a_{l+1}\}\in E(G_t)$ for all $l\in [i, j-1]$, and $a_i,a_j\in X_t$.
Intermediate agents $a_l$ with $i<l<j$ incur cost $d_{a_l}(D(a_{l-1}, a_l), D(a_l,a_{l+1}))$.

However, to determine the cost of agents $a_i,a_j$ we require knowledge about the corresponding unknown preceding and succeeding agents to determine entry and exit points. 
Thus, the DP state additionally stores for the boundary agents in $X_t$ their assumed predecessor and successor.
This allows consistent cost computation when merging subpaths at join nodes.

We slightly modify the input instance to streamline boundary cases: we add two artificial agents $a_s$ and $a_t$, each covering only vertex $s$ and $t$ respectively, and having zero speed.
The unique intersection property extends naturally to these agents.
We now additionally require any valid delivery tour to start with $a_s$ and end with $a_t$.
Both agents are required to be always contained in any bag, increasing the treewidth by exactly 2.
Note that the actual delivery can still start at $s$, since $a_s$ can immediately hand over the package at no cost.

To simplify boundary handling, we also introduce auxiliary agents $a_s'$ and $a_t'$, each identical to $a_s$ and $a_t$, but always assumed to lie outside the tree decomposition (i.e., they are never contained in any bag).
A delivery tour will now implicitly require the tour to start with $a_s'$ and end with $a_t'$. 
Together, this guarantees that for each boundary agent,
which always includes $a_s$ and $a_t$, its predecessor and successor are explicitly known, enabling us to compute exact cost for partial solutions.

Each DP state $D[t,B_t^0,B_t^1,B_t^2,M,\hat{A}, \hat{Z}]$ represents the optimal forest of paths in subgraph $G_I[t]$ that connects agents according to the following:
\begin{itemize}
    \item $B_t^i\subseteq X_t$ for $i\in \{0,1,2\}$ contains all agents with degree exactly $i$
    \item $M\subseteq B_t^1\times B_t^1$ encodes directed pairwise path endpoints as a matching 
    \item $\hat{A}$ and $\hat{Z}$ assign known predecessor/successor agents to agents in $B_t^1$:
    \begin{itemize}
        \item $\hat{A}$ maps to a neighbor agent s.t. the corresponding edge is not contained in $E_t$
        \item $\hat{Z}$ maps to the neighbor s.t. the corresponding edge is contained in $E_t$
    \end{itemize}
\end{itemize}
Each path $P_j$ in a feasible forest $F$ must satisfy the following conditions:
\begin{itemize}
    \item The endpoints of $P_j$ lie in $B_t^1$ and every agent $a\in B_t^i$ has degree $i$ in $F$
    \item For every pair $(a,a')\in M$, there must be a path $P_j$ in $F$ with both of these as endpoints
\end{itemize}
The cost of a forest $F$ is then evaluated using the previously mentioned reasoning of computing the cost of every individual agent measured by its two neighboring agents in the corresponding sequence.

For the root node $r$ the DP entry $D[r,\emptyset,\{a_s,a_t\},\emptyset, \{(a_s,a_t),(a_t,a_t')\},\hat{Z}]$ captures the optimal path from $a_s$ to $a_t$, if $\hat{Z}$ correctly encodes the adjacent agents to $a_s$ and $a_t$ in an optimal sequence.
By iterating over all valid candidates for the neighbors $u\in N_{a_s}\setminus \{a_s'\}$ and $v\in N_{a_t}\setminus \{a_t'\}$ we determine the optimal solution.

To compute the delivery cost of a sequence of agents, we use the predecessor/successor assignment $\hat{A}$.
For an agent $a$ of degree 1 we write $\hat{A}(a)$ to denote the predecessor (if $a$ is the start of a path) or successor (if $a$ is the end), and its other adjacent agent by $\hat{Z}(a)$ and the partner of $a$ in the matching $M$ by $M(a)$.
We will refer to $\hat{A}(a)$ as the predecessor/successor partner (or \textit{psp} for short) and to $\hat{Z}(a)$ as the next/previous interior partner (or \textit{npip}).

For a feasible forest $S=\{S_1,\dots,S_b\}\subseteq E(G_I[t])$, each component $S_i=(a_1,\dots,a_\ell)$ corresponds to a path of agents starting and ending at agents in $B_t^1$.
Using $\hat{A}$, we extend each such path to the sequence $\hat{S}_i=(\hat{a}_0,a_1,\dots,a_\ell,\hat{a}_{\ell+1})$, where $\hat{a}_0=\hat{A}(a_1)$ and $\hat{a}_{\ell+1}=\hat{A}(a_\ell)$.
These agents define the handover points at the start and end of the path that involves agents $a_1$ and $a_\ell$, which is necessary to compute the full cost contribution of each agent.

The cost of the extended sequence $\hat{S}_i$ is given by

$c(\hat{S}_i) = \sum_{j=1}^{l_i} d_{\hat{a}_j}(D(\hat{a}_{j-1},\hat{a}_j), D(\hat{a}_j,\hat{a}_{j+1})).$
Setting the predecessor of $a_s$ as $a_s'$ and the successor of $a_t$ as $a_t'$ insures sequences starting at $a_s$ or ending at $a_t$ are correctly computed as well. In this case we will incur an additional cost of 0, since these agents must immediately handover the package at their respective vertex.

The total cost of the solution $S$ is the sum over all component sequences, i.e.,
$
    c(S):=\sum_{i=1}^b c(\hat{S}_i)
$.
In the following we write $\hat{A}(S_i)$ for the extended sequence $\hat{S}_i$.
Finally, for any solution $S\subseteq E(G_I)$ and node $t\in V(\mathbb{T})$ we define $S[t]:=S\cap E_t$ and say that $S$ \textit{induces} the instance $(t,B_t^0,B_t^1,B_t^2,M,\hat{A},\hat{Z})$ if 
\begin{itemize}
    \item each $a\in B_t^i$ has degree $i$ in $S_t$,
    \item each pair $(u,v)\in M$ is connected by a path in $S[t]$ with $\hat{A}(u),\hat{A}(v)$ as their psp agents, and
    \item for each npip agent $a$ on a path, the adjacent agent corresponds to $\hat{Z}(a)$.
\end{itemize}
For an example instance and corresponding agent degrees, see \cref{fig:example_instance_1}.
In the following arguments we will for simplicity assume that any DP entry contains the actual forest. 
If each DP entry only contains its cost one can, by simple recursive pointer-logic, compute the optimal solution in the end recursively.
Therefore, we will argue correctness of each entry by considering the actual forest, but derive the computational runtime as if only the cost is saved in every DP entry.

\begin{figure}
    \centering
    \includegraphics[width=0.65\linewidth]{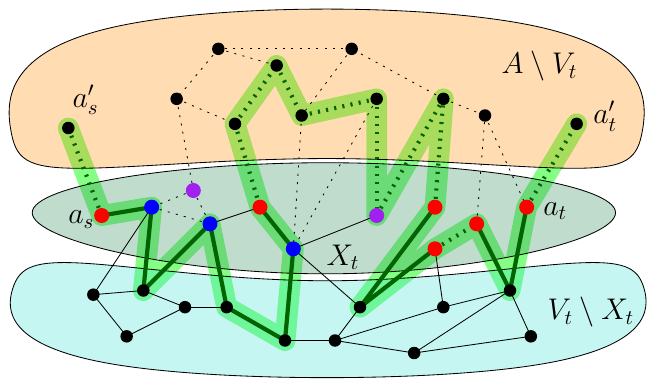}
    \caption{Example instance $G_I$ with corresponding sets of intersections between agents. Agents $a_s,a_t$ are always contained in $X_t$ and $a_s',a_t'$ are always contained in $A\setminus V_t$. Possible solution ordering of agents shown in green. Red agents have degree $1$ in $V_t$, blue agents have degree $2$ and violet agents degree $0$. Dashed lines correspond to edges not introduced so far.}
    \label{fig:example_instance_1}
\end{figure}

The following lemma shows that indeed it is enough to prove correctness for every subinstance in the DP to derive the optimal solution. The proof is deferred to the appendix.
\begin{restatable}{lemma}{lemmaOpt}
    For every node $t\in V(\mathbb{T})$ and optimal tour $\opt\subseteq E(G_I)$ that starts with agent $a_s$ and ends with $a_t$, the partial solution $\opt[t]$ is optimal for the induced instance.\label{lem:opttw}
\end{restatable}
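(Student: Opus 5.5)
The natural route is an exchange (cut-and-paste) argument. The key property to use is that the cost of a tour decomposes additively over agents. Each agent's contribution $d_{a}(D(\text{pred}(a),a),D(a,\text{succ}(a)))$ depends only on its two neighbours in the sequence. For boundary agents in $B_t^1$, both neighbours are fixed by the instance, via $\hat{A}$ (the psp) and $\hat{Z}$ (the npip). The plan is to fix a node $t$ and an optimal tour $\opt$ from $a_s$ to $a_t$, and to let $I_t=(t,B_t^0,B_t^1,B_t^2,M,\hat{A},\hat{Z})$ be the instance induced by $\opt$. We then suppose, for contradiction, that some feasible forest $F$ for $I_t$ satisfies $c(F)<c(\opt[t])$.

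First, split the cost of $\opt$ into two parts. One part is the contribution of agents in $V_t\setminus X_t$ together with the boundary agents of degree $1$ or $2$ in $X_t$. The other part is the contribution of all remaining agents, i.e.\ those outside $V_t$ and those of $X_t$ whose incident edges are all outside $E_t$. By the tree-decomposition properties, every edge of $G_I$ with an endpoint in $V_t\setminus X_t$ lies in $E_t$, since such a vertex has already been forgotten. Hence agents in $V_t\setminus X_t$ have both tour neighbours inside $G_t$, and their cost is determined by $\opt[t]$.

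A boundary agent $a\in B_t^1$ has one neighbour via an $E_t$-edge, namely $\hat{Z}(a)$, and one via an edge outside $E_t$, namely $\hat{A}(a)$. Its cost is therefore exactly the term accounted for in $c(\hat{A}(S_i))$. Agents in $B_t^2$ have both edges in $E_t$, so their cost is also internal. Consequently
\[
c(\opt)=c(\opt[t])+c(\opt\setminus E_t),
\]
where the second summand depends only on edges outside $E_t$ and on the fixed psp/npip data. One must check that no agent is counted twice; this holds because each agent's two incident tour edges are classified consistently.

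Next, form $S' := F\cup(\opt\setminus E_t)$ and show it is again a Hamiltonian-type simple path from $a_s$ to $a_t$ in $G_I$, with every agent used at most once. Degrees of vertices in $X_t$ add up correctly because $F$ and $\opt[t]$ agree on $B_t^0,B_t^1,B_t^2$. Vertices of $V_t\setminus X_t$ only touch $F$, and vertices outside $V_t$ only touch $\opt\setminus E_t$. For connectivity and the absence of cycles, the matching $M$ is the crucial ingredient: $F$ connects the same endpoint pairs as $\opt[t]$, so contracting each component path to an $M$-edge produces the same structure as $\opt$, namely a single path from $a_s$ to $a_t$ (using $a_s',a_t'$ as the outside anchors). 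Any components of $F$ that are cycles are excluded by the feasibility definition of forests.

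Finally, since the npip/psp data of the boundary agents coincide, each agent's cost in $S'$ equals its cost in the decomposition above, so $c(S')=c(F)+c(\opt\setminus E_t)<c(\opt)$, contradicting optimality. The main obstacle is the gluing step, in particular proving that $S'$ is a single simple path rather than a path plus cycles. I would handle this by arguing via the matching $M$ and the directedness of its pairs: the order in which the outside part visits the $M$-pairs is unchanged, so the same traversal order applies to $S'$.
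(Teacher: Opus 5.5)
Your proposal is correct and follows essentially the same exchange argument as the paper. Both proofs decompose $c(\opt)$ additively, with the boundary costs fixed by $\hat{A}$ and $\hat{Z}$, replace $\opt[t]$ by a strictly cheaper feasible forest glued to $\opt\setminus E_t$, and reach a contradiction. The only difference is how you show the glued object is a single simple path: you contract the $M$-paths, whereas the paper argues inductively along the alternating decomposition $(A_0^2,A_1^1,A_1^2,\dots)$. Your version is, if anything, the more explicit of the two on the no-cycle point.
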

\noindent
We now describe how to handle the different node types in the dynamic program.
We will assume that the instance to solve corresponds to a valid specification, otherwise we assign it cost $\infty$.
The proof of correctness for the different node types is deferred to the appendix.

\smallskip
\noindent
Leaf node: At a leaf node $t$, only $a_s,a_t$ are present and no edges exist. The only valid forest is the empty one with cost 0, so we set $D[t, B_t^0,B_t^1,B_t^2,M,\hat{A},\hat{Z}]=\{ \emptyset \}$.

\smallskip
\noindent
Introduce node: Let $t$ be an introduce node with child $t'$ and $X_t = X_{t'} \cup \{v\}$.  
Since $v$ has no incident edges at this point, it must have degree 0 in any valid solution, i.e., $v\in B_t^0$. 
Therefore $v$ does not influence any solution derived for $t'$ and we set $D[t,B_t^0,B_t^1,B_t^2,M,\hat{A},\hat{Z}] = D[t',B_t^0\setminus \{v\}, B_t^1, B_t^2, M, \hat{A},\hat{Z}]$.

\begin{restatable}{lemma}{lemmaIntro}
    Let node $t$ introduce an agent $v$, and let $t'$ be its child. Assume that for all valid combinations the sets $D[t',{D_t^0}',{D_t^1}',{D_t^2}',M', \hat{A}',\hat{Z}']$ have been computed, then we correctly compute $D[t,D_t^0,D_t^1,D_t^2,M,\hat{A},\hat{Z}]$ in time $\OO(1)$.
\end{restatable}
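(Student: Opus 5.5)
We establish a bijection between feasible forests for the child instance and feasible forests for the parent instance, which preserves cost. The DP value at $t$ is then simply copied from $t'$, so the running time is $\OO(1)$. Throughout, we use the notation of the statement, writing $D_t^i$ for the degree classes (called $B_t^i$ in the description).

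\textbf{Edge set and the forced degree of $v$.} Since $t$ introduces $v$ and the nice tree decomposition introduces edges only at introduce edge nodes, we have $E_t = E_{t'}$. In particular, no edge of $E_t$ is incident to $v$. Hence in every forest $F \subseteq E_t$ the agent $v$ has degree $0$.

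\textbf{Invalid specifications.} Consider a specification at $t$ with $v \in D_t^1 \cup D_t^2$. No forest can realize it, so it is invalid and receives $\infty$, consistent with our convention. Only the case $v \in D_t^0$ remains. In that case $v$ appears in no pair of $M$ and has no $\hat{A}$ or $\hat{Z}$ assignment, because these are only defined on $D_t^1$.

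\textbf{The bijection.} Fix a specification with $v \in D_t^0$, and let the child specification be $(t', D_t^0\setminus\{v\}, D_t^1, D_t^2, M, \hat{A}, \hat{Z})$. We check that a forest $F \subseteq E_t = E_{t'}$ satisfies the conditions at $t$ if and only if it satisfies them at $t'$.
\begin{itemize}
\item The degree conditions for agents of $X_{t'}$ are identical in both instances. The extra condition at $t$, namely $\deg_F(v)=0$, holds automatically.
\item The matching condition and the psp/npip conditions only concern agents of $D_t^1 \subseteq X_{t'}$, so they read identically.
\item The requirement that $\hat{A}$ maps to neighbours via edges not in $E_t$ is unchanged, since $E_t = E_{t'}$.
\end{itemize}

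\textbf{Cost preservation.} The cost $c(F)=\sum_i c(\hat{A}(F_i))$ depends only on the path components and their psp extensions. These are the same objects in both instances, so the cost is equal.

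\textbf{Conclusion.} The optimal forest at $t$ equals that at $t'$, which gives $D[t,\dots] = D[t', D_t^0\setminus\{v\}, \dots]$. This entry is assumed to have been computed correctly, so it can be looked up in $\OO(1)$.

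The only delicate point is the use of $E_t=E_{t'}$. It relies on the convention that edges are introduced in separate introduce edge nodes; if that convention failed, $v$ could have degree $1$ or $2$. We therefore explicitly invoke the assumption on introduce edge nodes stated in the preliminaries.
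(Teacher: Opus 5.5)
Your proof is correct and takes essentially the same route as the paper. In both, $v$ has no incident edge in $E_t$, so specifications with $v\notin D_t^0$ are infeasible, and otherwise feasibility and cost of forests coincide with the child entry $D[t',D_t^0\setminus\{v\},D_t^1,D_t^2,M,\hat{A},\hat{Z}]$, which is copied. Your explicit condition-by-condition check is slightly more careful than the paper's. One small point: the claim that no edge of $E_{t'}$ touches $v$ does not follow ``in particular'' from $E_t=E_{t'}$. It needs the subtree-connectivity property of the decomposition: $v\in X_t$ and $v\notin X_{t'}$ imply that $v$ lies in no bag below $t'$, so no edge incident to $v$ was introduced there.
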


\smallskip
\noindent
Forget node: 
Let $t$ be a forget node with child $t'$ and $X_t = X_{t'} \setminus \{v\}$.
Since $v\notin X_t$ and $v\neq a_s',a_t'$, it must have degree either $0$ or $2$ in any feasible solution, depending on whether $v$ was used in the respective optimal solution. 
Therefore, we check both corresponding entries at the child node $t'$ and choose the one with minimum cost.

\begin{restatable}{lemma}{lemmaForget}
    Let node $t \in V(\mathbb{T})$ remove some vertex $v$, and let $t'$ be its child. 
    Assume that for all valid combinations, the sets $D[t', {D_t^0}', {D_t^1}', {D_t^2}', M', \hat{A}',\hat{Z}']$ have been computed, then we correctly compute $D[t,B_t^0,B_t^1,B_t^2,M,\hat{A},\hat{Z}]$ in time $\OO(1)$.
\end{restatable}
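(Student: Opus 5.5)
We prove the forget-node lemma by showing that the recurrence
\[
D[t,B_t^0,B_t^1,B_t^2,M,\hat{A},\hat{Z}] = \min\bigl\{ D[t',B_t^0\cup\{v\},B_t^1,B_t^2,M,\hat{A},\hat{Z}],\ D[t',B_t^0,B_t^1,B_t^2\cup\{v\},M,\hat{A},\hat{Z}] \bigr\}
\]
is correct. Both entries at the child $t'$ have exactly the same specification as at $t$, except that $v$ is added to the degree-$0$ set or to the degree-$2$ set. Since $\hat A,\hat Z,M$ only concern agents in $B_t^1$, they pass through unchanged.

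The first step establishes a structural fact. In any forest induced by a solution at $t$, the forgotten agent $v$ cannot have degree $1$. Two facts make this work.
\begin{itemize}
\item $v$ never reappears in any bag above $t$, because the nodes containing $v$ form a connected subtree of $\mathbb{T}$.
\item Every edge of $G_I$ incident to $v$ has its introduce edge node directly before $v$ is forgotten. Hence all such edges already lie in $E_t=E_{t'}$.
\end{itemize}
So no further edges at $v$ can be added later. A degree-$1$ endpoint $v$ would have to be completed by an edge not yet in $E_t$, which is impossible. The only exception would be a path endpoint outside the bag, namely $a_s'$ or $a_t'$, but these are never in any bag. Also, $a_s,a_t$ are never forgotten. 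Therefore $v\in B_{t'}^0\cup B_{t'}^2$.

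The second step proves both inequalities between the left and right sides.

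For $\le$: take an optimal forest $F$ for either child entry. Since $E_t=E_{t'}$, the same $F$ is a feasible forest for the instance at $t$. Its paths, matching endpoints and psp/npip assignments are unchanged, because $v\notin B^1_{t'}$ means $v$ is not an endpoint of any path. Its cost is also unchanged, since the cost $c(\hat S_i)$ depends only on the extended sequences.

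For $\ge$: take an optimal forest $F$ for the instance at $t$. By the first step, $v$ has degree $0$ or $2$ in $F$. In the first case $F$ induces the first child instance, in the second case the second, in each case with identical cost.

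By Lemma~\ref{lem:opttw}, we may restrict attention to forests induced by a global optimum, so taking the minimum yields the correct value.

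The part that needs the most care is the psp/npip bookkeeping when $v$ has degree $2$. We must check that $v$ does not occur as the value $\hat A(a)$ or $\hat Z(a)$ of some $a\in B_t^1$ in a way that changes meaning after forgetting. For $\hat A$ this cannot happen: $\hat A(a)$ points to a neighbor via an edge not in $E_t$, and every edge at $v$ is already in $E_t$. For $\hat Z$ it is only a label of an adjacent agent, which remains consistent. The running time is $\OO(1)$, since we perform two table lookups and one comparison.
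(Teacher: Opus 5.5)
Your proposal is correct and matches the paper's proof. The forgotten agent $v$ can only have degree $0$ or $2$, so the entry at $t$ is the minimum of the two child entries with $v$ added to the degree-$0$ or degree-$2$ set and all other parameters unchanged, which takes $\OO(1)$ time. You are more explicit than the paper about excluding degree $1$ and about proving both inequalities; the one small imprecision is that, by the paper's convention, an edge $\{u,v\}$ is introduced just before the \emph{first} of its endpoints is forgotten, but this still puts every edge at $v$ into $E_{t'}=E_t$, since all bags containing $v$ lie in the subtree below $t$.
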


\smallskip
\noindent
Introduce edge node:
Let $t$ be an introduce-edge node with child $t'$ and let $e=\{u,v\}$ be the newly introduced edge. 
If $e$ is not used in the optimal solution, we reuse the solution from $t'$.
Otherwise, we consider the instance where the degrees of $u$ and $v$ are decreased by 1 and consider all valid choices for their next potential npip agent via $\hat{Z}$.
Additionally, we update $M$ accordingly to reflect all possible new pairings that are possible for solutions containing $e$.

\begin{restatable}{lemma}{lemmaIntroE}
Let node $t\in V(\mathbb{T})$ introduce an edge $e=\{u,v\}$, and let $t'$ be its child.
Assume that for all valid combinations, the sets $D[t',{B_t^0}', {B_t^1}', {B_t^2}', M', \hat{A}',\hat{Z}']$ have been computed, then we correctly compute $D[t,B_t^0,B_t^1,B_t^2,M, \hat{A}, \hat{Z}]$ in time 
$\OO((\tw{}+1)\cdot \dm^2)$.
\end{restatable}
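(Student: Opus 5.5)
The plan mirrors the proofs for the other node types: first write down an explicit recurrence for the introduce-edge node, then argue both directions of correctness. Soundness means every value the recurrence produces is the cost of some feasible forest for $t$ consistent with the specification. Optimality means that for the optimal forest $F$ of the instance at $t$, the part $F\setminus\{e\}$ is an optimal forest for a specific child instance that the recurrence examines. The second direction rests on Lemma~\ref{lem:opttw}, applied at $t'$ instead of $t$.

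The case split is on whether $e=\{u,v\}$ belongs to the forest.

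\emph{Case 1: $e\notin F$.} Degrees, matching, psp and npip assignments are unchanged, except that an agent's $\hat{Z}$ cannot point along $e$. We therefore take $D[t',B_t^0,B_t^1,B_t^2,M,\hat{A},\hat{Z}]$ whenever the specification does not use $e$ as an npip edge.

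\emph{Case 2: $e\in F$.} Removing $e$ lowers the degrees of $u$ and $v$ by one. Each of $u$ and $v$ has degree $1$ or $2$ at $t$, so it moves from $B^1$ to $B^0$ or from $B^2$ to $B^1$. The paths are affected as follows.
\begin{itemize}
\item If $u$ drops to degree $0$, it was a path endpoint whose npip was $v$. In the child it is an isolated agent and carries no psp or npip.
\item If $u$ drops from degree $2$ to degree $1$, it becomes a new endpoint in the child. Its psp in the child must be $v$, since the edge $e$ is not in $E_{t'}$. Its npip in the child is its other forest neighbour, which is unknown and is guessed among the at most $\dm$ agents of $N_u$.
\item The matching $M'$ is obtained from $M$ by undoing the concatenation along $e$. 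If $u$ and $v$ both become endpoints, the paths ending at $u$ and at $v$ were merged into one. We recover the pairs $(x,u)$ and $(v,y)$ of $M'$ from the pair $(x,y)\in M$, trying both orientations and rejecting the case where $(u,v)$ itself is a matched pair, since that would create a cycle.
\item All other endpoints keep their values of $\hat{A}$ and $\hat{Z}$.
\end{itemize}
Taking the minimum over all guesses gives the value for Case 2.

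The cost function needs no extra correction term. The cost of each agent is already determined by its two neighbours in the extended sequence, and in the child those neighbours are encoded through $\hat{A}(u)=v$ and $\hat{A}(v)=u$. This is where the unique-intersection property (Lemma~\ref{thm:unuqueinter}) is used: $D(u,v)$ is a single well-defined handover point, so the psp/npip information fixes $d_u(\cdot,\cdot)$ and $d_v(\cdot,\cdot)$ exactly. We only have to check that the extended cost of the merged path equals the sum of the extended costs of the two child paths.

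For the running time, the guesses are the two npip partners, at most $\dm^2$ combinations, times the choice of which path endpoints in $M$ are merged, at most $\tw+1$ options. This gives $\OO((\tw+1)\cdot\dm^2)$.

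The main obstacle is the bookkeeping in Case 2: keeping $M'$, $\hat{A}'$ and $\hat{Z}'$ consistent (orientation, avoiding cycles, and the special agents $a_s,a_t$ with their fixed psp partners $a_s',a_t'$), and showing that the costs add up exactly under the split. I would handle this with a short exchange argument: any forest for $t$ containing $e$ splits uniquely into a child forest and $e$, and the two have equal extended cost.
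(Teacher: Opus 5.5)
Your case split, your bookkeeping for $M'$, $\hat A'$, $\hat Z'$ and your running-time count all agree with the paper. Your cost claim, however, is false, and it breaks the recurrence.

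You state that Case 2 needs no correction term, and that a forest for $t$ containing $e$ has the same extended cost as the child forest $F\setminus\{e\}$. This holds only when $u,v\in B_t^2$. Then both become endpoints in the child with $\hat A'(u)=v$ and $\hat A'(v)=u$, and their travel times are unchanged.

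If $u\in B_t^1$, then $e$ is the only edge of $F$ at $u$. In the child, $u$ lies in ${B_t^0}'$ and, as you note yourself, carries no psp or npip, so its travel time is not part of the child's cost at all. The forest at $t$ therefore costs exactly $d_u(D(\hat A(u),u),D(u,v))$ more than $F\setminus\{e\}$. When $u,v\in B_t^1$ (a single-edge path with $(u,v)\in M$), it costs $d_u(D(\hat A(u),u),D(u,v))+d_v(D(u,v),D(v,\hat A(v)))$ more. Taking the bare minimum over child entries drops these terms.

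This is not cosmetic. The introduce-edge node is the only place where an agent passes from degree $0$ to degree $1$. Leaves have cost $0$, and introduce-vertex, forget and join nodes only copy or combine child values. So under your recurrence no agent's travel time is ever added to the table, and the computed values are not delivery times.

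The fix is the one the paper uses.
\begin{itemize}
\item For $u\in B_t^1$, $v\in B_t^2$: take the child entry with $u$ moved to $B^0$, $v$ moved to $B^1$, $v$ replacing $u$ in its pair of $M$, $\hat A'(v)=u$ and a guessed $\hat Z'(v)$. Then add $d_u(D(\hat A(u),u),D(u,v))$.
\item For $u,v\in B_t^1$: take the entry with $u,v$ moved to $B^0$ and $(u,v)$ removed from $M$, and add both terms.
\end{itemize}
Your exchange argument then goes through. The added terms depend only on the specification, namely on $\hat A(u)$, $\hat A(v)$ and the fixed handover point $D(u,v)$, and not on the rest of the forest. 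Hence any cheaper child solution, together with $e$, yields a cheaper solution at $t$.
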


\smallskip
\noindent
Join node:
Let $t$ be a join node with children $t_1$ and $t_2$. 
To compute a an optimal solution, we combine compatible solutions from both children while ensuring degrees, matching, psp agents and npip agents correctly align, and take the solution with minimal cost.

\begin{restatable}{lemma}{lemmaJoin}\label{lem:twjoin}
    If node $t \in V(\mathbb{T})$ is a join node with two children $t_1,t_2 \in V(\mathbb{T})$ and we correctly computed all solutions for all child instances for nodes $t_1$ and $t_2$, then we correctly compute $D[t,B_t^0,B_t^1,B_t^2,M,\hat{A},\hat{Z}]$ in time $\OO((\tw{}+2)^{\OO(\dm{}+\tw{})})$.
\end{restatable}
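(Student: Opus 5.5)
The plan is to define the join recurrence explicitly, show it is both sound and complete, and then count the combinations it enumerates. At a join node $t$ with children $t_1,t_2$ we have $X_t=X_{t_1}=X_{t_2}$, and no edge of $E_t$ has both endpoints in $X_t$. Moreover, $E_{t_1}\cap E_{t_2}=\emptyset$, since every edge is introduced exactly once. Any forest $S[t]$ therefore splits uniquely into $S[t_1]\,\dot\cup\, S[t_2]$. For each target state $(t,B_t^0,B_t^1,B_t^2,M,\hat A,\hat Z)$ I would enumerate all pairs of child states whose degree functions add up. That is, for every $a\in X_t$ we need $\deg_1(a)+\deg_2(a)=\deg(a)$, where $\deg_i(a)$ is determined by which set $B_{t_i}^j$ contains $a$. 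The value of the target state is the minimum of $D[t_1,\cdot]+D[t_2,\cdot]-\text{(double counted boundary cost)}$ over all compatible pairs.

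Compatibility of the remaining components is defined as follows.

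\emph{Psp and npip.} Take an agent $a$ of degree $1$ in child $t_1$ whose path continues through $t_2$. Then its psp $\hat A_1(a)$ must equal the npip $\hat Z_2(a)$ in the other child, and symmetrically. For agents of total degree $1$ in $t$, the outside neighbour $\hat A(a)$ must agree with the psp recorded in whichever child does not use that edge. The interior neighbour $\hat Z(a)$ must come from the child that does use it.

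\emph{Matching.} Glue $M_1$ and $M_2$ along the shared boundary agents. Viewed as a graph on $X_t$, the union must be acyclic, and its maximal paths must have endpoints exactly as prescribed by $M$, including orientation.

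\emph{Cost.} A boundary agent $a$ of degree $2$ in $t$ built from degree-$1$ pieces in both children has its cost $d_a(D(\hat Z_1(a),a),D(a,\hat Z_2(a)))$ counted correctly in both children only if both children charge the same value. Here the psp/npip consistency ensures that each child's cost evaluation at $a$ already uses the true two neighbours. So each child charges exactly the true cost of $a$, and we subtract one copy for every boundary agent appearing with degree $\ge1$ in both children. Degree-$2$ agents in a single child, and all interior agents, are charged only once.

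For soundness and completeness I would argue as in the proof of Lemma~\ref{lem:opttw}. Any optimal $S[t]$ inducing the target state restricts to forests $S[t_1],S[t_2]$. These induce compatible child states, and by Lemma~\ref{lem:opttw} each restriction can be replaced by the optimal child solution without changing the interface. Conversely, gluing two compatible child forests gives a forest with the correct degrees, no cycles (ensured by the acyclicity check on the glued matching) and the correct endpoint pairing. Its cost is exactly the sum minus the double counts, because each agent's cost depends only on its two neighbours, and these are fixed by $\hat A,\hat Z$.

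The runtime bound comes from counting states. A child state consists of three degree sets ($3^{\tw+3}$ choices), a matching on at most $\tw+3$ elements ($(\tw+3)^{\OO(\tw)}$ choices), and psp/npip maps into neighbourhoods of size $\le\dm$ ($\dm^{\OO(\tw)}$ choices). This is at most $(\tw+2)^{\OO(\dm+\tw)}$, using $\dm^{\tw}\le(\tw+2)^{\OO(\dm+\tw)}$ (if $\dm\le\tw+2$ this is immediate; otherwise $\dm^{\tw}\le \dm^{\dm}$, and I would argue this is absorbed, or else state the bound as $\dm^{\OO(\tw)}$, which matches the stated form up to adjusting the exponent). Enumerating pairs squares this count, and each compatibility check is polynomial in $\tw$. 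The main obstacle is getting the cost accounting and cycle exclusion right when gluing. I expect the hardest point to be showing that psp/npip agreement makes each child's boundary charge coincide with the true charge, so that subtracting exactly one copy is correct.
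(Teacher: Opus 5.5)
Your proposal follows the paper's proof essentially step by step. You split the optimal forest along $E_{t_1}\,\dot\cup\,E_{t_2}$, require degree additivity and the cross-condition $\hat A_1(a)=\hat Z_2(a)$, $\hat Z_1(a)=\hat A_2(a)$ at agents of degree $1$ in both children, glue the matchings, subtract one copy of each doubly charged boundary agent, close with an exchange argument, and bound the time by the squared number of child states times a $\poly(\tw)$ check.

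Two remarks on the correctness part:
\begin{itemize}
\item Your explicit acyclicity requirement is cleaner than the paper's check. The paper only asks that the auxiliary multigraph be simple and that the pairs in $M$ be connected, and this does not by itself exclude a cycle through agents of parent-degree $2$.
\item You do not need Lemma~\ref{lem:opttw}, which concerns restrictions of a global tour. Three facts already suffice: $S[t_i]$ is feasible for its child state, the child entries are correct, and your soundness argument depends only on the interface.
\end{itemize}

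The one step that does not go through as written is the runtime absorption. Bounding $\Delta^{\tw}\le\Delta^{\Delta}$ does not help. For bounded $\tw$, the quantity $\Delta^{\Delta}=2^{\Delta\log_2\Delta}$ is not of the form $(\tw+2)^{\OO(\Delta+\tw)}=2^{\OO(\Delta)}$. Retreating to ``$\Delta^{\OO(\tw)}$'' does not prove the stated bound either.

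The inequality you need is nevertheless true, by two cases:
\begin{itemize}
\item If $\Delta\le(\tw+2)^2$, then $\Delta^{\tw}\le(\tw+2)^{2\tw}$.
\item Otherwise $\tw<\sqrt{\Delta}$, so $\Delta^{\tw}\le 2^{\sqrt{\Delta}\log_2\Delta}\le 2^{2\Delta}\le(\tw+2)^{2\Delta}$.
\end{itemize}
Hence $\Delta^{\OO(\tw)}\le(\tw+2)^{\OO(\Delta+\tw)}$. Your count of $\Delta^{\OO(\tw)}$ psp/npip maps is the correct one, and it is absorbed into the claimed bound. The paper writes $(\tw+1)^{\Delta}$ here, with base and exponent interchanged, so this inequality is genuinely needed to justify the stated running time.
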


Using these lemmata, we establish the final theorem.
\thmgraphfpt*
\begin{proof}
    First, assume that the intersection graph given by the DDT-SP instance is simple.
    The correctness follows immediately from the correctness of the different node types. 
    The entry $D[r,\{\},\{a_s,a_t\}, \{\}, \{(a_s,a_t)\},\{(a_s,a_s'),(a_t,a_t')\},\{(a_s,u),(a_t,v)\}]$ for some $u\in N_{a_s}$ and $v\in N_{a_t}$ must contain an optimal solution, if $u$ and $v$ correspond to some npip agents for $a_s$ and $a_t$ in an optimal solution.
    
    Let $\tw{}'$ be the treewidth of our original intersection graph instance and $\tw{}$ be the one of our modified instance.
    Since we include agent $a_s,a_t$ in every bag, we have $\tw{}=\tw{}'+2$
    The runtime is dominated by the computation of join nodes, being $\OO((\tw{}+2)^{\OO(\dm{}+\tw{})})$ for each entry.
    Each node has at most $\OO((3^{\tw{}+1})\cdot (\tw{}+1)^{\tw{}+1}\cdot (\tw{}+1)^{2\dm{}}$ many entries, so we have the same time complexity to compute all entries for join nodes.
    Since the number of nodes is bounded by $\OO(k\cdot \tw{})$, the total runtime is $\OO(k\cdot \tw{} \cdot (\tw{}+2)^{\OO(\dm{}+\tw{})})$.
    Preprocessing all distances $d_a(u,v)$ for all agents $a\in A$ and vertices $u,v\in V$ can be done in $\OO(k\cdot n^3)$, yielding an overall runtime of $\OO(n^3\cdot k + k\cdot \tw{} \cdot (\tw{}+2)^{\OO(\dm{}+\tw{})}) = f(\dm{},\tw{}')\cdot \poly(n,k)$.

    At last, the result also extends to the case that the intersection graph is non-simple, as by \cref{thm:unuqueinter} it can be transformed into such an instance while the number of agents increases to at most $n+k$ and the number of vertices increases to at most $n\cdot k$.
\end{proof}

\subsection{A Polynomial-Time Algorithm for Intersection Trees}\label{subsec:treegraph}

We now briefly describe an exact algorithm for computing optimal schedules when the intersection graph of the underlying graph forms a tree. 
While this may seem similar to the setting previously considered, it does not follow directly, as the treewidth may be constant, but the vertex degrees can be unbounded.
For this, we build a layered directed graph $G'$ whose \emph{layers} represent successive handovers:
\begin{itemize}
\item Layer $0$: single vertex $v_s$ representing the source $s$.
\item Layer $i\,(1\le i\le b)$: one vertex for every intersection point of the movement areas of $a_i$ and $a_{i+1}$ (for $i=b$ the region of $a_{b+1}$ is the sink region containing $t$).
\item Layer $b+1$: single vertex $v_t$ for the destination $t$.
\end{itemize}
Edges connect every vertex in layer $i-1$ to every vertex in layer $i$; the weight equals the travel time of agent $a_i$ between these two points.  
Thus each $s$–$t$ path in $G'$ selects exactly one handover point per layer and has length equal to the total delivery time for that sequence of drones.  
Conversely, any schedule that uses these agents in this particular order defines such a path.  
Hence, the shortest path in $G'$ yields the optimal schedule for the given order, which can easily be computed. 

\begin{restatable}{lemma}{lemmaDijkstra}\label{lemma:dijkstra}
Given any ordered sequence of agents $(a_1,\dots, a_b)$, we can compute the optimal route to deliver the package from $s$ to $t$ using the agents in this order in $\OO(b^2\cdot n ^2)$.
\end{restatable}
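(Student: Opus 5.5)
The plan is to reduce the problem to a single shortest-path computation in the layered DAG $G'$ described before the statement. First I fix the shape of a schedule that uses exactly the agents $a_1,\dots,a_b$ in this order. By Observation~\ref{obs:nowaiting}, and because each agent may be assumed to pick up and drop off the package once, such a schedule is determined by the handover vertices $h_1,\dots,h_{b-1}$, where $h_i \in V_{a_i}\cap V_{a_{i+1}}$. Its delivery time is
\[
d_{a_1}(s,h_1)+\sum_{i=2}^{b-1} d_{a_i}(h_{i-1},h_i)+d_{a_b}(h_{b-1},t).
\]
The justification is that each agent can be placed at its pickup vertex, and once it has the package the fastest it can do is take a shortest path in $G_a$ to the next handover point. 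No waiting time arises, since the next agent is already standing at the handover vertex.

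Next I build $G'$ with the following layers: layer $0$ is $\{s\}$, layer $i$ for $1\le i\le b-1$ is $V_{a_i}\cap V_{a_{i+1}}$, and layer $b$ is $\{t\}$. Between consecutive layers $i-1$ and $i$ I add all edges $(x,y)$ with weight $d_{a_i}(x,y)$, which may be $\infty$. I then show a bijection in both directions.

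\begin{itemize}
\item An $s$--$t$ path with finite length picks one vertex per layer. This gives a feasible schedule whose delivery time equals the path length: move $a_i$ along a shortest $x$--$y$ path in $G_{a_i}$, starting at the time the package arrives, and then hand over.
\item Conversely, any schedule with this agent order yields a path of length at most its delivery time. Each agent's carrying segment costs at least the shortest-path distance between its endpoints, and handovers are instantaneous.
\end{itemize}

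Hence the shortest path in $G'$ is exactly the optimal delivery time for the given order, and backtracking recovers the handover points.

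For the running time, the distances $d_a(u,v)$ are precomputed as stated in the preliminaries, so each edge weight is available in $\OO(1)$. Every layer has at most $n$ vertices, so consecutive layers are joined by at most $n^2$ edges. That gives $\OO(b\cdot n^2)$ edges overall. Because $G'$ is a DAG, dynamic programming layer by layer, or Dijkstra, computes the shortest path in $\OO(b\,n^2)$ time, or in $\OO(b\,n^2+bn\log(bn))$ with Dijkstra. Both are within the claimed $\OO(b^2\cdot n^2)$; the extra factor of $b$ also absorbs building the layers by intersecting vertex sets, which costs $\OO(b\,n)$.

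The main obstacle I expect is the converse direction. An arbitrary schedule might let an agent wander, wait, or carry the package along a walk that is not a shortest path. I would handle this by directly lower-bounding every carrying segment of $a_i$ by $d_{a_i}$ between its pickup and drop-off vertices. Summing these bounds over the fixed order gives the claim.
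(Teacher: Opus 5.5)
Your proposal is correct and follows the paper's proof. It uses the same layered graph, with one layer of intersection vertices per handover and edge weights $d_{a_i}$, and the same two-sided argument that shortest paths and optimal schedules for the fixed order coincide. The only difference is the running-time count: you use that $G'$ is a DAG to get $O(b\,n^2)$, which is tighter than the paper's $O(b^2 n^2)$ from list-based Dijkstra on the $O(bn)$ vertices of $G'$.
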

\noindent

\begin{proof}
    We construct a directed weighted delivery graph $G'=(V',E')$, $w:E'\rightarrow \RR_{\geq 0}$. The vertex set $V'$ consists of $b+1$ layers. 
    In layer $0$ we have a vertex $v_{a_1}$, in layer $b+1$ we have a vertex $v_t$ and in layer $i\in [b]$ we have all intersection points between agent $a_i$ and $a_{i+1}$, i.e., $V_{a_i} \cap V_{a_{i+1}}$.
    In the following, we refer to these vertices as $V_{i}=\{v_1^i,\dots,v_{m_i}^i\}$. 
    We connect $v_{a_1}$ with every vertex $v_j^1$ for $j\in [m_1]$ and set its weight as $d_{a_1}(s,v_j^1)$.
    This connection represents the case that agent $a_1$ transports the package from $s$ to agent $a_2$ via some intersection point $v_j^1$.
    Similarly, we connect a vertex $v_{j}^i$ , $j\in [m_i]$ in layer $i$ with a vertex $v_{j'}^{i+1}$, $j'\in [m_{i+1}]$, if there exists a path in $G_{a_i}$ that connects $v_{j}^i$ with $v_{j'}^{i+1}$.
    For each such edge we again set its weight to the time it takes to get from the first intersection point to the next using agent $a_i$, i.e., we set the weight of the edge as $d_{a_i}(v_{j}^i, v_{j'}^{i+1})$.
    At last, we set the connections from the second last layer $b$ to $v_t$ in the same manner, i.e., we set its weight to $d_{a_b}(v_j^b,t)$, where $j\in [m_b]$.

    We want to argue, that the shortest path in this graph has to correspond to an optimal delivery tour of the agents in the specified order. 
    Let $P_{\opt}\subseteq E$ be the optimal tour and
    $(p_1^{\opt}=s,p_2^{\opt},\dots,p_{b}^{\opt},p_{b+1}^{\opt}=t)$
    be the set of vertices at which an exchange of the package happens, including $s$ and $t$.
    We can see that for each section $p_i^{\opt},p_{i+1}^{\opt}$, 
    the delivery happens using agent $a_i$ to agent $a_{i+1}$.
    Since the delivery points are specified by $p_i^{\opt},p_{i+1}^{\opt}$, we can upper bound the optimal travel time of $a_i$ for this section by $d_{a_i}(p_i^{\opt},p_{i+1}^{\opt})$.
    At last, for the final section $p_b^{\opt},p_{b+1}^{\opt}=t$, the delivery time of agent $a_b$ is upper bounded by $d_{a_b}(p_b^{\opt}, p_{b+1}^{\opt}=t)$.
    
    We can see that the optimal path $(p_1^{\opt}=s,p_2^{\opt},\dots,p_{b}^{\opt},p_{b+1}^{\opt}=t)$ is always a feasible shortest $s-t$ path solution in graph $G'$, and since its cost is at most the cost of the optimal solution we know that the cost of the shortest path tour will be at most the shortest delivery time in the original instance when using the agents in the specified order.
    On the other hand, every feasible path in our graph corresponds to a unique tour using the same set of agents in the same order and exchanging the package at vertices $p_1=s,p_2,\dots,p_b,p_{b+1}=t$.
    If one such tour would have delivery time strictly less than $\opt$, this would contradict its optimality assumption, since the vertices on the shortest path also define a feasible drone delivery routing that delivers the package from $s$ to $t$ using the same set of agents in the given order.
    Therefore it follows, that the shortest path gives us the optimal delivery tour. 

    Our graph $G'$ consists of at most $(n - 1) \cdot b + 2$ vertices and at most $2\cdot n + (b - 2) \cdot n ^2$ edges.
    Using Dijkstra's algorithm with lists, which has runtime $\OO(n ^2)$, we get a final runtime of $\OO(b^2 \cdot n ^2)$.
\end{proof}

One could use this property to calculate the optimal schedule for general graphs in time $\OO(k\cdot k!\cdot k^2\cdot n ^2)$.
It also allows us to solve DDT-SP efficiently, if the underlying simple graph of the intersection graph is a tree, in time $O(k^2\cdot n^2)$. We conclude by proving \cref{thmgraphtree}.

\thmgraphtree*

\begin{proof}
Any delivery schedule corresponds to a path in the intersection graph.
If this graph is a tree, there always exists a unique simple path between any two agents in $G_I$. 
Given the first and last agents used in an optimal schedule, the order of all participating drones can be determined by this path, as each agent appears at most once. Using this order and \cref{lemma:dijkstra}, we can efficiently compute the delivery schedule. 
To identify the first and last agents, we consider all valid combinations of agents that can pick up the package at $s$ and drop it off at $t$. Since each vertex in the graph belongs to at most two agents -- otherwise, a clique of size at least three would be present in $G_I$, contradicting the tree property -- there are at most four such combinations. Simply enumerating all four candidate pairs and evaluating each leads to an overall runtime of $\OO(k^2 \cdot n^2)$.
\end{proof} 
\section{Concluding Remarks and Future Work}
We first proved that the Drone Delivery Problem with selectable starting positions (DDT-SP) is $a(n)$--APX-hard even on path graphs, thereby resolving an open question posed by Erlebach et al.~and Bartlmae et al.~\cite{erlebach22,bartlmae25}. On the algorithmic side, we showed that DDT-SP on a path is fixed-parameter tractable when parameterized solely by the treewidth~$w$ of the agents' intersection graph. Employing more sophisticated techniques and parameterizing by both $w$ and the maximum degree~$\Delta$ of the intersection graph, we further obtained an FPT algorithm for general graphs. When the intersection graph is a tree, a layered-graph formulation yields a polynomial-time exact algorithm.

We believe that the introduction of the intersection graph and its properties is therefore of great value when analyzing the complexity of DDT.

The question of whether DDT on a path is strongly NP-hard remains open for both DDT-FP and DDT-SP. For DDT-SP on a path the complexity for any fixed number of speeds greater than two also remains unresolved. Future work could tighten our parameter dependencies, or establish stronger lower bounds such as $\mathrm{W}$-hardness. Another promising direction is to analyze DDT under smoothed and semi-random input models, where small random perturbations of worst-case instances provide a more realistic measure of expected computational complexity and can guide the development of practical algorithms.

\bibliography{refs}

\newpage
\appendix
\section{Construction and Proof of Theorem 1}
\label{arxivappendixA}
\begin{proof}
We now formally construct the instance $I'$, starting with the agents and then recursively defining their movement intervals:

\begin{itemize}
    \item \textbf{Element agents:}
    \begin{itemize}
        \item $\forall i \in \{1, \dots, n\}, \forall j \in \{1, \dots, k\}: \quad e_{i,j} = (v_j, [c_{i,j,1}, o'_{i,k-1}])$
    \end{itemize}

    \item \textbf{Partition gadgets} $P_j$ for $j \in \{1, \dots, k\}$:
    \begin{itemize}
        \item $P - \frac{P}{k}$ helper agents, each with speed $v = 1$ and movement interval $[c_{1,j,1}, c'_{n,j,p_n}]$
        \item For each $i \in \{1, \dots, n\}$, the critical position $C_{i,j}$:
        \begin{itemize}
            \item $p_i - 1$ partition agents: for each $l \in \{1, \dots, p_i - 1\}: \quad (v^*, [c'_{i,j,l}, c_{i,j,l+1}])$
        \end{itemize}
        \item $n - 1$ P-transition agents: for each $i \in \{1, \dots, n-1\}: \quad (v^*, [c'_{i,j,p_i}, c_{i+1,j,1}])$
    \end{itemize}

    \item \textbf{Object gadgets $O_i$} for $i \in \{1, \dots, n\}$:
    \begin{itemize}
        \item $k - 2$ O-transition agents: for each $j \in \{1, \dots, k - 2\}: \quad (v^*, [o'_{i,j}, o_{i,j+1}])$
    \end{itemize}

    \item \textbf{G-transition agents:}
    \begin{itemize}
        \item Between partition gadgets: for each $j \in \{1, \dots, k - 1\}: \quad (v^*, [c'_{n,j,p_n}, c_{1,j+1,p_1}])$
        \item Between $P_k$ and $O_n$: \quad $(v^*, [c'_{n,k,p_n}, o_{n,1}])$
        \item Between object gadgets: for each $i \in \{n - 1, \dots, 1\}: \quad (v^*, [o'_{i,k-1}, o_{i-1,1}])$
    \end{itemize}
\end{itemize}
We now recursively define the interval boundaries:
\begin{align*}
    \forall i \in [n], \forall j \in [k], \forall l \in [p_i]: \quad
    c_{i, j, l} &= \begin{cases} 
      0 & \text{if } i = 1, j = 1, l = 1 \\
      c_{i-1, j, p_{i-1}}' + v^* & \text{if } i \neq 1, l = 1 \\
      c_{n, j-1, p_n}' + v^* & \text{if } i = 1, l \neq 1 \\
      c_{i, j, l-1}' + v_j & \text{otherwise}
   \end{cases}\\
   \text{and} \quad c_{i, j, l}' &= c_{i, j, l} + v_j'
\end{align*}
\begin{align*}
    \forall i \in \{1, \dots, n\}, \forall j \in \{1, \dots, k\}: \quad
    o_{i, j} &= \begin{cases} 
      c_{n, k, p_n} + v^* & \text{if } i = n, j = 1 \\
      o_{i+1, k-1}' + v^* & \text{if } i \neq n, j = 1 \\
      o_{i, j-1}' + v^* & \text{otherwise}
   \end{cases}\\
   \text{and} \quad o_{i, j}' &= o_{i, j} + 1
\end{align*}

Finally, we define the speeds as follows:
\begin{align*}
    \forall j \in [k]: \quad
    v_j' &= \begin{cases} 
      P^3 \cdot a(n)+1 & \text{if } j = 1 \\
      v_{j-1} \cdot a(n) \cdot P^3 + 1 & \text{otherwise}
   \end{cases}\\
   v_j &= v_j' \cdot a(n) \cdot P^3 + 1 \\
   v^* &= v_k \cdot P^3 \cdot a(n) + 1
\end{align*}

We set $d := P^3$, and now aim to show that $I$ is a “yes”-instance of \textsc{PartitionInto}-$k$ if and only if $c_\mathcal{A}(I') \leq P^3 \cdot a(n)$.
We first show that any schedule $S$ for $I'$ with $c(S) \leq P^3 \cdot a(n)$ cannot cause a C-error, P-error, E-error, G-error, or O-error.

\begin{enumerate}
    \item Suppose $S$ causes a C-error, i.e., a helper agent with speed $v_j'$ traverses a segment of length $v_j$, requiring time $\frac{v_j}{v_j'} > P^3 \cdot a(n)$. Thus, a helper agent can cover at most one gap in a critical position.

    \item Suppose $S$ causes a P-error, i.e., an element agent $e_{i,j}$ with speed $v_j$ traverses a segment of length $v^*$, requiring time $\frac{v^*}{v_j} \geq \frac{v^*}{v_k} > P^3 \cdot a(n)$. Therefore, an element agent can only cover one critical position.

    \item Suppose $S$ causes an E-error, i.e., an element agent $e_{i,j}$ traverses a segment of length at least $v_{j'}'$ for some $j' > j$, requiring time $\frac{v_{j'}'}{v_j} > P^3 \cdot a(n)$. Thus, $e_{i,j}$ cannot assist at any partition gadget $P_{j'}$ with $j' > j$.

    \item Suppose $S$ causes a G-error, i.e., an element agent $e_{i,j}$ with speed $v_j$ traverses a segment of length $v^*$, requiring time $\frac{v^*}{v_j} \geq \frac{v^*}{v_k} > P^3 \cdot a(n)$. Hence, each element agent can be used in at most one gadget.

    \item Suppose $S$ causes an O-error, i.e., an element agent $e_{i,j}$ with speed $v_j$ traverses a segment of length $v^*$, requiring time $\frac{v^*}{v_j} \geq \frac{v^*}{v_k} > P^3 \cdot a(n)$. Thus, in any object gadget $O_i$, each element agent can cover only one gap, and $k - 1$ element agents must be present.
\end{enumerate}
We can now show that $I$ is a “yes”-instance of \textsc{PartitionInto}-$k$ if and only if $c_\mathcal{A}(I') \leq P^3 \cdot a(n)$ holds.

\begin{itemize}
    \item[$\implies$:] If $I$ is a “yes”-instance of \textsc{PartitionInto}-$k$, let $I_1, \dots, I_k$ be a valid partition. We now assign the element agents as follows: $e_{i,j}$ is assigned to the critical position $C_{i,j}$ if $p_i \in I_j$, and to the object gadget $O_i$ otherwise. Since each object $p_i$ appears in exactly one partition set, exactly $k-1$ of the agents $e_{i,1}, \dots, e_{i,k}$ are located at $O_i$, and one at a partition gadget. For each partition gadget $P_j$, the sum of the $p_i$ values of the element agents $e_{i,j}$ at $C_{i,j}$ equals exactly $\frac{P}{k}$. Hence, the $P - \frac{P}{k}$ helper agents can be distributed so that each critical position $C_{i,j}$ is covered by $p_i$ helper agents if $e_{i,j}$ is not present, ensuring that all critical positions are fully covered by either a corresponding element agent or by sufficient helper agents.

A critical section $C_{i,j}$ consists of $p_i$ gaps and the $p_i - 1$ partition agents placed between them. 
An element agent requires at most 
$\frac{p_i \cdot v_j' + (p_i - 1) \cdot v_j}{v_j} < 2p_i$ 
time to traverse it. If the section is instead traversed by helper agents together with the partition agents, 
the required time is 
$\frac{(p_i - 1) \cdot v_j}{v^*} + \frac{p_i \cdot v_j'}{v_j'} < 2p_i$. 
Each of the $n - 1$ P-transition agents requires time $\frac{v^*}{v^*} = 1$, 
so the entire partition gadget $P_j$ can be traversed in time 
$n - 1 + \sum_{i=1}^n 2p_i = n - 1 + 2P$.

We now turn to the object gadgets:
Each object gadget $O_i$ consists of $k - 1$ gaps and $k - 2$ O-transition agents between them. 
Since exactly $k - 1$ element agents are present at $O_i$, each can be assigned to one of the gaps. 
As the gaps are of length $1$, any element agent $e_{i,j}$ can traverse one in time 
$\frac{1}{v_j} < 1$. Thus, the total traversal time for an object gadget $O_i$ is 
$k - 1 + (k - 2) \cdot \frac{v^*}{v^*} < 2k$.

Moreover, there are a total of $n + k - 1$ G-transition agents between the gadgets, each with speed $v^*$ and covering a distance of $v^*$, requiring a total of $n + k - 1$ units of time.

Since we have now covered all gadgets and the regions in between, we obtain a schedule for the entire interval from $s$ to $t$ that takes 
$(n - 1 + 2P) \cdot k + 2n \cdot k + n + k - 1 \leq 3P \cdot P + 2P^2 + 2P \leq P^3 = d$ 
time (for sufficiently large instances). 
Thus, we have $OPT_{I'} \leq P^3$, and therefore $c_\mathcal{A}(I') \leq a(n) \cdot P^3$ since $\mathcal{A}$ is an $a(n)$-approximation algorithm. 
This completes the forward direction of the proof.

\item[$\impliedby$:] Consider a schedule $S$ returned by algorithm $\mathcal{A}$ with $c(S) \leq P^3$. 
As we have previously shown, this implies that $S$ does not incur any C-errors, O-errors, P-errors, E-errors, or G-errors.

Since there are no P- or G-errors, each element agent can be assigned to at most one gadget (and thus to exactly one critical position). 
Because there are no C-errors, each critical position $C_{i,j}$ must be covered either by the corresponding element agent $e_{i,j}$ or by exactly $p_i$ helper agents. 
Moreover, the absence of E-errors implies that an element agent $e_{i,j}$ cannot contribute to a partition gadget $P_{j'}$ with $j' \neq j$.

We have also previously established that, using a (ring) exchange argument, we can transform the schedule such that each element agent $e_{i,j}$ is placed either at $C_{i,j}$ or at the corresponding object gadget $O_i$. 
Finally, since no O-errors occur, each object gadget $O_i$ contains at least $k - 1$ element agents.

If an agent $e_{i,j}$ is located at its corresponding critical position, we assign the object $p_i$ to the partition set $I_j$. 
Since only one of the agents $e_{i,1}, \dots, e_{i,k}$ can be at a critical position -- because the remaining $k-1$ agents must be assigned to the object gadget $O_i$ -- each object is assigned to exactly one partition set.

It remains to show that $\sum_{i \in I_1} p_i = \dots = \sum_{i \in I_k} p_i = \frac{P}{k}$. 
Assume for contradiction that for some partition set $I_j$, we have $\sum_{i \in I_j} p_i < \frac{P}{k}$. 
Then, all agents $e_{i,j}$ with $i \notin I_j$ are not present at their critical positions, meaning that a total of $\sum_{i \notin I_j} p_i$ helper agents are required to cover those critical positions.

However, since $\sum_{i \notin I_j} p_i > P - \frac{P}{k}$, the available $P - \frac{P}{k}$ helper agents are insufficient, and at least one critical position cannot be fully covered -- leading to a C-error. 
This contradicts our earlier result that no C-errors occur, completing the proof of the reverse direction.
\end{itemize}
It is crucial for this proof that \textsc{PartitionInto}-$k$ is strongly NP-hard, as the reduction is only polynomial under this assumption. 
The number of agents in our construction depends directly on the sizes of the objects in the instance $I$, 
so a reduction from \textsc{Partition} would not be feasible -- since in that problem, the object sizes may be superpolynomial.
\end{proof}

\begin{proof}
    We modify the reduction from the proof of Theorem~\ref{thmlinehard} only slightly to show that \textsc{PartitionInto}-$k$ $\leq_p$ DDT on a line with fixed positions using a technique similar to one that Bartlmae et al.~\cite{bartlmae25} also employed.

Let $I$ be an instance of \textsc{PartitionInto}-$k$, and let $I'$ be the corresponding DDT-SP instance constructed in Theorem~\ref{thmlinehard}, where we set $a(n) = 1$. To transform $I'$ into a DDT-FP instance, we assign arbitrary fixed initial positions to all drones and insert a very long interval at the beginning of the instance, along with a single agent. This introduces an initial delay that is long enough to allow every drone to reach any point in its assigned interval. This yields a new instance $I''$ of DDT-FP. Let $d'$ denote the duration of this delay. Once the initial delay agent has delivered the package, the remainder of $I''$ is equivalent to $I'$, as every drone could have reached any position within its range by that time. Hence, $OPT_{I''} = OPT_{I'} + d'$.

Since we already know that $I$ is a "yes"-instance of \textsc{PartitionInto}-$k$ if and only if $OPT_{I'} \leq d$, it follows that $I$ is a "yes"-instance if and only if $OPT_{I''} \leq d + d'$. As the construction involves no super-polynomial distances or speeds, and \textsc{PartitionInto}-$k$ is strongly NP-hard, the proof is complete.
\end{proof}
\section{Fixed-parameter tractability on general graphs}

\subsection{Tree decomposition algorithm for unique intersections}
For the following proofs we will assume that the agent $a_s',a_t'$ as well as the connections $\{a_s,a_a'\},\{a_t,a_t'\}$ are contained in $G_I$ and are always used in the optimal solution, but will never be introduced in the algorithm.

\lemmaOpt*
\begin{proof}
    Let $\opt = (a_1^{\opt}=a_s',a_2^{\opt}=a_s,\dots,a_{b-1}^{\opt}=a_t, a_b^{\opt}=a_t')$ be the globally optimal sequence of agents to deliver the package.
    This corresponds to a path $F^{\opt} = \{e_1^{\opt}=\{a_1^{\opt},a_2^{\opt}\}, e_2^{\opt}=\{a_2^{\opt}, a_3^{\opt}\},\dots,e_{b-1}^{\opt}=\{a_{b-1}^{\opt}, a_b^{\opt}\}\}\subseteq E(G_I)$.
    Let $(t,B_t^0,B_t^1,B_t^2,M, \hat{A},\hat{Z})$ be the instance induced by $\opt$ and node $t$.
    We define $F_1^{\opt}:=F^{\opt}\cap E_t$ the used edges in $E_t$ and $F_2^{\opt}:=F^{\opt}\setminus F_1^{\opt}$ as the remaining edges.
    We can partition $F^{\opt}$ into disjoint sequences of edges, which alternate with edges from $F_1^{\opt}$ and $F_2^{\opt}$ such that every edge from $F^{\opt}$ is used. %
    
    If $F_1^{\opt}=\emptyset$, then global optimal tour does not contain any edges from $G_I[t]$, so $B_t^0=X_t$ and the only feasible solution is the empty set.
    Thus the dynamic programming table will only contain an empty solution in this case which is correct. 
    
    Therefore, assume $\opt$ contains at least one edge from $E_t$. 
    We partition $F^{\opt}$ into alternating sequences of paths:
    $F^{\opt} = \{F_0^2,F_1^1,F_1^2,F_2^1,F_2^2,\dots, F_{z}^1,F_{z}^2\}$ for $z\geq 1$, where $F_i^1\subseteq F_1^{\opt}$ and $F_i^2\subseteq F_2^{\opt}$ holds for every possible $i$.
    We know that each set is nonempty, since the edges $\{a_s',a_s\}$ as well as $\{a_t,a_t'\}$ are never contained in $G_I[t]$ but in $F^{\opt}$, so $\{a_s',a_s\}\in F_0^2$ and $\{a_t,a_t'\}\in F_z^2$.
    Additionally, only given $F_i^j$, the direction can also always be reconstructed from the matching $M$.
    
    If the edges are ordered by the sequence of how they appear in $\opt$, the first agent in the first edge of every such subsequence derived from $F_i^j$ for $i\in \{1,\dots,z\}$ must correspond to some agent in $B_t^1$, same for the last agent in the last edge of every subsequence resulting from $F_i^j$ for $i\in \{0,\dots,z-1\}$.
    For the very first and very last we do not have this condition since these correspond to $a_s',a_t'$ which always belong to $A\setminus V_t$.

    Let $F_i^j=\{\{a_1^{i,j},a_2^{i,j}\},\{a_2^{i,j},a_3^{i,j}\},\dots \{a_{b_{i,j}-1},a_{b_{i,j}}^{i,j}\}\}\}$ be the corresponding ordered set of edges for every subpath.
    By definition we know that for every $F_i^j$ the agents $a_1^{i,j}$ as well as $a_{b_{i,j}}^{i,j}$ belong to $B_t^1$, except for the very first agent in $F_0^2$ and the very last in $F_z^2$.
    This is because the global optimal sequence alternates between edges contained in $E_t$ and $E(G_I)\setminus E_t$.
    Therefore, the endpoints of each intermediate sequences will be incident to one edge from $E_t$ and one from $E(G_I)\setminus E_t$, thus having degree one.

    \begin{figure}
        \centering
        \includegraphics[width=0.8\linewidth]{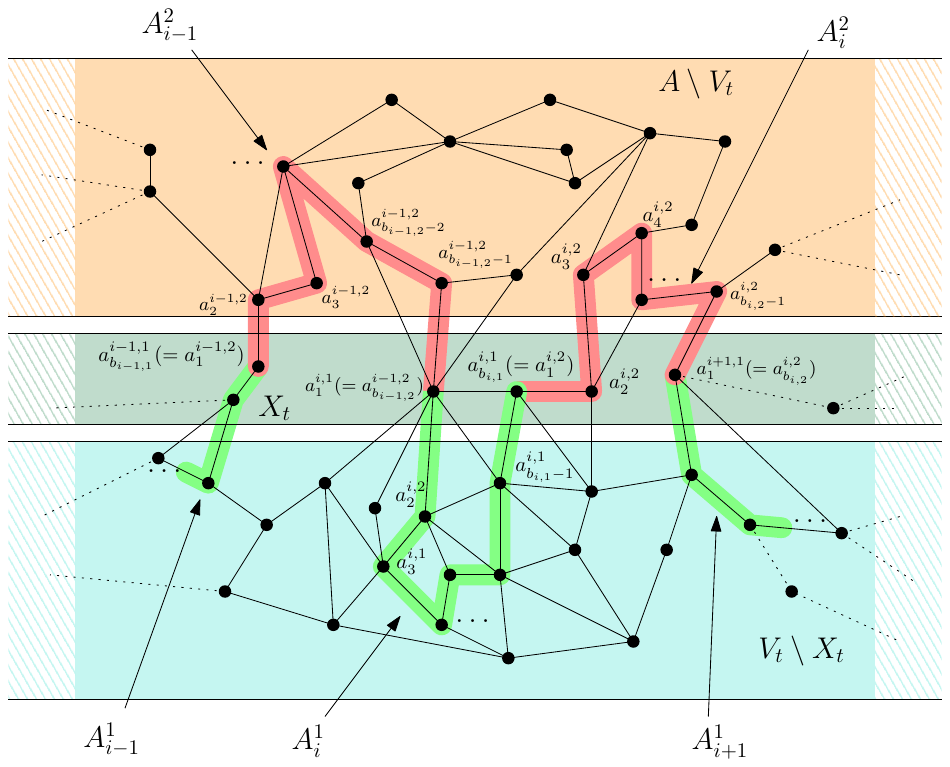}
        \caption{Exemplary partial tour sequences as well as corresponding labeling of agents.}
        \label{fig:partial_paths_sequence}
    \end{figure}

    We represent each set $F_i^1\subseteq E_t$ as a sequence of agents $A_i^1=(a_1^{i,1},\dots,a_{b_{i,1}}^{i,1})$, and each set $F_i^2\subseteq E(G_I)\setminus E_t$ as a sequence of agents $A_i^2=(a_2^{i,2}, \dots, a_{b_{i,2}-1}^{i,2})$ The sequences $A_i^2$ start with the second agent and end with the second last agent to make the set of overall sequences disjoint from each other.
    This allows us to express the full optimal agent sequence as a concatenation of these sequences, i.e., $\opt = (A_0^2,A_1^1,A_1^2,A_2^1,\dots,A_z^1,A_z^2)$. See \cref{fig:partial_paths_sequence} as an example how an intermediate sequence for some instance $G_I[t]$ using this labeling might look like.
    For the initial and final subsequences $A_0^2$ and $A_k^2$ we write $(a_1^{0,2}=a_s',\dots, a_{b_{0,2}-1}^{0,2})$ and $(a_2^{k,2},\dots, a_{b_{k,2}}^{k,2}=a_t')$, since $a_s'$ and $a_t'$ always have degree 1 and the respective edges are not contained in $E_t$.  
    Note that $A_i^2=()$ is possible, if the corresponding forest $F_i^j$ contains only a single edge.

    Since $\opt$ induced the instance $(t,B_t^0,B_t^1,B_t^2,M,\hat{A},\hat{Z})$, we can deduce that each agent $u\in B_t^0$ is either not contained in $\opt$ or is contained in some sequence $A_i^2$ because the two edges with both of its neighboring agents in the sequence are not contained in $E_t$.
    Each agent in $B_t^1$ can not correspond to the starting agent $a_s'$ or the final agent $a_t'$, so it is an intermediate agent with some successor agent and a previous agent in the optimal sequence such that both of these agents are not in the same sequence.
    For every agent in $B_t^2$, both its connections to its preceding and subsequent agent are contained in $E_t$.
    Furthermore, for every $A_i^1$ we have $(a_1^{i,j},a_{b_{i,j}}^{i,j})\in M$ since by definition both endpoints are contained in $B_t^1$ and correspond to the endpoints of some path of edges contained in $E_t$.
    At last, for every pair $A_i^1,A_i^2$ we have $(a_{b_{i,1}}^{i,1}, a_2^{i,2})\in \hat{A}$ since these give the connections from an agent in $B_t^1$ to another agent by an edge not contained in $E_t$.
    Similarly, for every pair $A_i^2,A_{i+1}^1$ we have $(a_1^{i+1,1}, a_{b_{i,2}-1}^{i,2})\in \hat{A}$.
    Additionally, we have for every sequence $A_i^1$ that $\hat{Z}(a_1^{i,1})=a_2^{i,1}$ and $\hat{Z}(a_{b_{i,1}}^{i,1})=a_{b_{i,1}-1}^{i,1}$. 

    Therefore, we can write the cost of $\opt$ as the sum of the individual delivery costs of any agent incurred in their respective sequence over all sequences of agents $(A_0^2,A_1^1,A_1^2,A_2^1,A_2^2,$
    $\dots, A_z^1,A_z^2)$.
    By extending each sequence $A_i^1$ with its respective predecessor and successor agents given by $\hat{A}$, we obtain the extended sequence $\hat{A}(A_i^1)=(\hat{a}_0^{i,1},\hat{a}_1^{i,1}=a_1^{i,1},\dots,\hat{a}_{b_{i,j}}^{i,1}=a_{b_{i,1}}^{i,1}, \hat{a}_{b_{i,1}+1}^{i,1})$, and its cost is defined as $c(\hat{A}(A_i^1))$. 
    Similarly, each sequence $A_i^2$ can be extended using the corresponding psp agents in $\hat{A}$, unless $A_i^2=()$, which occurs if the original forest $F_i^2$ consists of a single edge, in which case the corresponding cost will automatically be accounted for by the sequences $A_i^1$ and $A_{i+1}^1$.

    Hence, the total cost of $\opt$ can be expressed as the sum of costs of all extended sequences, i.e., 
    \begin{align*}
        c(\opt) &= \sum_{i=2}^{\vert \opt\vert-1=b-1}d_{a_i^{\opt}}(D(a_{i-1}^{\opt},a_i^{\opt}),D(a_i^{\opt},a_{i+1}^{\opt})) \\
        &= \sum_{t=2}^{\vert \hat{A}(A_0^2) \vert-1}d_{\hat{a}_t^{i,j}}(D(\hat{a}_{t-1}^{i,j}, \hat{a}_t^{i,j}),D(\hat{a}_t^{i,j},\hat{a}_{t+1}^{i,j}))\\
        &+ \sum_{i=1}^z\sum_{j=1}^2\sum_{t=2}^{\vert \hat{A}(A_i^j) \vert-1}d_{\hat{a}_t^{i,j}}(D(\hat{a}_{t-1}^{i,j}, \hat{a}_t^{i,j}),D(\hat{a}_t^{i,j},\hat{a}_{t+1}^{i,j}))\\
        &= c(\hat{A}(A_0^2)) + \left( \sum_{i=1}^k c(\hat{A}(A_i^1)) + c(\hat{A}(A_i^2)) \right).
    \end{align*}

    Let $F^{D} = \{^{D}F_1^1,^{D}F_2^1,\dots,^{D}F_k^1\}$ denote the set of partial forests output by the dynamic program, and let $^DA^1=\{^{D}A_1^{1},\dots,^{D}A_z^1\}$ be the corresponding ordered sequences of agents from these forest paths, where $^{D}A_i^1=\{^{D}a_1^{i,1},\dots,^{D}a_{l_{i,j}}^{i,j}\}$. 
    By construction, the matching $M$ defines the required endpoints of each path as well as the direction of the sequence, so we can assume that these sequences are aligned with each $A_i^1$ such that their respective starting agent and final agent is identical, i.e., we have $^{D}a_1^{i,1}=a_1^{i,1}$ and $^Da_{l_{i,j}}^{i,j}=a_{b_{i,j}}^{i,j}$ for every $i\in [z]$.   
    
    Assume that the dynamic program returns a solution such that the cost of the extension sequences $\hat{A}(^DA^1)=\{\hat{A}(^DA_1^1),\dots,\hat{A}(^DA_k^1)\}$ is strictly smaller than that of $c(\hat{A}(A^1))$.
    We define the new agent sequence $^DA:=(A_0^2,^DA_1^1,A_1^2,^DA_2^1,\dots, ^DA_k^1,A_k^2)$ resulting from the set of edges $S^D := F_2^{\opt} \cup F^D$.

    We assume $z\geq 1$, otherwise the empty solution will always be optimal and correct.
    We can show by induction that for every $i\in \{1,...,z\}$ there exists a feasible tour given by the order of the agents to deliver the package from $a_s'$ to the intersection vertex $D(a_{b_{i,1}}^{i,1},a_2^{i,2})$.
    For $i=1$, the delivery starts with sequence $A_0^2$ and transports the package to the intersection point $D(a_{b_{0,2}-1}^{0,2}, a_1^{1,1})$. 
    From here the original sequence $A_1^1$ would deliver the package from this point via agents $a_1^{1,1},\dots,a_{b_{1,1}}^{1,1}$ vertex $D(a_{b_{0,2}-1}^{0,2}, a_1^{1,1})$ to agent $a_{b_{1,1}}^{1,1}$ in vertex $D(a_{b_{1,1}}^{1,1}, a_1^{1,2})$.
    Instead of using the delivery route given by $A_1^1$ we use $^DA_1^1$ given by the forest $^DF_1^1$ from the dynamic program.
    Since $^DF_1^1\subseteq E_t$, the route given by $^DA_1^1$ must be a feasible traversal of the agents, i.e., does not use any agents from $A\setminus V_t$ which might already be in use, and every agent is only used at most once in $V_t$.
    Because $\{a_1^{1,1},a_{b_{1,1}}^{1,1}\}\in M$, the sequence $^DA_1^1$ must correspond to a delivery tour for transporting the package from the starting agent $a_1^{1,1}$ to the final agent $a_{b_{1,1}}^{1,1}$. 
    Since $\{a_{b_{1,1}}^{1,1},a_1^{1,2}\}\in \hat{A}$, the cost of this traversal corresponds to the cost of delivering the package to the unique intersection point $D(a_{b_{1,1}}^{1,1},a_1^{1,2})$.
    
    Now the induction step follows in an identical fashion as the previous base case, as one can first expand the already assumed traversal tour by the sequences given by $A_i^2$ and then use  $^DA_{i+1}^1$ by the same reasons as before.
    
    At last, we can see that the cost of this sequence has to be strictly smaller than $c(\opt)$, since we have
    \begin{align*}
        c(\opt) &= c(\hat{A}(A_0^2)) + \left( \sum_{i=1}^k c(\hat{A}(A_i^1)) + c(\hat{A}(A_i^2)) \right)\\
        &< c(\hat{A}(A_0^2)) + \left( \sum_{i=1}^k c(\hat{A}(^DA_i^1)) + c(\hat{A}(A_i^2)) \right) = c({^DA}).
    \end{align*}    
\end{proof}
\lemmaIntro*
\begin{proof}
    Let $F\subseteq E_t$ be any feasible forest for the current instance $(t,B_t^0,B_t^1,B_t^2,M,\hat{A},\hat{Z})$.
    Assume $S=(A_1,A_2\dots,A_b)$ where $A_i=(a_1^i,\dots, a_{l_i}^i)$ corresponds to the order given by the simple paths of agents in $S$. 
    Since in $G_I[t]$ no edge can be incident to $v$, the degree of $v$ in $F$ must be zero, and agent $v$ can not appear in any of the sequences $A_i$. 
    If $v\notin B_t^0$, no feasible solution can exist and we have $D[t,B_t^0,B_t^1,B_t^2,M,\hat{A},\hat{Z}]=\{\}$.
    Otherwise, if $v\in B_t^0$, since agent $v$ can not appear in any sequence $A_i$, it does not contribute in the cost of $c(\hat{A}(S))$.
    Therefore, the optimal solution for the previous node $t'$ with the same matching requirement $M$, psp agent requirement $\hat{A}$ and npip agent requirement $\hat{Z}$ must also be optimal for the current instance. 
    Since all entries for the child node $t'$ are assumed to be correct, by setting $D[t,B_t^0,B_t^1,B_t^2,M,\hat{A},\hat{Z}] = D[t',B_t^0\setminus\{v\}, B_t^1,B_t^2,M,\hat{A},\hat{Z}]$ we get the optimal solution for $(t,B_t^0,B_t^1,B_t^2,M,\hat{A},\hat{Z})$.
\end{proof}
\lemmaForget*
\begin{proof}
    Let $F\subseteq E_t$ be an optimal forest for the given instance, and let $S=(A_1,\dots,A_b)$ denote the ordered agent sequences given by $F$ and $M$.
    First, assume that $v$ is not contained in the optimal solution. 
    Then $v$ has degree $0$ in $F$, and the solution $F$ is also optimal for the instance $(t',B_t^0\cup \{v\}, B_t^1,B_t^2,M,\hat{A},\hat{Z})$, since otherwise the corresponding solution would also be an improvement to the solution given by $F$.

    Now assume that $v$ is contained in the solution $F$. 
    Then $v$ must have degree $2$, so it appears as an intermediate agent in some sequence $A_i=(a_1^i,\dots a_j^i=v,\dots,a_{l_i}^i)$.
    Because $v$ has degree $2$, we must have $1<j<l_i$.
    Since $t$ is a forget node, the edge set $E_{t'}$ must contain the edges $\{a_{j-1}^i,a_j^i\},\{a_j^i,a_{j+1}^i\}$.
    Therefore, agent $v$ has degree $2$ in the child instance.
    This means that the solution must also be contained in the entry $D[t', B_t^0,B_t^1,B_t^2\cup \{v\},M,\hat{A},\hat{Z}]$.

    By considering both solutions and taking the one with smallest cost, we must obtain an optimal solution for the current instance. 
\end{proof}
\lemmaIntroE*
\begin{proof}
    Let $F\subseteq E_t$ be the optimal solution for the given instance and let $S=(A_1,\dots,A_b)$ be the corresponding agent sequences given by the paths in $F$.
    If $e\notin F$, then the solution $F$ must also be optimal for the instance $(t',B_t^0,B_t^1,B_t^2,M,\hat{A},\hat{Z})$, so we can choose the solution saved in $D[t,B_t^0,B_t^1,B_t^2,M,\hat{A},\hat{Z}]$.

    Now assume $e\in F$.
    This implies $\hat{A}(u)\neq v$ and $\hat{A}(v)\neq u$, since otherwise some agent would be visited multiple times.
    Let $F^{-e}:=F\setminus \{e\}$ be the solution without $e$.
    Assume first $u$ and $v$ have degree two in $F$, i.e., $u,v\in B_t^2$.
    This means there exists some sequence $A_i$ such that $A_i=(a_1^i,\dots, a_j^i=u,a_{j+1}^i=v,\dots, a_{l_i}^i)$ with $1<j<j+1<l_i$.
    This implies the matching $(a_1^i,a_{l_i}^i)\in M$, the npip agents $\hat{Z}(a_1^i)=a_2^i$, $\hat{A}(a_{l_i}^i)=a_{l_i-1}^i$, and we have for both endpoints $a_1^{i},a_{l_i}^i$ their respective psp agent given by $\hat{A}$.
    Removing edge $e$ corresponds to splitting this sequence in two new sequences $A_i^1,A_i^2$ with $A_i^1=(a_1^i,\dots,a_j^i=u)$ and $A_i^2=(a_{j+1}=v, \dots, a_{l_i}^i)$. 
    This gives us $b+1$ new sequences $S^{-e}=(A_1'=A_1,\dots, A_i'=A_i^1,A_{i+1}'=A_i^2,\dots, A_{b+1}'=A_b)$.
    We consider the new matching $M'$ which would correctly match the endpoints of every sequence in $S^{-e}$, i.e., $M':=(M\setminus \{(a_1^i,a_{l_i}^i)\}) \cup \{(a_1^i,u), (v,a_{l_i}^i)\}$.
    Additionally we extend the predecessor/successor matching to $\hat{A}'=\hat{A} \cup \{\{a_j^i, a_{j+1}^i\},\{a_{j+1}^i, a_j^i\}\}$
    and the matching given by $\hat{Z}$ to $\hat{Z}'=\hat{Z}\cup \{\{a_j^i,a_{j-1}^i\},\{a_{j+1}^i,a_{j+2}^i\}\}$.
    By these adaptations we can see that the solution $F^{-e}$ is a feasible solution for the instance $G_I[t']$ with adapted matching $M'$, $\hat{Z}'$ and $\hat{A}'$, i.e., it is feasible for the instance $(t',B_t^0, B_t^1\cup \{u,v\}, B_t^2\setminus \{u,v\}, M', \hat{A}',\hat{Z}')$.
    Assume $F^{-e}$ were not not optimal for this instance.
    Then there exists another solution $F_D$ with corresponding sequences of agents $S_D=(^DA_1,\dots, ^DA_{b+1})$ where $^DA_i=(^Da_1^i,\dots, ^Da_{m_i}^i)$ and $^Da_1^i={a_1^i}$ and $^Da_{m_i}^i = {a_{l_i}^i}$ such that $\sum_{i=1}^{b+1} c(\hat{A}(^DA_i)) < \sum_{i=1}^{b+1} c(\hat{A}(A_i'))$.
    Adding edge $e$ to solution $F_D$ results in a solution $F_D^{+e}:=F_D\cup \{e\}$. 
    This solution is feasible for $(t,B_t^0,B_t^1,B_t^2,M,\hat{A},\hat{Z})$, since the degrees of every vertex in $X_t\setminus \{u,v\}$ did not change and agents $u,v$ were previously required to have degree $1$ before edge $e$ was added, so their degree is $2$ in $F_D^{+e}$ as required.
    Additionally, the solution $S_D$ contains $b+1$ disjoint paths that fulfill the matching $M'$.
    Every matching except for $(a_1^i,a_{l_i}^i)$ are also contained in $M$, since these matchings were not modified for $M'$.
    The matching $(a_1^i, a_{l_i}^i)$ is fulfilled by the two sequences $^DA_{j_1},^DA_{j_2}$ where $^Da_1^{j_1}=a_1^i$ and $^Da_{m_{j_2}}^{j_2}=a_{l_i}^i$ are the two sequences which, by $M'$ need to have $^Da_{m_{j_1}}^{j_1}=u$ and $^Da_1^{j_2=v}$ as the respective last and first agent in the sequence, so together they form a valid sequence (by adding edge $e$ into the respective two forest paths) to fulfill matching $(a_1^i, a_{l_i}^i)$.
    The desired npip agent requirement $\hat{Z}$ is also fulfilled, since these are also part of $\hat{Z}'$ in any case.
        
    At last, introducing the edge $e$ into the solution does not change cost, since every agent still traverses the same path (we only merge two sequences together, for which each agent already paid the corresponding travel time).
    Therefore, we get the same cost $\sum_{i=1}^{b+1} c(\hat{A}'(^DA_i)) < \sum_{i=1}^{b+1} c(\hat{A}'(A_i')) = \sum_{i=1}^b c(\hat{A}(A_i))=c(\opt)$ which is a contradiction in $\opt$ having minimal cost for the current instance. 

    Now assume both $u$ and $v$ have degree $1$ in $F$, i.e, $u,v\in B_t^1$.
    We consider the case $(u,v)\in M$, the other case follows analogously.
    Since both are contained in $B_t^1$, the condition $e\in F$ implies that this is the only edge that is incident to $u$ and $v$ and therefore a path in $E_t$ consisting of a single edge.
    This means they define a unique ordered sequence $\hat{A}(A_i) = (\hat{A}(u), u, v, \hat{A}(v))$ with cost $c(\hat{A}(A_i)) = d_{u}(D(\hat{A}(u),u)), D(u, v)) + d_v(D(u,v),D(v, \hat{A}(v)))$. 
    Additionally, the npip agent requirement must correspond to $\hat{Z}(u)=v$ and $\hat{Z}(v)=u$ in this case, since otherwise $e$ can not be the unique path between $u$ and $v$.
    Therefore, the solution $F^{-e}$ is feasible for the instance $(t', B_t^0\cup \{u,v\}, B_t^1\setminus \{u,v\}, B_t^2, M\setminus \{(u,v)\}, \hat{A}\setminus \{(u,\hat{A}(u)), (v,\hat{A}(v))\},\hat{Z}\setminus \{(u,v),(v,u)\})$ and has cost $c(\hat{A}(S)) - c(\hat{A}(A_i))$.
    If a solution for this instance would have less cost than $F^{-e}$, then this solution together with edge $e$ would correspond to a feasible solution for $(t,B_t^0,B_t^1,B_t^2,M,\hat{A},\hat{Z})$ and has less cost since only the cost $c(\hat{A}(A_i))$ gets added to this solution.
    
    At last, assume that $u$ has degree 1 and $v$ has degree 2, the other case follows analogously.
    Similar to before, in $F$ there now must exist a sequence $A_i$ such that $A_i=(a_1^i=u,a_2^i=v,\dots,a_{l_i}^i)$.
    This implies that $M$ contains the matching $(u,a_{l_i}^i)$ and $\hat{Z}$ contains $(u,v)$ as well as $(a_{l_i}^i,a_{l_i-1}^i)$.
    Removing edge $e$ yields a solution $F^{-e}$ which now instead contains the sequence $A_i^1=(v,\dots,a_{l_i}^i)$. 
    We adapt $M'=(M\setminus \{(u,a_{l_i}^i)\}) \cup \{(v,a_{l_i}^i)\}$ and $\hat{A}'=(\hat{A}\setminus \{(u,\hat{A}(u))\})\cup \{(v, u)\}$ and $\hat{Z}'=(\hat{Z}\setminus \{(u,v)\})\cup \{(v,a_{3}^i)\}$.
    We can see that for the child instance $(t',B_t^0\cup \{u\}, (B_t^1\setminus \{u\}) \cup \{v\}, B_t^2\setminus \{v\}, M',\hat{A}', \hat{Z}')$, the solution $F^{-e}$ is feasible and has cost $c(\hat{A}(S))-d_u(D(\hat{A}(u),u), D(u, v))$, since the partial tour $A_i$ does not include agent $u$ anymore, and therefore $u$ does not contribute any cost.
    Now, if the solution $F^{-e}$ would not be optimal for this instance, we can again show that the solution which strictly improves it must also strictly improve solution $F$ by adding $e$ to it.
    Let $F_D$ be the solution with strictly less cost for this instance and let $S_D=(^DA_1,\dots, {^DA}_b)$ be the corresponding sequence of agents. 
    For any matching $(x,y)$ with $x,y\neq u,v$, we can see that there exists a corresponding tour, given by $F_D$, that traverses from agent $x$ to $y$ and the given psp agents as well as the npip agents are still identical, so these tours are still feasible and correctly define the cost from the same intersection vertex $D(\hat{A}(x),x)$ to $D(y,\hat{A}(y))$.
    At last, when adding edge $e$ to the solution $F_D$, since it defines some agent sequence $A_i'=({a_2^i}'=v,\dots,{a_{l_i}^i}'=a_{l_i}^i)$ from $D({\hat{A}}'(v)=u,v)$ to $D(a_{l_i}^i,\hat{A}(a_{l_i}^i))$, this gives us a new feasible sequence of agents $({a_1^i}'=u,{a_2^i}'=v,\dots, {a_{l_i}^i}'=a_{l_i}^i)$.
    By adding the cost $d_u(D(\hat{A}(u),u), D(u, v))$ to $c(\hat{A}'(S_i))$, this in turn corresponds to a feasible solution that first brings the package from $D(\hat{A}(u),u)$ to $D(u, v)$ using agent $u$, then to agent $\hat{Z}(u)$, then continues with the remaining agents from sequence $A_i'$ and brings it first to $\hat{Z}(a_{l_i}^i)$ and finally to $a_{l_i}^i$.
    Therefore, the solution $F_D \cup \{e\}$ is feasible for the original instance.  
    Since $c(S_D)-d_u(D(\hat{A}(u),u), D(u, v))<c(S)-d_u(D(\hat{A}(u),u), D(u, v))$, it also must correspond to a solution with strictly less cost than $\opt$, which is a contradiction.

    As for the runtime, by our case distinction we can see that in the worst case both $u$ and $v$ have degree 2.
    In this case we need to guess which tour $A_i$ gets split up, because $e$ was removed.
    There are at most $\vert B_t^2\vert \leq \tw{}+1$ many vertices of degree $2$, so at most $(\tw{}+1)/2$ many matchings that we can potentially consider for splitting up. 
    For any such choice $(a,a')\in M$, we have two choices for which agent appears first in the resulting sequence, either $u$ or $v$.
    Assume we include the matchings $(a,u)$ and $(v,a')$.
    For $u$ we guess its adjacent agent in the respective neighborhood in $G_I[t]$ such that the corresponding agent is either $a$ or some agent in $N_u[t]$ which is not contained in $B_t^1$, i.e., we choose some agent $a''\in N_u' = (N_u[t]\setminus B_t^1)\cup \{a\}$.
    We do the same analogously for agent $v$ and its neighborhood $N_v'=(N_v[t]\setminus B_t^1)\cup \{a'\}$ in this case.
    We can upper bound the size of $N_u'$ and $N_v'$ by $\dm$.
    Together, this gives a runtime of $\mathcal{O}((\tw{}+1)\cdot \dm^2)$.
\end{proof}

For \cref{lem:twjoin} we will assume that no edge $\{u,v\}$ with $u,v\in X_t$ can exist, as this would imply that this edge would have been introduced at least twice, once in the subtree rooted at $t_1$ and once in the subtree rooted at $t_2$. 

\lemmaJoin*

\begin{proof}
    Let $F\subseteq E_t$ be the optimal solution for the given instance and let $S=(A_1,\dots,A_b)$ with $A_i=(a_1,\dots,a_{b_i})$ be the corresponding ordered agent sequences given by $F$ and $M$.
    If $F\subseteq E_{t_1}$,
    then this solution will also be feasible for the instance $(t_1,B_t^0,B_t^1,B_t^2,M,\hat{A},\hat{Z})$ and since the solution in $D[t_1,B_t^0,B_t^1,B_t^2,M,\hat{A},\hat{Z}]$ is correct, its cost is upper bounded by the cost of $F$, so we compute an optimal solution.
    The same holds analogously if $F\subseteq E_{t_2}$. 

    Now assume that neither of those two cases did apply, so the optimal forest contains edges from $E_{t_1}$ and $E_{t_2}$.
    We now argue that there has to exist a solution pair, given by some $D[t_1,{B_t^0}', {B_t^1}', {B_t^2}', M', \hat{A}',\hat{Z}']$ and $D[t_2,{B_t^0}'',{B_t^1}'', {B_t^2}'',M'',\hat{A}'',\hat{Z}'']$, that, if combined, yields a feasible solution for the current instance and its cost is upper bounded by the one of the optimal solution.

    Since no edge $\{u,v\}$ with $u,v\in X_t$ can exist in $E_t$, as otherwise it would need to be introduced at least twice, any edge will always have been introduced in some (child)bag $t_x$ for $x\in \{1,2\}$, so we can subdivide each subsequence $A_i$ into disjoint sequences, depending if the corresponding sequence is resulting from edges in $E_{t_1}$ or $E_{t_2}$.
    Let $F^1=F\cap E_{t_1}$ and $F^2=F\cap E_{t_2}$ be the set of edges contained in their respective subgraphs with $F^1\cup F^2=F$ and $F^1\cap F^2=\emptyset$ and let $A^1,A^2$ be their respective induced ordered agent sequences.

    An original sequence $A_i=(a_1^i,\dots,a_{l_i}^i)$ can now be subdivided into multiple sequences $(A_1^{i,1},A_1^{i,2},A_2^{i,1},\dots,A_{b_{i,j}}^{i,2})$ such that each sequence $A_f^{i,j}=(a_1^{i,j,f},\dots,a_{l_{i,j,f}}^{i,j,f})$ for $f\in [b_{i,j}]$ is formed by the respective edges in $F^j$ that originally made up the whole sequence $A_i$.
    In this case the sequences are not disjoint, because for every adjacent sequence pair $A_f^{i,j},A_{f'}^{i',j'}$ we have $a_{l_{i,j,f}}^{i,j,f} = a_1^{i',j',f'}$.
    Therefore, we later subtract the cost of any agent of degree 1 in the induced instance, since otherwise this cost gets counted exactly twice.
    
    Assume in the following that each sequence $A_i$ is first formed by some edges in $E_{t_1}$ which make up the sequence $A_1^1$ and ends with an agent sequence formed by some edges in $E_{t_2}$.
    The arguments that some sequence starts with some set of edges in $E_{t_2}$ or ends with some edges in $E_{t_1}$ follows analogously.
    By the properties of our instance we know that this implies $(a_1^{i,1,1},a_{l_{i,2,b_{i,2}}}^{i,2,b_{i,2}})\in M$. 
    
    \noindent
    Let $(F_1^{i,j},F_2^{i,j}\dots,F_{b_{i,j}}^{i,j})$ be the disjoint paths that make up the sequences $(A_1^{i,j},A_2^{i,j},\dots,A_b^{i,j})$ for $j\in \{1,2\}$.
    We can see that, in case of node $t_1$ and edge set $E_{t_1}$, the set of disjoint paths $(F_1^{1,1},F_2^{1,1},\dots,F_{b_{1,1}}^{1,1},F_1^{2,1},\dots,F_{b_{b,1}}^{b,1})$
    makes up the set of sequences $(A_1^{1,1},A_2^{1,1},\dots,A_{b_{1,1}}^{1,1},A_1^{2,1},$
    $\dots,A_{b_{b,1}}
    ^{b,1})$.
    
    These disjoint paths are all contained in $E_{t_1}$ and make up $F^1$.
    Setting $\hat{A}'(a_1^{i,1,1})=\hat{A}(a_1^{i,1,1})$ and $\hat{Z}'(a_1^{i,1,1})=\hat{Z}'(a_1^{i,1,1})$
    for every $i\in [b]$ sets the same predecessor and successor for the original sequence $A_i$. 
    Additionally, we set for every sequence $A_m^{i,1}$ the psp agents of the first and last agent to the last or first agent in the adjacent sequence and the npip agent to the neighboring agent in the own sequence, as these connections are not part of $E_{t_1}$.
    This means we set $\hat{A}'(a_1^{i,1,m})=a_{l_{i,2,m-1}}^{i,2,m-1}$
    and $\hat{A}'(a_{l_{i,1,m}}^{i,1,m})=a_1^{i,2,m}$ and $\hat{Z}'(a_1^{i,1,m})=a_2^{i,1,m}$ and $\hat{Z}'(a_{l_{i,1,m}}^{i,1,m})= a_{l_{i,1,m}-1}^{i,1,m}$ for any possible $m\in [b_{i,1}]$.
    Additionally, we set the corresponding matching $M'(a_1^{i,1,m})=a_{l_{i,1,m}}^{i,1,m}$ for every possible $i,j,m$.
    We do the same analogously for all sequences $A_m^{i,2}$.

    We can see that the solution $F^1$ is feasible for the instance $(t_1,{B_t^0}',{B_t^1}',{B_t^2}',M',\hat{A}',\hat{Z}')$ By our construction, where ${B_t^0}'=B_t^0$, ${B_t^1}'$ is the set of degree one vertices in $X_t$ given by $F^1$ which is the corresponding first and last agent in any sequence $A_f^{i,1}$ and ${B_t^2}'$ is the set of degree two vertices in $X_t$ given by $F^1$.
    Similarly, the solution $F^2$ is feasible for the instance $(t_2,{B_t^0}'',{B_t^1}'',{B_t^2}'',M'',\hat{A}'',\hat{Z}'')$, where the respective sets and matchings are derived in an identical manner.

    Now assume that the solution given by $F^1$ is not optimal for its corresponding instance, the other case follows analogously. 
    Let $S_{D_1}^i=({^{D_1}}A_1^{i,1},{^{D_1}}A_2^{i,1},\dots,{^{D_1}}A_{b_{i,1}}^{i,1})$
    be the corresponding sequence given by a different set of disjoint paths $F_{D_1} = \{{^{D_1}}F_1^{i,1},{^{D_1}}F_2^{i,1},\dots,{^{D_1}}F_{b_{i,1}}^{i,1}\}$ that would make up the original edges in $E_{t_1}$ for sequence $A_i$.
    Let $^{D_1}A_m^{i,1}=(^{D_1}a_1^{i,1,m},\dots,{^{D_1}a}_{b_{i,1,m}}^{i,1,m})$.
    By definition of our instance $(t_1,{B_t^0}',{B_t^1}',{B_t^2}',M',\hat{A}',\hat{Z}')$, the endpoints in any sequence $^{D_1}A_m^{i,1}$ will match the one given by their respective partner sequence $A_m^{i,1}$,
    i.e., $^{D_1}a_1^{i,1,m}=a_1^{i,1,m}$ and $^{D_1}a_{b_{i,1,m}}^{i,1,m}=a_{b_{i,1,m}}^{i,1,m}$.
    Additionally, since the psp agents and npip agents given by $\hat{A}'$ and $\hat{Z}'$ match as well, we know that for any original triplet sequence $A_{f-1}^{i,2},A_f^{i,1},A_{f}^{i,2}$, the sequence $A_f^{i,1}$ will take the package starting with agent $a_1^{i,1,f}= {^{D_1}a_1^{i,1,f}}$ at vertex $D(\hat{A}'(a_1^{i,1,f}),a_1^{i,1,f})=D(\hat{A}'(^{D_1}a_1^{i,1,f}),{^{D_1}}a_1^{i,1,f})$ and delivers it to agent $a_{l_{i,1,f}}^{i,1,f}={^{D_1}}a_{b_{i,1,f}}^{i,1,f}$ at vertex $D(\hat{A}'(a_{l_{i,1,f}}^{i,1,f}),a_{l_{i,1,f}}^{i,1,f})=D(\hat{A}'(^{D_1}a_{b_{i,1,f}}^{i,1,f}),{^{D_1}}a_{b_{i,1,f}}^{i,1,f})$.
    This also holds true for the psp agent of the first agent in first sequence $A_1^{i,1}$.
    Since we are guaranteed that replacing every sequence $A_f^{i,1}$ by ${^{D_1}A}_f^{i,1}$ does still correspond to a valid sequence of agents where none is taken multiple times and the delivery of the package happens at the same intersection points, we know that the union of the solutions $A'_i=({^{D_1}A}_1^{i,1},A_1^{i,2},{^{D_1}A}_2^{i,1},\dots,A_{b_{i,2}}^{i,2})$ for every $i\in [b]$ will still deliver the package like $A_i$ from the intersection point $D(\hat{A}(a_1),a_1)$ to $D(a_{b_i},\hat{A}(a_{b_i}))$ and starts the sequence with agent $a_1$ and ends with agent $a_{b_i}$.
    Therefore, the union of the corresponding set of edges $F'_i=({^{D_1}}F_1^{i,1},F_1^{i,2},{^{D_1}}F_2^{i,1},\dots,F_{b_{i,2}}^{i,2})$ will be a feasible solution for the original instance $(t,B_t^0,B_t^1,B_t^2,M,\hat{A})$.

    Consider an arbitrary sequence $A_i$ that was divided into sequences $(A_1^{i,1},A_1^{i,2},A_2^{i,1},\dots,$
    $A_{b_{i,1}}^{i,1},A_{b_{i,2}}^{i,2})$.
    We can see that every agent $a\in A_i$ appears only twice in this sequence, if it is the first agent in some sequence $A_m^{i,2}$, since it will also be the last agent in the previous sequence $A_m^{i,1}$, i.e., we have $a=a_{l_{i,1,m}}^{i,1,m}=a_1^{i,2,m}$. 
    Analogously, each agent $a'=a_1^{i,1,m}=a_{i,2,l_{i,2,m-1}}$ will be counted twice as well.
    Both types of agents will always be contained in ${B_t^1}'$, since they are adjacent to some agent from the other sequence (except if they define the very first or last agent in the sequence $A_i$.
    
    Adding the respective costs of both sequences together would result in the cost of these agents being counted twice.
    Therefore, we need to subtract the cost of the agents which were counted twice from the final cost.
    Since the agents that are counted twice are only from ${B_t^1}'$, their respective psp agents and npip agent are defined, so their influence in the cost is always fixed and independent of the actual sequences $A_m^{i,1}$ and $A_m^{i,2}$.
    This means the cost of any sequence $c(\hat{A}(A_i))$ can be expressed as the sum of costs of both sequences minus the degree 1 agents that appear by the division into $(A_1^{i,1},A_1^{i,2},A_2^{i,1},\dots,A_{b_{i,j}}^{i,2})$ which correspond to the vertices in ${B_t^1}'\cap A_i$, i.e., 
    \[\sum_{f=1}^{b_{i,1}}c(\hat{A}'(A_f^{i,1})) + \sum_{f=1}^{b_{i,2}}c(\hat{A}''(A_f^{i,2}))  - \sum_{a\in {B_t^1}'\cap A_i}c\left(\hat{Z}'(a), a, \hat{A}'(a)\right).\]
    
    We can thus write the final cost $c(\hat{S})$ as the sum of the respective parts of edges from $E_{t_1}$ and $E_{t_2}$ minus the cost of the agents appearing twice, i.e., 
    \[c(\hat{A}(S))=\sum_{i=1}^b \sum_{f=1}^{b_{i,1}}c(\hat{A}'(A_f^{i,1})) + \sum_{i=1}^b \sum_{f=1}^{b_{i,2}}c(\hat{A}''(A_f^{i,2})) - \sum_{a\in {B_t^1}'}c\left(\hat{Z}'(a), a, \hat{A}'(a)\right).\]
    Since we assumed that $F^1$ was not optimal for its corresponding instance, we have 
    
    $\left(\sum_{f=1}^{b_{i,1}}c(\hat{A}'(A_f^{i,1}))\right) > \left(\sum_{f=1}^{b_{i,1}}c(\hat{A}'(^{D_1}A_f^{i,1}))\right)$, so we also have 
    \begin{align*}
        c(\hat{A}(S)) > &\sum_{f=1}^{b_{i,1}}c(\hat{A}'(^{D_1}A_f^{i,1})) + \sum_{i=1}^b \sum_{f=1}^`{b_{i,2}}c(\hat{A}''(A_f^{i,2}))- \sum_{a\in {B_t^1}'}c\left(\hat{Z}'(a), a, \hat{A}'(a)\right)\\ = & \;c(\hat{A}(S'))
    \end{align*}
    which is a contradiction in the optimality of $S$.
    The other cases follow analogously to the previous one.

    As for the runtime, we need to consider every possibility in which a solution for $(t,B_t^0,B_t^1,B_t^2,M,\hat{A},\hat{Z})$ could be partitioned into respective subpaths of edges.
    As a procedure to check if some instance pair $(t_1,{B_t^0}',{B_t^1}',{B_t^2}',M',\hat{A}',\hat{Z}')$, $(t_2,{B_t^0}'',{B_t^1}'', {B_t^2}'',M'',\hat{A}'',\hat{Z}'')$ makes up a feasible solution, we first check if for any agent $u\in {B_t^1}'\cap {B_t^1}''$ the corresponding psp agent in one instance is the npip agent in the other instance and the other way around, i.e., $\hat{A}'(u)=\hat{Z}''(u)$ and $\hat{Z}'(u)=\hat{A}''(u)$.
    If this is the case we know that both solutions will have the same cost assumption for these agents.
    We then create an auxiliary (multi)graph $G'=(X_t,E_1'\cup E_2')$, where $E_1'$ consists of an arbitrary directed set of edges such that the degree requirements given by the corresponding sets ${B_t^0}',{B_t^1}',{B_t^2}'$ for all vertices in $X_t$ are fulfilled and for any matching pair $(u,v)\in M'$ there exists a unique simple path in $(X_t,E_1')$ from $u$ to $v$. 
    We define the set $E_2'$ in the same manner, using the sets ${B_t^0}'',{B_t^1}'',{B_t^2}''$ and matching $M''$.
    If the resulting graph $G'$ is simple and the degree conditions given by $B_t^0,B_t^1,B_t^2$ are fulfilled, we check if for every matching pair $(u,v)\in M$ there exists a simple path in $G'$ that connects both endpoints.
    At last, we check for every vertex $u\in B_t^1$, if $u\in {B_t^1}'$ or $u\in {B_t^1}''$ and for the corresponding set if the psp and npip agent specifications given by $\hat{A}$ and $\hat{Z}$ match as well, assuming all pairwise shortest path distances between the agents and their unique intersection points are already computed.

\noindent
    Given two configurations $(t_1,{B_t^0}',{B_t^1}',{B_t^2}',M',\hat{A}',\hat{Z}')$ and $(t_2,{B_t^0}'',{B_t^1}'',{B_t^2}'',M'',\hat{A}'',\hat{Z}'')$, one can check in time $\text{poly}(\tw{})$, if the two configurations are a combination candidate for the original instance $(t,B_t^0,B_t^1,B_t^2,M,\hat{A}, \hat{Z})$ and in the same runtime evaluate the cost of the combined solution as well.
    
    There are at most $(3^{\tw{}+1})$ many possibilities for choosing configurations $({B_t^0}',{B_t^1}',{B_t^2}')$,
    at most $(\tw{}+1)^{\tw{}+1}$ many possibilities for choosing the matching $M'$ and at most $(\tw{}+1)^{\dm}$ many possibilities to choose the psp agent specification $\hat{A}'$ and npip agent $\hat{Z}'$.
    This gives us a total of $\OO((3^{\tw{}+1})\cdot (\tw{}+1)^{2\dm+\tw{}+1})$ many specifications. 
    Therefore there are $\OO((3^{2\tw{}+2})\cdot (\tw{}+1)^{4\dm+2\tw{}+2})=\OO((\tw{}+2)^{\OO(\dm{}+\tw{})})$ many combined solutions which we can consider, where each check takes polynomial time and we choose the solution with smallest cost.
\end{proof}

\end{document}